\tiny\color{gray},
\newcommand{\problemtitle}[1]{\gdef\@problemtitle{#1}}
\newcommand{\probleminput}[1]{\gdef\@probleminput{#1}}
\newcommand{\problemquestion}[1]{\gdef\@problemquestion{#1}}
  \par\addvspace{.5\baselineskip}
  \par\addvspace{.5\baselineskip}
\newtheorem{thm}{Theorem}
\newtheorem{coro}[thm]{Corollary}
\newtheorem{lem}[thm]{Lemma}
\newtheorem{definition}{Definition}
\begin{document}

\title{Quantum Encoding and Analysis on Continuous Time Stochastic Process with Financial Applications}

\author{Xi-Ning Zhuang}
\affiliation{CAS Key Laboratory of Quantum Information,  University of Science and Technology of China,  Hefei,  230026,  China}
\affiliation{Origin Quantum Computing,  Hefei,  China}
\orcid{0000-0001-5118-5066}
\author{Zhao-Yun Chen}
\affiliation{Institute of Artificial Intelligence,  Hefei Comprehensive National Science Center}
\orcid{0000-0002-5181-160x}
\author{Cheng Xue}
\affiliation{Institute of Artificial Intelligence,  Hefei Comprehensive National Science Center}
\orcid{0000-0003-2207-9998}
\author{Yu-Chun Wu}
\email{wuyuchun@ustc.edu.cn}
\affiliation{CAS Key Laboratory of Quantum Information,  University of Science and Technology of China,  Hefei,  230026,  China}
\affiliation{CAS Center for Excellence and Synergistic Innovation Center in Quantum Information and Quantum Physics,  University of Science and Technology of China,  Hefei, 230026,  China}
\affiliation{Hefei National Laboratory, University of Science and Technology of China, Hefei 230088, China}
\affiliation{Institute of Artificial Intelligence,  Hefei Comprehensive National Science Center}
\orcid{0000-0002-8997-3030}
\author{Guo-Ping Guo}
\email{gpguo@ustc.edu.cn}
\affiliation{CAS Key Laboratory of Quantum Information,  University of Science and Technology of China,  Hefei,  230026,  China}
\affiliation{CAS Center for Excellence and Synergistic Innovation Center in Quantum Information and Quantum Physics,  University of Science and Technology of China,  Hefei,  230026,  China}
\affiliation{Hefei National Laboratory, University of Science and Technology of China, Hefei 230088, China}
\affiliation{Institute of Artificial Intelligence,  Hefei Comprehensive National Science Center}
\affiliation{Origin Quantum Computing,  Hefei,  China}
\orcid{0000-0002-2179-9507}
\maketitle

\begin{abstract}
Modeling stochastic phenomena in continuous time is an essential yet challenging problem. Analytic solutions are often unavailable, and numerical methods can be prohibitively time-consuming and computationally expensive. To address this issue, we propose an algorithmic framework tailored for quantum continuous time stochastic processes. This framework consists of two key procedures: data preparation and information extraction.
The data preparation procedure is specifically designed to encode and compress information, resulting in a significant reduction in both space and time complexities. This reduction is exponential with respect to a crucial feature parameter of the stochastic process. Additionally, it can serve as a submodule for other quantum algorithms, mitigating the common data input bottleneck. 
The information extraction procedure is designed to decode and process compressed information with quadratic acceleration, extending the quantum-enhanced Monte Carlo method. 
The framework demonstrates versatility and flexibility, finding applications in statistics, physics, time series analysis and finance. Illustrative examples include option pricing in the Merton Jump Diffusion Model and ruin probability computing in the Collective Risk Model, showcasing the framework's ability to capture extreme market events and incorporate history-dependent information. 
Overall, this quantum algorithmic framework provides a powerful tool for accurate analysis and enhanced understanding of stochastic phenomena.
\end{abstract}

\section{INTRODUCTION}

The continuous time stochastic process (CTSP) is a fundamental mathematical tool that encompasses various well-known stochastic processes, such as \textit{Poisson point process},  \textit{compound Poisson process},  \textit{L\'{e}vy process}, \textit{continuous Markov process}, and doubly stochastic \textit{Cox process}. 
CTSP plays a crucial role in modeling stochastic phenomena occurring in continuous time variables, and its applications span multiple disciplines including finance, physics, statistics, and biology \cite{papapantoleon2008introduction, barndorff2001levy, liggett2010continuous, anderson2012continuous, dassios2003pricing}. 
However, compared to its discrete counterpart, CTSP is considered more complex due to its continuous path. 
Analytic solutions or explicit formulas for the underlying stochastic differential equations and quantities are often unavailable for practical problems. 
Moreover, the application of numerical methods like Monte Carlo simulation can be computationally demanding, requiring unexpectedly large storage and computation resources. 
This is primarily due to the exponential growth of the sampling space as time slices become finer, and the intricate nature of the evolution dynamics within the process \cite{ross1996stochastic,kozachenko2016simulation}. 
These complexities make efficient and accurate analysis of CTSPs a challenging task in various fields of study. 

The developments of quantum processors \cite{feynman1985quantum, divincenzo2000physical, arute2019quantum} and algorithms \cite{orus2019quantum,egger2020quantum,herman2022survey,wilkens2023quantum,mcardle2020quantum,outeiral2021prospects,emani2021quantum,ma2020quantum,cao2018potential} have revealed that quantum computation has great potential beyond classical computers.
Thus it would be one powerful tool to solve the above challenging problems. However, there are still two fundamental challenges to be overcome when implementing the storage and analysis of CTSP with a quantum computer. 

The first challenge is the data-loading procedure of CTSP: As a common and primary bottleneck faced by quantum machine learning and many other algorithms\cite{schuld2018supervised}, the state preparation problem has been studied and discussed in many works \cite{grover2002creating, vazquez2021efficient, rattew2022preparing}. 
Quantum analog simulators are proposed to predict the future of a specified class of CTSP named renewal process with less past information, and an unbounded memory requirement reduction is made in \cite{elliott2018superior,elliott2019memory,elliott2021quantum}. The quantum advantage in simulating stochastic processes is further discussed in \cite{korzekwa2021quantum}. 
Nevertheless, those works have not totally solved the CTSP preparation problem. 
The information is not digital-encoded, so the analysis, such as quantum-enhanced Monte Carlo method is not easy to implement.
Moreover, the type of CTSP is restricted to the renewal process. Furthermore, the storage of the entire path is absent, leading to crucial practical problems for financial engineering, quantitative trading, and many other path-dependent scenarios. 

The second challenging problem is the information extraction of CTSP. 
Quantum-enhanced Monte Carlo method, which involves generating random paths and computing the expectation of desired values, has been shown to achieve quadratic quantum speedup in computing the final value of discrete sampling paths \cite{montanaro2015quantum, rebentrost2018quantum, stamatopoulos2020option, martin2019towards, woerner2019quantum, blank2021quantum}. 
Additionally, a weightless summation of discrete paths, targeting Asian-type option pricing, is proposed in the work by Stamatopoulos et al. \cite{stamatopoulos2020option}. 
However, the quantum-enhanced Monte Carlo method for CTSP, which involves computing the expectation of a weighted integral over the CTSP paths and extracting history-sensitive information (such as the first-hitting time problem), remains unresolved.

In this work, we establish a framework to solve the problems of state preparation and information extraction of quantum continuous time stochastic process (QCTSP). 
Two representations encoding general QCTSP are introduced.
The corresponding state preparation method is developed to prepare the QCTSP with less qubit number requirement,  higher flexibility, and more sensitivity to discontinuous jumps that are important to model extreme market events such as \textit{flash crash}.
An observation of the CTSP holding time is made, inducing further reductions in the circuit depth and the gate complexity of QCTSP. 
Specific quantum circuits are designed for most of the often-used CTSPs with exponential reduction of both qubit number and circuit depth on the key parameter of QCTSP named holding time $\tau_{avg}$ (as summarized in table \ref{tab:tab2}). 
As for the information extraction problem of QCTSP, the weightless integral and the arbitrary time-weighted integral of the QCTSP can be efficiently evaluated by computing the summation of directed areas. 
Furthermore, this method enables the quantum-enhanced Monte Carlo framework to extend to the continuous-time regime, admitting a quadratic quantum speed up. Moreover, by introducing a sequence of flag qubits, our method can extract the history-sensitive information that is essential and indispensable for quantitative finance, path-dependent option pricing, and actuarial science, to cite but a few examples.

Applications of computing the European type option price in the \textit{Merton Jump Diffusion Model} and the ruin probability in the \textit{Collective Risk Model} are given.
 The first application of evaluating a European-type option has been studied in previous work \cite{rebentrost2018quantum, stamatopoulos2020option, martin2019towards, blank2021quantum}, while our method takes the more practical situation of the discontinuous price movement into consideration, and the simulation result is consistent with the \textit{Merton} formula. 
 The second application of computing the ruin probability opens up new opportunities in the area of  insurance as being the first time that quantum computing has been introduced into insurance mathematics up to known, illustrating the great potential power of QCTSP.

In addition to the theoretical importance and rich applicability of QCTSP mentioned above, our work also benefits from its low requirement of input data and circuit connectivity. 
It can be utilized as an input without quantum random access memory (qRAM) \cite{giovannetti2008quantum,  giovannetti2008architectures,  hong2012robust}, partly mitigating the input problem that quantum machine learning and many other quantum algorithms are confronted with. 
Simultaneously, an indicator is given in this work to characterize not only the preparation procedure's complexity but the circuit connectivity as well.

The structure of this article is as follows: Given the mathematical notations for readability and the formal problem statement, the main framework of QCTSP is sketched in Section \ref{sec:1}. 
The main results of state preparation and information extraction are organized in Section \ref{sec:prepare} and Section \ref{sec:extract}, respectively. 
Followed by two applications of option pricing and insurance mathematics in Section \ref{sec:4}, the discussion, together with an introduction of future work, is given in Section \ref{sec:5}. 
The detailed construction of specific modified subcircuits and proofs can be found in the Appendix.

\section{QUANTUM CONTINUOUS TIME STOCHASTIC PROCESS}\label{sec:1}

The mathematical notation is given in the first subsection, followed by the formal definition of CTSP and the two representations of QCTSP. Then the core problems of state preparation and information extraction of QCTSP are stated. A brief framework of our QCTSP is described at the end of this section.

\subsection{Preliminaries}
\begin{table}\centering
\caption[c]{Mathematical Symbols} \label{tab:tab1}
\begin{tabular}{cl}
\toprule
Notation & Nomenclature\\
\midrule
{$\mathbb{P}$} & {The probability.}\\
{$p$} & {The probability amplitude.}\\
$\Omega$ & Space of CTSP.\\
$\bar{\Omega}_n$ & Space of CTSP of $n$ deterministic pieces.\\
$\mathcal{S}$ & Space of states of point.\\
$S$ & the size of the Space of states $S=\abs{\mathcal{S}}$.\\
$\mathbb{Z}_S$ & The set of index $\{1, ..., S\}$.\\
$T$ & Time slices.\\
$\mathbb{Z}_T$ & The set of time $\{1, ..., T\}$.\\
$n$ & Number of Pieces.\\
$X(t)$ & A CTSP.\\
$X_j$ & The $j^{th}$ piece's value variable.\\
$x_{j,k_j}$ & A realization of $X_j$, taking the $k_j\in\mathcal{S}$ index.\\
{$\bm{k}$} & \makecell[l]{{The index vector denoting the} \\{state space indexes of the path.}}\\
$Y_j$ & The $j^{th}$ piece's increment.\\
$\tau_j$ & The $j^{th}$ piece's holding time.\\
$t_j$ & A realization of $\tau_j$ in path.\\
{$\bm{t}$} &{The holding time vector of one path realization.}\\
$T_j$ & The $j^{th}$ piece's cumulative time.\\
$\mathcal{ML}(X)$ & The memory length of $X(t)$.\\
\bottomrule
\end{tabular}
\end{table}

{Intuitively, a continuous time stochastic process is a sequence of random variables $X(t)$ indexed by a non-negative continuous parameter $t$, such as the position of a particle or the price of a financial instrument.}
Wherein each random variable $X(t)$ lies in the same space named the state space $S$.
In this article we consider the state space $\mathcal{S}$ with a finite size $\abs{\mathcal{S}} = S \in\mathbb{N_+}$, whereas the more generic continuous case can be solved via a discretization.
For the current time $t = t_0$, the random variable $X(t_0)$ follows a distribution $\mathcal{F}(t_0)$ determined by the previous history $\{X(t):0\le t < t_0\}$, {i.e., 
if an observation is made, the random variable $X(t_0)$ falls into one specific state, as known as a realization, in the underlying space $\mathcal{S}$.}
In the generic case, the distribution $\mathcal{F}(t_0)$ also varies following the evolutionary dynamics of this CTSP.
More formally, given the mathematical symbols in Table \ref{tab:tab1}, a CTSP can be defined as 
\begin{definition}\label{def:1}
\textbf{(Continuous Time Stochastic Process)} Given a discrete space $\mathcal{S}$, a continuous time stochastic process $\{X(t):t\ge0\}$ is a stochastic process defined on the continuous variable $t$, where for each $t_0$, the state $X(t_0)\in \mathcal{S}$ is a stochastic variable  whose possibility distribution is determined by $\mathbb{P}[X(t_0)]=\mathbb{P}[X(t_0)|X(t):0\le t < t_0]$.
\end{definition}
\noindent For a given CTSP, the most significant distinction from the discrete case is the uncountable dimensional space of continuous sample paths denoted by $\Omega$. 
As the time slices become thinner, the sample space becomes a disaster for both theoretical analysis and simulation. 
Henceforth, the reduced space $\bar{\Omega}_n$ consists of those paths that can be divided into finite $n\in\mathbb{N}_+$ in-variate pieces is considered, where each $\{X(t):t\ge0\}$ in $\bar{\Omega}_n$ is a piece-wise determined random function $X(t)=X_j$ for $T_{j-1}\le t < T_j$ with $T_0=0$. 

\begin{figure}[h]
\centering
\includegraphics[width=\textwidth]{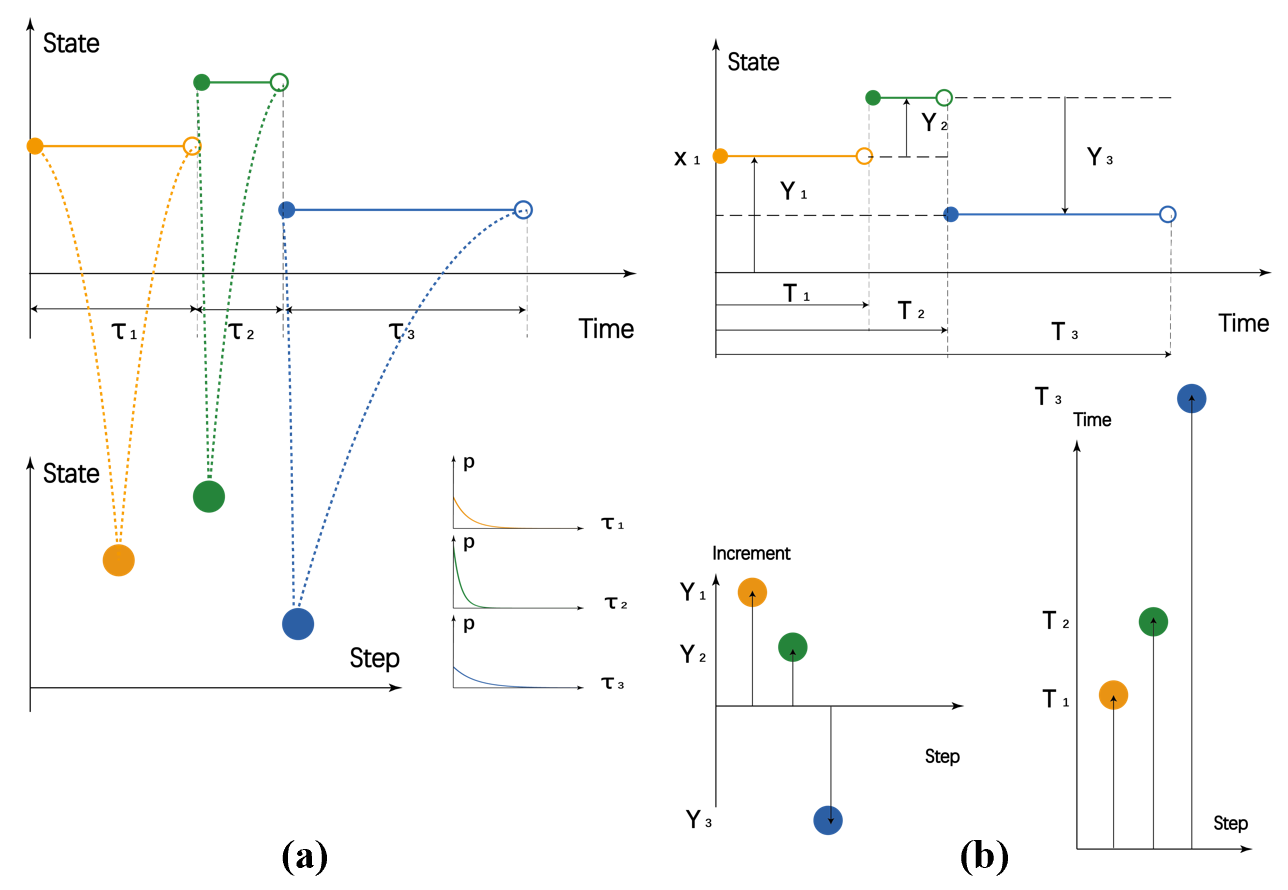}
\caption{\textbf{Embed a Continuous Time Stochastic Process to a pair of Discrete Stochastic Processes.} In this figure, a continuous time stochastic process is embedded into discrete stochastic processes via two different representations. \textbf{(a)}  As shown in the left subfigure, the $j^{th}$ in-variant piece of CTSP can be compressed into a discrete random variable $X_j$ (denoted by the point) in the state space, together with a holding time variable $\tau_j$ follows a given probability distribution (Illustrated by a picture of the probability density function). \textbf{(b)} As shown in the right subfigure, the $j^{th}$ in-variant piece of CTSP can be compressed into a discrete random variable of increment $Y_j=X_j-X_{j-1}$ (denoted by the vertical arrow) in the state space, together with an ending time variable $T_j$ (denoted by the horizontal arrow).}
\label{encoding}
\end{figure}

Thus the $j^{th}$ piece can be recorded as a pair of random variables $(X_j, \tau_j)$, where $X_j\in\mathcal{S}$ is a discrete random variable that denotes the $j^{th}$ piece's state, and $\tau_j=T_j-T_{j-1}\in\mathbb{N^+}$ is a strictly positive random variable that denotes lifetime length of the $j^{th}$ piece. 
In this way, the whole CTSP is encoded via a sequence of random variable pairs $\{(X_j,\tau_j):1\le j \le n\}$, which will be called the \emph{holding time representation} hereafter (as illustrated in Figure \ref{encoding} \textbf{(a)}). 
More precisely, we have:
\begin{definition}\label{def:2}
\textbf{(Holding Time Representation of Quantum Continuous Time Stochastic Process )} Given a continuous time stochastic process $\{X(t):t\ge0\}$, the holding time representation of quantum continuous time stochastic process is a sequence of stochastic pairs $(X_j, \tau_j)$ satisfying:
(1) $X(t)=X_j:T_{j-1}\le t < T_j$, 
(2) $\tau_j = T_j - T_{j-1}$, and 
(3) $\mathbb{P}[X_j, \tau_j]=\mathbb{P}[X_j, \tau_j|X_{j'},\tau_{j'}:1\le j'<j]$.
\end{definition}
An alternative method named \emph{increment representation} to encode a CTSP is to consider an equivalent sequence of random variable pairs $\{(Y_j,T_j):1\le j \le n\}$ where $Y_1 = X_1$ and $Y_j = X_j - X_{j-1}$ are the increments of the CTSP (as illustrated in Figure  \ref{encoding} \textbf{(b)}):
\begin{definition}\label{def:3}
\textbf{(Increment Representation of Quantum Continuous Time Stochastic Process )} Given a continuous time stochastic process $\{X(t):t\ge0\}$, the increment representation of quantum continuous time stochastic process is a sequence of stochastic pairs $(Y_j, T_j)$ satisfying:
(1) $X(t)=X_j:T_{j-1}\le t < T_j$, 
(2) $Y_j = X_j - X_{j-1}$, and 
(3) $\mathbb{P}[Y_j, T_j]=\mathbb{P}[Y_j, T_j|Y_{j'},T_{j'}:1\le j'<j]$.
\end{definition}
\noindent Instead of the uniform sampling method, these two encoding methods benefit from the reduction of qubit number (as shown in Figure  \ref{DSP_QCTSP_Circuit}), as well as the ability to capture randomly occurring discontinuous jumps (as circled in Figure \ref{levy_picture} \textbf{(a)} and Figure \ref{cox_picture} \textbf{(a)}). These two equivalent representations are proposed simultaneously to efficiently compute path-dependent and history-sensitive information, as stated later in Section \ref{sec:extract}.

\begin{figure}[h]
\centering
\includegraphics[width=0.85\textwidth]{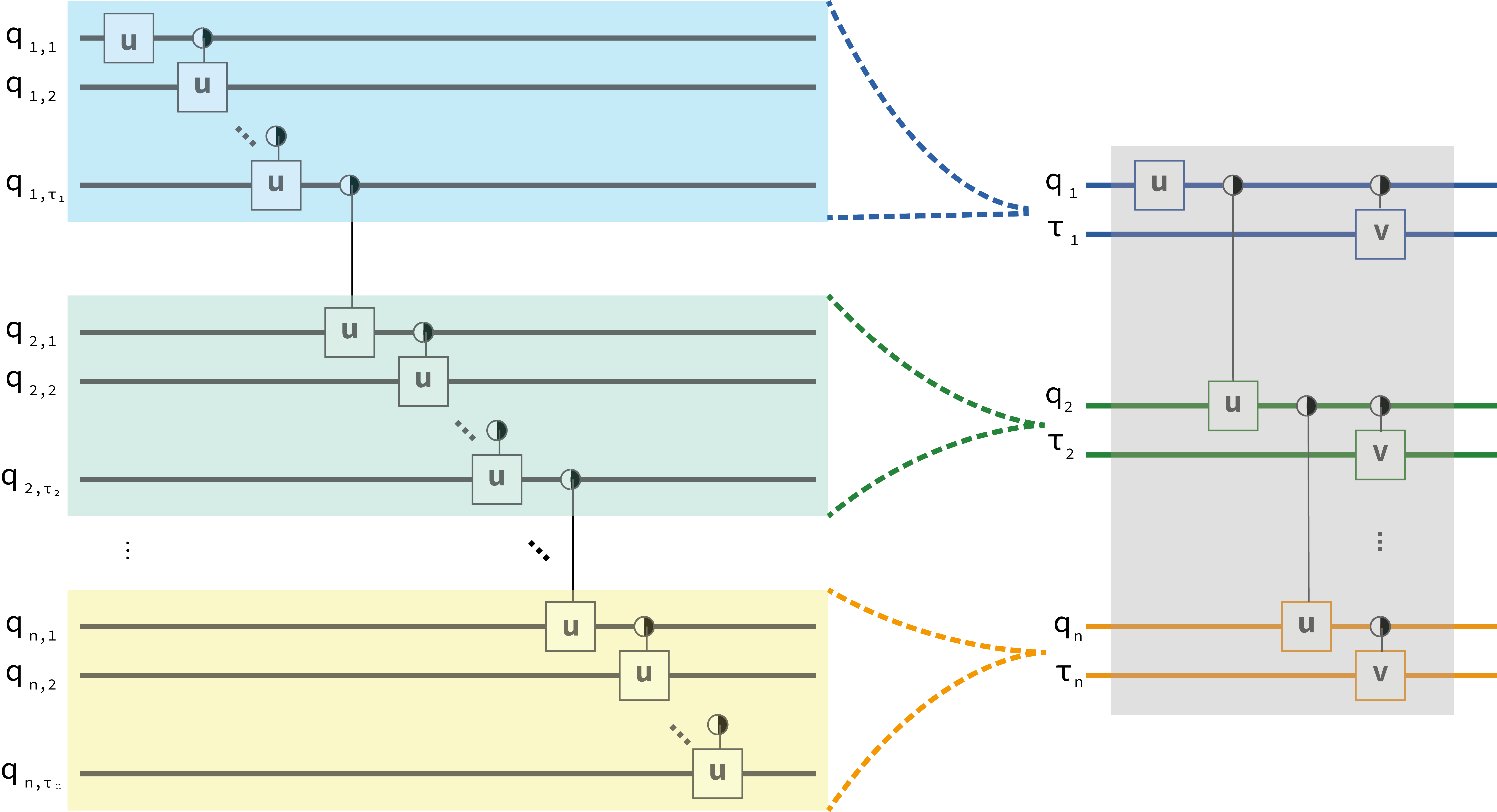}
\caption{\textbf{Embed a Continuous Time Stochastic Process to a pair of Discrete Stochastic Processes.}  In this figure, a continuous time stochastic process is embedded into a sequence of stochastic pairs. Both the qubit number and the circuit depth is reduced as shown.}
\label{DSP_QCTSP_Circuit}
\end{figure}

\subsection{Problem statement}

In this paper, the QCTSP is assumed to be stored in three registers named the index register $\ket{0}_{index}$, the data register $\ket{0}_{data}$ and the time register $\ket{0}_{time}$ for the storage of the index sequence$\{k_j\}$, the path sequence $\{X_j\}$ and the holding time sequence $\{\tau_j\}$, respectively. 
Then the QCTSP state, defined as Definition \ref{def:1} and encoded in either Definition \ref{def:2} or Definition \ref{def:3}, lies in the Hilbert space $\mathcal{H}_{k,x}\bigotimes\mathcal{H}_t={(\mathbb{Z}_\mathcal{S}\mathbb{C}\mathbb{Z}_T)}^{\bigotimes n}$. 
More specifically, it can be represented as a superposition of possible paths
\begin{subequations}\label{eq:con1}
    \begin{equation}\label{eq:con1a}
        \ket{\psi}=\sum_{\bm{k,t} \in \mathbb{Z}_\mathcal{S}^n\otimes\mathbb{Z}_T^n}p(\bm{k,t})\ket{\bm{k,x(k),t}},
    \end{equation}
{where each path is stored in the state}
\begin{equation}\label{eq:con1b}
    {\ket{\bm{k,x(k),t}} = \ket{k_1}\ket{t_1}\ket{k_2}\ket{t_2}...\ket{k_n}\ket{t_n},}
\end{equation}
with possibility 
\begin{equation}\label{eq:con1c}
    \mathbb{P}(\bm{k,t}) = p(\bm{k,t})^2.
\end{equation}.
\end{subequations}
{Here the index vector $\bm{k} = (k_1, ..., k_n)^\mathsf{T} \in \mathbb{Z}_\mathcal{S}^n$ satisfies
$$x_{j, k_j} = x(k_j): 1\le j\le n, 1\le k_j \le \abs{\mathcal{S}},$$
and means that the $j^{th}$ piece takes the ${k_j}^{th}$ value in  the state space $\mathcal{S}$, 
and $\bm{t} = (t_1, ..., t_n)^\mathsf{T} \in \mathbb{Z}_T^n$
where $1\le t_j \le T$ is a realization of the sequence $(\tau_1, ..., \tau_n)$
and means that the $j^{th}$ piece stays for $t_j$ time slices bounded by $T$.
According to Definition \ref{def:1}, for each time piece, the current pair $(X_j, \tau_j)$ is derived from the past pieces $\{(X_{j'}, \tau_{j'}):1\le j'\le j-1\}$ with the possibility $\mathbb{P}(\bm{k,t})$  satisfying
\begin{subequations}
\begin{equation}
    \begin{split}
        \mathbb{P}(\bm{k,t})=&\mathbb{P}(X_1=x(k_1))\\
    \times&\mathbb{P}(\tau_1=t_1|X_1=x(k_1))\\
    \times&\prod_{j=2}^n \mathbb{P}(X_j=x(k_j)|X_{j'},\tau_{j'}=x(k_{j'}),t_{j'}:1\le j' < j)\\
    \times&\prod_{j=2}^n \mathbb{P}(\tau_j=t_j|X_{j'},\tau_{j'}=x(k_{j'}),t_{j'}:1\le j' < j\\
    &\text{ and } X_j=x(k_j)).
    \end{split}
\end{equation}
The corresponding possibility amplitude $p(\bm{k,t})$ satisfies
\begin{equation}\label{eq:con2}
    \begin{split}
        p(\bm{k,t})=&p(X_1=x(k_1))\\
    \times&p(\tau_1=t_1|X_1=x(k_1))\\
    \times&\prod_{j=2}^n p(X_j=x(k_j)|X_{j'},\tau_{j'}=x(k_{j'}),t_{j'}:1\le j' < j)\\
    \times&\prod_{j=2}^n p(\tau_j=t_j|X_{j'},\tau_{j'}=x(k_{j'}),t_{j'}:1\le j' < j\\
    &\text{ and } X_j=x(k_j)),
    \end{split}
\end{equation}
\end{subequations}
as a direct result.}
Besides, the possibility amplitudes also satisfy
\begin{equation}\label{eq:con3}
\sum_{\bm{k}, \bm{t}}p^2(\bm{k,t})=1.
\end{equation}

Given a CTSP as defined in Definition \ref{def:1}, our first goal is to derive the corresponding QCTSP Eq.~\eqref{eq:con1a} encoded by Definition \ref{def:2} and \ref{def:3}. More formally, the problem is defined as follows:

\begin{problem}
  \problemtitle{\textbf{Problem 1:} QCTSP Preparation \textit{(QCTSP-P)}}
  \probleminput{A CTSP defined in Definition \ref{def:1} and three registers $\ket{0}_{index}\ket{0}_{data}\ket{0}_{time}$}
  \problemquestion{Is there a procedure, that is, a unitary operator \emph{$\mathcal{U}$} acting on the registers such that with the notations described in Eq.~\eqref{eq:con1a} - Eq.~\eqref{eq:con1c} and summarized in Table \ref{tab:tab1}, 
$\mathcal{U} \ket{0}_{index}\ket{0}_{data}\ket{0}_{time} = \sum_{\bm{k,t} \in \mathbb{Z}_\mathcal{S}^n\otimes\mathbb{Z}_T^n}p(\bm{k,t})\ket{\bm{k,x(k),t}}$ and the possibility amplitude $p(\bm{k,t})$ satisfies Eq.~\eqref{eq:con2} and Eq.~\eqref{eq:con3}?}
\end{problem}

\noindent Supposed that a QCTSP has been prepared taking the form of Eq.~\eqref{eq:con1}, our second goal is to extract the information of interest, a quantity determined by the knowledge of the past path $\{X(t): T> t\ge0\}$. This problem can be stated formally as:

\begin{problem}
  \problemtitle{\textbf{Problem 2:} QCTSP Extraction  \textit{(QCTSP-E)}}
  \probleminput{A QCTSP defined as Eq.~\eqref{eq:con1a}-Eq.~\eqref{eq:con3} and a quantity $\mathcal{Q}$ determined by it.}
  \problemquestion{Is there a procedure, that is, a unitary operator \emph{$\mathcal{V}$} acting on the QCTSP such that
$\mathcal{V}\ket{\psi}$ is the desired quantity $\mathcal{Q}(\{X(t):T> t\ge0\})$?}
\end{problem}

{Measurements can be implemented at the end of QCTSP-E circuit to extract the quantum information into a classical register.
Despite this, it should be noticed that the information is stored in the quantum state before any measurements, 
and hence the QCTSP-E circuit can be utilized as a sub-module of another bigger circuit.}

As there are many classes of QCTSP to be prepared, together with many quantities of interest, the detailed quantum algorithms are expounded in section \ref{sec:prepare} and section \ref{sec:extract}, respectively. And a brief framework illustrating the basic idea is as follows.

\subsection{Framework}

The problem of modeling and analyzing the paths of QCTSP via a quantum processor can be divided into two main steps: Encode and prepare the QCTSP, and then decode and analyze it.

There are prominent challenges to overcome when we try to solve the \textit{QCTSP-P} problem. 
Intuitively, the complexity of the \textit{QCTSP-P} problem is determined by the number of pieces $n$, the average holding time $\tau_{avg}$, and the intrinsic evolution complexity.
The compressed encoding method appears to be natural and simple, and leads to reductions in the qubit number and the circuit depth (as illustrated in Figure  \ref{DSP_QCTSP_Circuit}). 
By introducing the indicator \textit{memory length} to depict the evolutionary complexity of CTSP, we first study the simplest case of memory-less process. 
An interesting observation of the memory-less process holding time is made, and this induces a constant circuit depth. 
As the memory length increases, the underlying QCTSP turns to be more complicated (shown in Figure  \ref{CTSP}) while the property of the holding time still works: 
Specific types of QCTSP can be prepared efficiently (see Figure  \ref{levy_picture}, Figure  \ref{markov_picture}, and Figure  \ref{cox_picture}), and the detailed construction and the corresponding proofs are given in section \ref{sec:prepare} and Appendix \ref{sec:7}, respectively. 
The qubit number is reduced from $O(T\ln{S})=O(n\tau_{avg}\ln{S})$ of the uniform sampling method to $O(n\ln{(\tau_{avg}S)})$ of the QCTSP method, making an exponential reduction on the parameter $\tau_{avg}$. 
And the circuit depth is optimized from $O(T\ln{nS})=O(n\tau_{avg}\ln{(nS\tau_{avg})})$ of the uniform sampling method to $O(n\ln{(nS)})$ of the QCTSP method, also making an exponential reduction of qubit number on the parameter $\tau_{avg}$. 
The results of state preparation are summarized in Table \ref{tab:tab2}.
\begin{table*}[h]
\centering
\caption[c]{{\textbf{Comparison of QCTSP and Uniform Sampling}
In this table, most often-used CTSP' s preparation procedure is summarized: the memory length, qubit number and circuit depth of each CTSP type, which have been given in section \ref{sec:prepare} and proved in Appendix \ref{sec:7}, are listed in the second, third, and fifth columns, respectively.}} \label{tab:tab2}
\centering
\resizebox{\textwidth}{!}{
\begin{tblr}{width=\linewidth,
            hlines=0.3pt, vlines=0.3pt,
            }
\SetCell[r=2]{c}\makecell[c]{QCTSP\\ Type}&\SetCell[r=2]{c}$\mathcal{ML}$&\SetCell[c=2]{c} Qubit Number&&\SetCell[c=2]{c}Circuit Depth&\\
&&\makecell[c]{QCTSP (our work)}&\makecell[c]{Uniform Sampling}&\makecell[c]{QCTSP (our work)}&\makecell[c]{Uniform Sampling}\\
\makecell{Poisson\\point\\process\\} & \makecell[c]{0} & $n\lceil\log{(\frac{-n\ln{\epsilon}}{\lambda})}\rceil$ & $\frac{-n\ln{\epsilon}}{\lambda}\lceil\log{(\frac{-n\ln{\epsilon}}{\lambda})}\rceil$ & $\lceil\log{n}\rceil$ & $\lceil\log{(\frac{-n\ln{\epsilon}}{\lambda})}\rceil$\\
\makecell[c]{compound\\Poisson\\process\\} & \makecell[c]{0} & $n\lceil\log{(\frac{-nS\ln{\epsilon}}{\lambda})}\rceil$ & $\frac{-n\ln{\epsilon}}{\lambda}\lceil\log{(\frac{-nS\ln{\epsilon}}{\lambda})}\rceil$ & $S+n\lceil\log{(nS)}\rceil$ & $S+\frac{-n\ln{\epsilon}}{\lambda}\lceil\log{(\frac{-nS\ln{\epsilon}}{\lambda})}\rceil$\\
\makecell[c]{L\'{e}vy\\process\\} & \makecell[c]{0} & $n\log{(nS')}$ & $T\log{S}$ & $S'+n\lceil\log{(nS')}\rceil$ & $S+T\log{(TS)}$\\ 
\makecell[c]{continuous\\Markov\\process\\} & \makecell[c]{1} & $n\lceil\log{(-\frac{S\ln{\epsilon}}{\lambda_{min}})}\rceil$ & $-\frac{n\ln{\epsilon}}{\lambda_{min}}\lceil\log{(-\frac{S\ln{\epsilon}}{\lambda_{min}})}\rceil$ & $nS^2$ & $-\frac{n\ln{\epsilon}}{\lambda_{min}}S^2$\\
\makecell[c]{\\$k$-order \\Markov\\process\\}  & \makecell[c]{$k$} & $n\lceil\log{(-\frac{S\ln{\epsilon}}{\lambda_{min}})}\rceil$ & $-\frac{n\ln{\epsilon}}{\lambda_{min}}\lceil\log{(-\frac{S\ln{\epsilon}}{\lambda_{min}})}\rceil$ & $nS^{k+1}$ & $-\frac{n\ln{\epsilon}}{\lambda_{min}}S^{k+1}$\\
\makecell[c]{Cox\\process\\} & \makecell[c]{0} & $n(q_\mathcal{F}+q_\mathcal{G}-\frac{\ln{\epsilon}}{\abs{\lambda}_{min}})$ & $-\frac{n\ln{\epsilon}}{\abs{\lambda}_{min}}(q_\mathcal{F}+q_\mathcal{G})$ & $\max{\{q_\mathcal{G}, q_\mathcal{F}+2^{d_\mathcal{F}}\}} + nq_\mathcal{G}$  & $\max{\{q_\mathcal{G}, q_\mathcal{F}+2^{d_\mathcal{F}}\}} - \frac{nq_\mathcal{G}\ln{\epsilon}}{\abs{\lambda}_{min}}$\\
\makecell[c]{general\\ CTSP} & \makecell[c]{$n$} & $n\log{\frac{ST}{n}}$ & $T\log{S}$ & $(\frac{ST}{n})^n$ & $S^T$\\
\end{tblr}}

\end{table*}
The even more challenging \textit{QCTSP-E} problem is to efficiently decode and extract information from QCTSP so that the quantum speed-up on preparation will not be diminished. 
By a controlled rotation gate-based Riemann summation of rectangles (as illustrated in Figure  \ref{extraction} (a) and Figure  \ref{extraction} (c)), the weightless summation of a discrete path can be extended to the time integral of a continuous path. 
In addition, a parallel coordinate transformation can be implemented to achieve a path-dependent weighted integral (as illustrated in Figure  \ref{extraction} (b), Figure  \ref{extraction} (d) and Figure  \ref{extraction} (e)). 
Furthermore, benefited from our encoding and extracting method, the discontinuous jumps and transitions can be detected and modeled more precisely. 
And history-sensitive information such as first hitting time can be extracted easily as a consequence. Moreover, by introducing the amplitude estimation algorithm, the information extraction QCTSP also admits a further quadratic speed-up. 
The detailed construction and the corresponding proofs are given in section \ref{sec:extract} and Appendix \ref{sec:8}.

\section{STATE PREPARATION}\label{sec:prepare}

\begin{figure}

\includegraphics[width=0.85\textwidth]{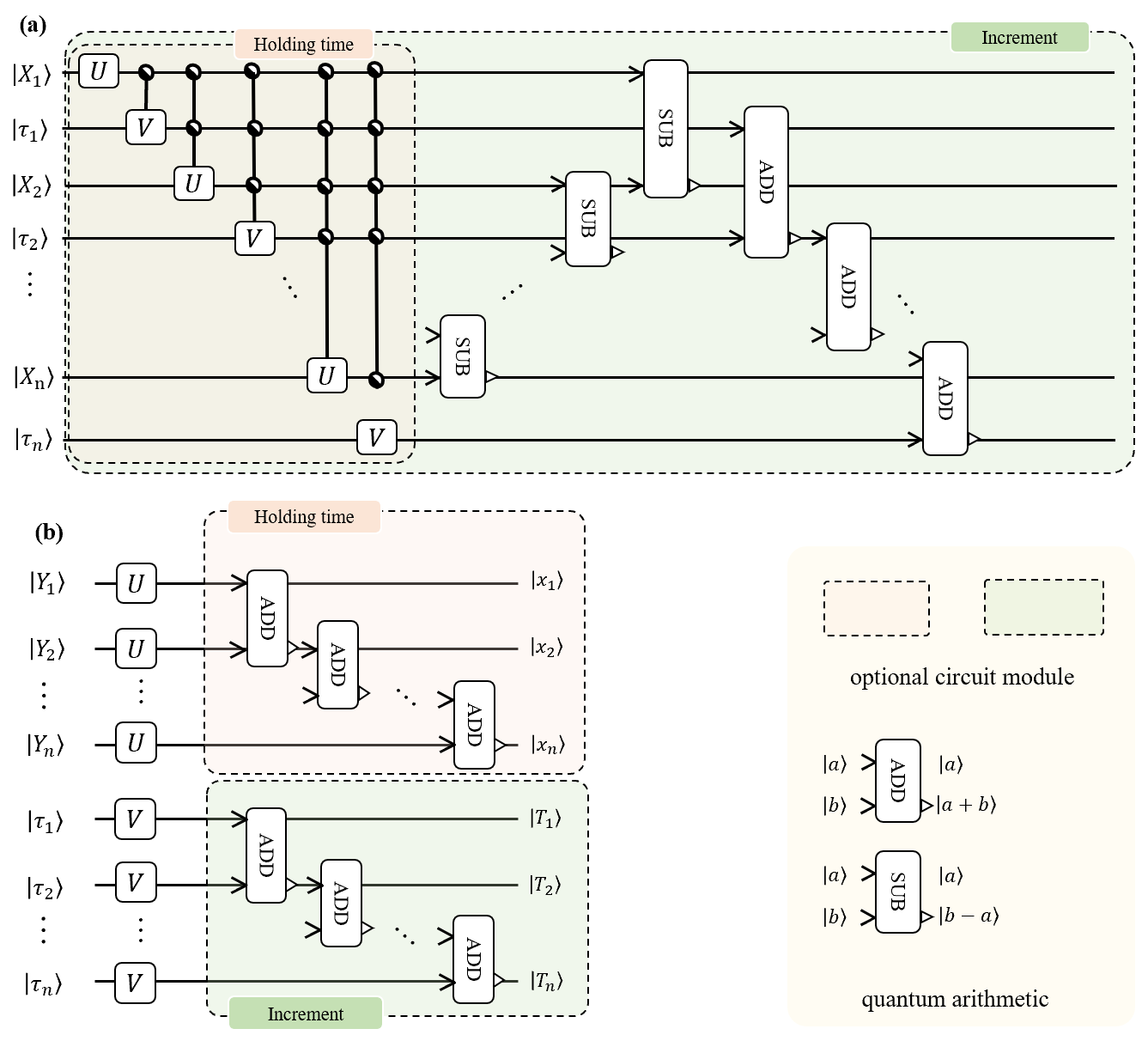}

\caption{\textbf{Quantum Circuits for General CTSP and compound Poisson Process} 
{In this figure, the circuits of QCTSPs with different memory lengths are given for comparison. For each QCTSP, the circuit can be divided into two parts: the state space part and the corresponding holding time part.
The state space part is recorded in a sequence of registers named $X_j$ or $Y_j$ for the $j^{th}$ piece,
wherein $Y_j = X_j - X_{j-1}$ denotes the increment for the increment representation of QCTSP. 
The corresponding holding time part is stored in the registers named $\tau_j$ for the $j^{th}$ holding time.
Intuitively, the structure and the gate complexity of a given QCTSP is closely linked to the memory length.
In the most complicated case of memory length $\mathcal{ML}(X)=n$, the current state is determined by all of the $n-1$ pieces. 
Hence, the $\ket{X_n}$ should be entangled with
all past registers via $2(n-1)$-controlled operators $U$.
Furthermore, the corresponding holding time $\tau_n$ is concerned with not only the past $n-1$ pieces but also the current state.
This is implemented through $2n-1$-controlled operators $V$.
By contrast, in the most simple case of memory length $\mathcal{ML}(x)=0$ (for example, the well-known compound Poisson process), the state spaces and the corresponding holding times are supposed to be independent.
As a result, each register can be prepared via independent implementations of operators $U$ and $V$.
From a higher perspective, the structure of the quantum circuit is intimately linked to the evolutionary dynamics of the underlying CTSP. As the characterization $\mathcal{ML}(X)$ increases, the long-range entanglement also grows. Consequently, this expansion necessitates the introduction of more multi-control unitary operators and SWAPs (taking into account the possible topology limitations of the quantum processor). As a result, the gate complexity of the quantum circuit experiences growth.}
More specifically, one have:
\textbf{(a)} The preparation circuit of the general CTSP is shown and the qubits storing $X_j$ and $\tau_j$ are highly entangled. 
The circuit is too complicated to simulate even for the most simple case where $\mathcal{S}=\{0,1\}$ and $T=2$. For the $j^{th}$ step, there are $2^{2j-2}$ $U_j$ operators and $2^{2j-1}$ $V_j$ operators. 
Hence the total number of unitary operators turns to be $2^{2n}-1$ and increases exponentially. 
Besides, the growth of the operator size leads to enormous challenges of gate decomposition and long-term entanglement. 
\textbf{(b)} The preparation of the Poisson process and compound Poisson process is shown. 
The increments $Y_j$ and holding time $\tau_j$ are I.I.D and can be prepared parallel through $U$ and $V$ operators. 
The alternative encoding methods can be evaluated through a sequence of add operators on $Y_j$ and $\tau_j$ registers in the two boxes, with respect to holding time representation and increment representation.}
\label{CTSP}
\end{figure}

To understand why the state preparation procedure of QCTSP is non-trivial, one should notice that the sequence of pairs $(X_j,\tau_j)$ can be entangled with each other for $1\le j\le n$. 
Intuitively, the more the current state is influenced by past information, the more complicated gates and deeper circuits are needed.  
More precisely, the memory length of a given CTSP is defined as 
\begin{equation}
\begin{split}
\mathcal{ML}(X(t))=\min{}\{n|&\mathbb{P}[X(t_0)|X(t):0 \le t < t_0]=\\
&\mathbb{P}[X(t_0)|X(t):t_0-n\Delta t\le t < t_0]\\
&\text{for all }t_0\ge n\Delta t\},
\end{split}
\end{equation}
which means that the conditional probability distribution of the current time $t_0$ is totally determined by the information in the latest $\mathcal{ML}(X)$ time slices. 
Two extreme situations are compared in Figure  \ref{CTSP}: 
One is the worst case where the memory length $\mathcal{ML}(X)=n$ is as long as the number of pieces. The other is the opposite case where the memory length $\mathcal{ML}(X)=0$, and $X(t)$ is said to be memory-less in this case.
For the first case of the longest memory length,  the present pair registers $(X_j, \tau_j)$ should be entangled with all past time pairs of variables $\{(X_{j'}, \tau_{j'}):1\le j'\le j-1\}$ together (as illustrated in Figure  \ref{CTSP} \textbf{(a)}), and this surely leads to a disaster of circuit depth and gate complexity.
The second case, on the contrary, turns out to be quite simple, and we have the following result:
\begin{thm} \label{thm:1}
\textbf{(Memory-less Process' Holding Time)} Supposing that $\{X(t):0\le t\le T\}$ is memory-less, then the holding time $\tau_j$ for any state $X_j$ follows an exponential distribution $\mathbb{P}[\tau_j \ge t]=e^{-\lambda_j t}$ with $\lambda_j$ determined by the current sate $X_j$. Moreover, given an $\epsilon$ such that the cumulative density function(c.d.f) $f$ satisfies $\mathbb{P}[\tau_j>T]=1-f(T)=e^{-\lambda T}<\epsilon$, $\tau_j$ can be prepared via a circuit consisting of $\lceil\log{(-\frac{1}{\lambda_j}\ln{\epsilon})}\rceil$ qubits with constant circuit depth $1$, and the gate complexity is $\lceil\log{(-\frac{1}{\lambda_j}\ln{\epsilon})}\rceil$.
(see proof in Appendix \ref{pf:1})
\end{thm}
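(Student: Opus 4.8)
The plan is to establish the two assertions separately: first the distributional claim that $\tau_j$ is exponentially distributed, and then the circuit-construction claim with its resource counts. For the first part, I would exploit the equivalence between the $\mathcal{ML}(X)=0$ condition and the memoryless property of the holding time, and then invoke the standard characterization of the exponential law as the unique continuous distribution lacking memory. For the second part, I would discretize the exponential distribution into a geometric one and observe that its square-root amplitude profile factorizes across the binary digits of $t$, yielding a product state preparable in parallel.

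For the distributional claim, I would fix a state $X_j$ and let $G(t)=\mathbb{P}[\tau_j\ge t]$ be its survival function. The key reduction is that memorylessness at the process level (by $\mathcal{ML}(X)=0$, the law of the future given the entire past depends only on the current state) forces the conditional holding-time identity
\begin{equation}
\mathbb{P}[\tau_j\ge s+t \mid \tau_j\ge s]=\mathbb{P}[\tau_j\ge t],
\end{equation}
since having already dwelt in state $X_j$ for time $s$ carries no information the process can use. This yields the Cauchy-type functional equation $G(s+t)=G(s)G(t)$ with $G(0)=1$ and $G$ non-increasing, whose only solution is $G(t)=e^{-\lambda_j t}$ for some $\lambda_j\ge 0$; the rate $\lambda_j$ depends on $X_j$ alone, exactly as claimed.

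For the circuit claim, I would pass to the discrete holding time $\tau_j\in\{1,\dots,T\}$, for which the probability mass becomes geometric, $p(t)\propto e^{-\lambda_j t}=r^{t}$ with $r=e^{-\lambda_j}$. Writing $t=\sum_{i=0}^{m-1}b_i 2^i$ in binary, the target amplitudes factor as
\begin{equation}
\sqrt{p(t)}\ \propto\ r^{t/2}=\prod_{i=0}^{m-1}\bigl(r^{2^{i-1}}\bigr)^{b_i},
\end{equation}
so the normalized state is the product $\bigotimes_{i=0}^{m-1}\bigl(\cos\theta_i\ket{0}+\sin\theta_i\ket{1}\bigr)$ with $\tan\theta_i=r^{2^{i-1}}$. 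Each factor is produced by a single $R_y(2\theta_i)$ on its own qubit; since these act on disjoint qubits they execute simultaneously, giving circuit depth $1$ and a gate count equal to the number of qubits. The qubit number is fixed by truncation: choosing $m$ so that the tail mass $e^{-\lambda_j T}<\epsilon$ is negligible requires representing times up to $T_{\max}=-\tfrac{1}{\lambda_j}\ln\epsilon$, i.e.\ $m=\lceil\log(-\tfrac{1}{\lambda_j}\ln\epsilon)\rceil$ qubits, matching the gate count.

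The main obstacle I anticipate is twofold, and both pieces are error-control issues rather than algebra. First, carefully justifying the reduction from the abstract $\mathcal{ML}(X)=0$ definition to the holding-time memoryless identity requires arguing that conditioning on ``still in $X_j$ after time $s$'' is equivalent to conditioning only on the current state, which must be carried out within the probability model of Definition \ref{def:1}. Second, the product-state construction is exact only for the un-truncated geometric law; with $m$ qubits the normalization is $\sqrt{(1-r)/(1-r^{2^{m}})}$ rather than $\sqrt{1-r}$, so I would invoke the telescoping identity $\prod_{i=0}^{m-1}(1+r^{2^i})=(1-r^{2^m})/(1-r)$ to quantify the deviation and confirm that the $\epsilon$ bound on the tail controls the state-preparation error, pinning down the qubit count to exactly the stated value.
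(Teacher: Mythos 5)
Your proposal is correct and follows essentially the same route as the paper: the exponential law is derived from the memorylessness functional equation for the survival function (the paper differentiates it to an ODE rather than quoting the Cauchy-equation characterization, but this is the same argument), and the circuit is the identical construction — the square-root amplitudes of the truncated geometric law factorize over the binary digits of $t$, giving a depth-$1$ product of single-qubit $R_y$ rotations with the same telescoping normalization $\prod_j(1+r^{2^j})=(1-r^{2^m})/(1-r)$ that the paper uses in Eq.~(\ref{eq:14}).
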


This simple result should not be surprising since, as discussed above, the memory-less condition $\mathcal{ML}(\tau_j)=0$ leads to no-entanglement states and easier preparation.
Following this thought, most of the often-used CTSPs have been systematically studied and efficiently prepared. 
The conclusion that the complexity of the state preparation of QCTSP tends to be higher as the memory length grows is further evidenced by those results summarized in table \ref{tab:tab2}, leaving the detailed explanation and relevant applications given as follows.

\subsection{L\'{e}vy process} \label{sec:levy}
The first and essential family of CTSPs is the L\'{e}vy process, which is one of the most well-known families of continuous-time stochastic processes, including the Poisson process and the Wiener process (also known as Brownian motion), with applications in various fields {(For more details, refer to \cite{applebaum2004levy} and the references therein).} 
Formally, $\{X(t):t\ge0\}$ is said to be a L\'{e}vy process if 1) the increments $\{X(T_{j+1})-X(T_j):1\le j \le n\}$ are independent for any $0< T_1\le T_2 \le ... \le T_n$, and 2) stationary which means that $X(t) - X(s)$ depends only on $t-s$ and hence is equal in distribution to $X(t-s)$(as illustrated in Figure \ref{levy_picture} \textbf{(a)} and \textbf{(b)}). 
\begin{figure}[ht]
\includegraphics[width=0.95\textwidth]{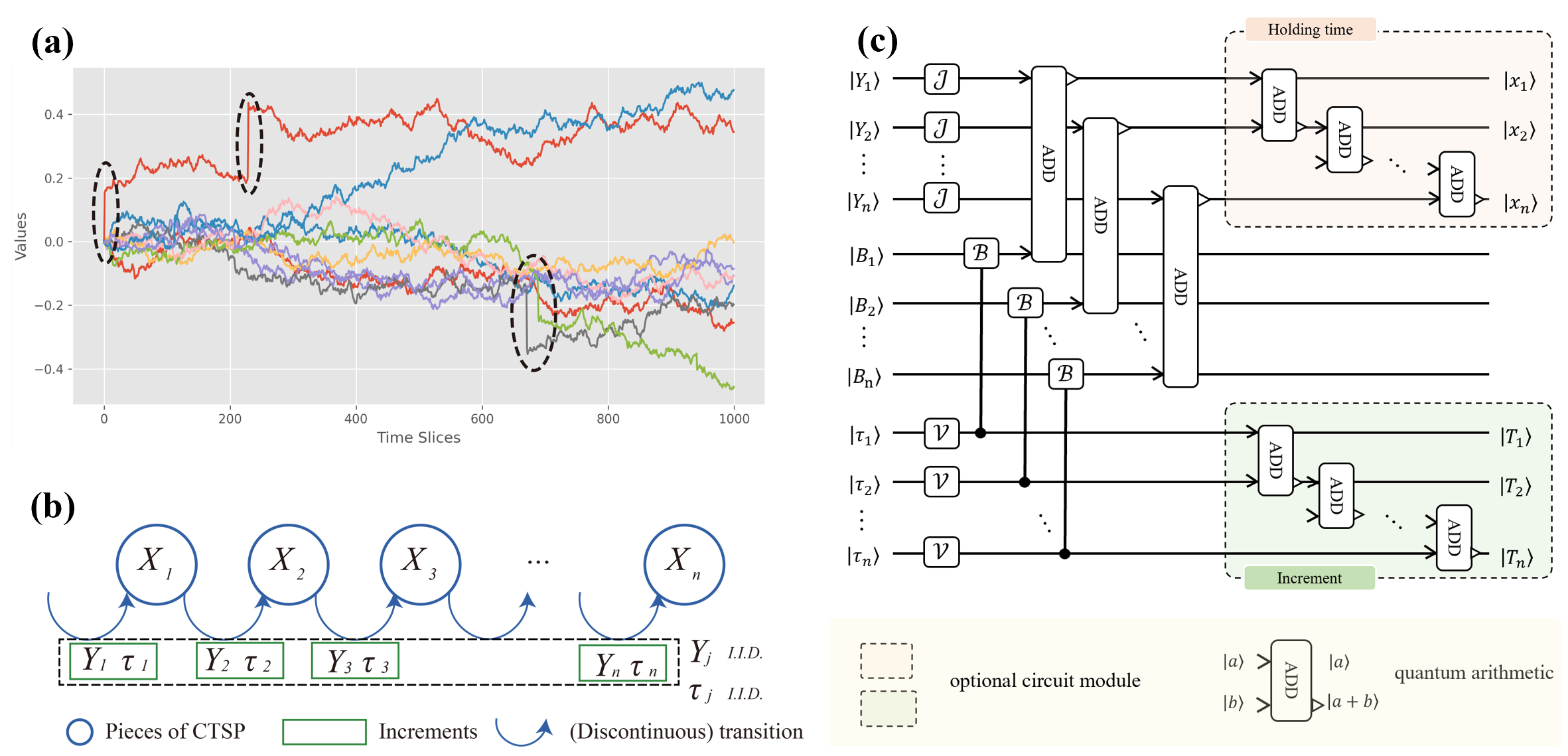}
\caption{\textbf{Prepare a L\'{e}vy Process.}  
In this figure, the preparation of a general L\'{e}vy Process is given: 
\textbf{(a)} $10$ simulated L\'{e}vy process paths of $1000$ steps: 
The intensity of discontinuous jump is $\lambda=1$, and the standard deviation of jump size is $v=0.3$. 
The standard deviation and the drift rate of the Brownian part are $\sigma=0.2$ and $a=0.02$, respectively. 
As circled in the picture, discontinuous jumps happen at random timestamps that are hard for a uniform sampling method to capture.
\textbf{(b)} A L\'{e}vy Process is embedded into discrete stochastic processes via holding time representation and increment representation encoding methods. 
As $\mathcal{ML}(X)=0$, the corresponding stochastic pairs $(Y_j, \tau_j)$ are I.I.D. variables. 
\textbf{(c)} The preparation of the general L\'{e}vy process is shown. 
{Intuitively, the dynamic of L\'{e}vy process is simple as the memory length is zero and hence each piece evolves independently.
Furthermore, the evolution can be decomposed into two distinct parts: a continuous Brownian motion and a discontinuous jump.}
The discontinuous jumps $J_j$ and holding times $\tau_j$ are I.I.D, and can be prepared parallel through $J$ and $V$ operators. 
The Brownian motion components $B_j$ are determined by the holding time $\tau_j$, and can be derived by parallel controlled-$B$ operators. 
The increments $Y_j = J_j + B_j$ are then computed by parallel adder operators. 
It should be mentioned that operators in the grey box are implemented parallel within constant circuit depth. The alternative encoding methods can be evaluated through a sequence of adder operators on $Y_j$ and $\tau_j$ registers in the two boxes, with respect to holding time representation and increment representation.}
\label{levy_picture}
\end{figure}
Under these assumptions, a L\'{e}vy process for the most general case can be prepared efficiently, and we have the following result:

\begin{thm}\label{thm:2}
\textbf{(Upper bound of general L\'{e}vy Process' preparation)}  Suppose that $\{X(t):0<t<T\}$ is a L\'{e}vy process, 
$\Delta t$ is the length of step, 
$n=\frac{T}{\Delta t}$ is the length of sample path, 
and $S=\abs{\mathcal{F}}$ is the number of discretization intervals of the underlying distribution $\mathcal{F}$. 
Then the L\'{e}vy process can be prepared on $n\log{(nS)}$ qubits within $O(S+n\log{(nS)})$ circuit depth, and the gate complexity is thus $O(n(S+\log{(nS)}))$. 
(see proof in Appendix \ref{pf:2})
\end{thm}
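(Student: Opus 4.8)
The plan is to exploit the defining property of a L\'evy process---independent and stationary increments---to show that its increment representation is memory-less, i.e.\ $\mathcal{ML}(X)=0$, so that the stochastic pairs $(Y_j,\tau_j)$ are independent and identically distributed. This is the structural heart of the argument: because no piece is correlated with any other, the $n$ pieces need not be mutually entangled, and the expensive cascade of controlled operators of the general case (FIG.~\ref{CTSP}~\textbf{(a)}) collapses into $n$ parallel copies of a single-piece preparation. First I would verify the I.I.D.\ claim directly from the two L\'evy axioms and invoke Theorem~\ref{thm:1} to conclude that each holding time $\tau_j$ follows an exponential law, preparable on $O(\log T)$ qubits in constant depth.

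Next I would apply the L\'evy--It\^o decomposition to split each increment as $Y_j = J_j + B_j$, where $J_j$ is the (compound-Poisson) jump part and $B_j$ the Brownian-with-drift part. The preparation then proceeds in three parallel layers. (i) Load the jump distribution $J_j$ and the holding time $\tau_j$ with the $J$ and $V$ operators, each a single-register state preparation over $S$ discretized outcomes at cost $O(S)$ gates and depth $O(S)$; since the pieces are I.I.D.\ and act on disjoint registers, these $n$ copies run simultaneously, so this layer contributes depth $O(S)$ only while accumulating $O(nS)$ gates. (ii) Prepare the Brownian component $B_j$ conditioned on $\tau_j$ with controlled-$B$ operators, using that $B_j$ is Gaussian with variance proportional to $\tau_j$, again in parallel. (iii) Add $J_j + B_j$ with parallel adders in constant depth to obtain the increments $Y_j$.

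The final layer converts the I.I.D.\ increments into whichever representation is requested. For the value path I would accumulate $X_j=\sum_{i\le j}Y_i$, and for the holding-time representation the cumulative times $T_j=\sum_{i\le j}\tau_i$; each is a prefix sum realized by a sequence of $n$ adders acting on registers of width $O(\log(nS))$. A logarithmic-depth adder costs $O(\log(nS))$ per step, so the cascade contributes depth $O(n\log(nS))$ and gate count $O(n\log(nS))$. Collecting the three layers gives total depth $O(S)+O(n\log(nS))=O(S+n\log(nS))$; the qubit budget is $n$ registers of width $O(\log(nS))$, i.e.\ $n\log(nS)$; and the gate complexity is $O(nS)$ from the parallel loading plus $O(n\log(nS))$ from the adders, i.e.\ $O(n(S+\log(nS)))$, matching the claim.

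I expect the main obstacle to be step (ii): faithfully preparing the Brownian component controlled on a \emph{random} holding time while staying within budget, since a naive controlled state preparation that branches on every value of $\tau_j$ would blow up the depth. The resolution I would pursue is the rescaling identity $B_j \stackrel{d}{=} a\tau_j + \sigma\sqrt{\tau_j}\,Z$ with $Z$ a fixed standard normal, turning the control into an arithmetic multiply-by-$\sqrt{\tau_j}$ rather than a case-by-case reload. A secondary technical point is bounding the discretization error incurred by replacing the continuous L\'evy measure $\mathcal{F}$ with its $S$-interval approximation and by the exponential-tail truncation of Theorem~\ref{thm:1}, which must be folded into the target $\epsilon$ to certify that the prepared state approximates the true QCTSP.
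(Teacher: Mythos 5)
Your proposal diverges substantially from the paper's own proof, and the divergence introduces a genuine gap. The appendix proof of Theorem~\ref{thm:2} is much simpler than what you describe: it takes the hypothesis ``$\Delta t$ is the length of step, $n=T/\Delta t$'' literally and slices $[0,T]$ into $n$ pieces of \emph{deterministic, equal} length $\tau_j=T/n$. Independence and stationarity of increments then immediately give that $Y_1=X_1$, $Y_j=X_j-X_{j-1}$ are I.I.D.; each $Y_j$ is loaded in parallel by Grover-type state preparation of the discretized distribution $\mathcal{F}$ (cost $O(S)$ gates, depth $O(S)$), and the path values $X_j=\sum_{j'\le j}Y_{j'}$ are recovered by a cascade of $n-1$ adders on registers of width $O(\log(nS))$, giving depth $S+(n-1)(2\log(nS)+4)=O(S+n\log(nS))$ and gate count $O(n(S+\log(nS)))$. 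There is no random holding time, no exponential distribution, and no L\'evy--It\^o decomposition anywhere in that proof; the decomposition and the controlled-$B$ circuit you describe belong to the main-text discussion of FIG.~\ref{levy_picture}~\textbf{(c)}, which is an optimized construction presented \emph{after} Corollaries \ref{thm:3} and \ref{thm:4}, not the argument that establishes the stated upper bound.

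The gap in your route is the invocation of Theorem~\ref{thm:1} to conclude that each $\tau_j$ is exponential. Theorem~\ref{thm:1} applies to the holding time of a \emph{piecewise-constant} memory-less process, i.e.\ the random time until the state changes. A general L\'evy process with a nonvanishing Brownian component is not piecewise constant --- its value changes continuously --- so there is no well-defined ``holding time'' to which the memory-less argument applies, and the theorem's own hypotheses already fix $\tau_j=\Delta t$ deterministically. Exponential holding times are correct for the pure-jump cases (Corollaries \ref{thm:3} and \ref{thm:4}) but not for the object Theorem~\ref{thm:2} addresses. Secondly, even granting the FIG.~\ref{levy_picture}~\textbf{(c)} construction, your step (ii) is not costed: a coherent evaluation of $\sqrt{\tau_j}$ followed by a quantum multiplication is itself an arithmetic circuit of nontrivial width and depth, and you would need to show it fits inside the $O(S+n\log(nS))$ depth budget rather than flag it as an expected obstacle. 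The paper's choice of deterministic $\tau_j$ makes the Brownian variance a classical constant and dissolves both problems at once.
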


It should be mentioned that the preparation procedure's complexity is mainly determined by two  factors: the sampling times $n=\frac{T}{\Delta t}$ and the preparation complexity of the underlying distribution $\mathcal{F}$. 
In the most general case where compensated generalized Poisson process with countably many small jump discontinuities and Wiener process are taken into consideration, the sample space size $S=\abs{\mathcal{F}}$ is large and the preparation procedure of the underlying distribution $\mathcal{F}$ may consume an enormous amount of classical computation resource. Hence it is requisite for one to study the more specific situations besides the upper bound given in Theorem \ref{thm:2}, and some optimized preparation results are given below. 
The first example is the \textit{Poisson point process} that plays an essential role in modeling the arrival of independent random events. 
More precisely, despite several equivalent definitions for different domains and applications, a CTSP is a \textit{Poisson point process} on the positive half-line if the increment $Y_j$ is a constant $1$ and the underlying distribution of each holding time $\tau_j$ between the events is an exponential distribution that can be efficiently prepared as shown in Theorem \ref{thm:1}. 
As a direct consequence, one has:
\begin{coro}\label{thm:3}
\textbf{(Poisson Point Process' preparation)} Supposing that $\{X(t):0<t<T\}$ is a Poisson process, $\Delta t$ is the length of step, $n=\frac{T}{\Delta t}$ is the length of sampling times, and $\epsilon$ is the truncation of the exponential distribution $\mathcal{F}(\lambda)$. Then the Poisson process can be prepared on $n\lceil\log{(\frac{-n\ln{\epsilon}}{\lambda})}\rceil$ qubits with $O(n\lceil\log{(\frac{-n\ln{\epsilon}}{\lambda})}\rceil)$ control-rotation gates, and the circuit depth is $\lceil\log{n}\rceil$. 
(see proof in Appendix \ref{pf:3})
\end{coro}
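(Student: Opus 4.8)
The plan is to exploit two structural features of the Poisson point process that collapse the general L\'{e}vy construction of Theorem \ref{thm:2} into a much simpler circuit: the increments are deterministic and the process is memory-less. Recall that for a Poisson point process the increment $Y_j\equiv 1$ is constant, so the path value at the $j^{th}$ piece is simply $X_j=j$ and is fixed once the index is fixed. Consequently the data register carries no amplitude to be synthesised --- it is written by a plain sequence of increment (adder) operations driven only by the index --- and every controlled-rotation gate in the circuit is spent on the time register. The whole task therefore reduces to preparing the joint distribution of the holding times $\{\tau_j:1\le j\le n\}$. As a sanity check this is exactly the degenerate case $S\to 1$ of Theorem \ref{thm:2}, where the cost of loading the increment distribution $\mathcal{F}$ vanishes.

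Next I would invoke the memory-less property. Since $\mathcal{ML}(X)=0$, Theorem \ref{thm:1} applies to each $\tau_j$ individually: it states that $\tau_j$ obeys $\mathbb{P}[\tau_j\ge t]=e^{-\lambda t}$ and can be loaded in constant depth $1$ on a register whose size is set by the truncation level. The point needing care is precisely that register size. Working in the discretised clock with $n$ slices across the horizon (so $\Delta t=T/n$), the truncation condition $e^{-\lambda t}<\epsilon$ is reached once the holding time exceeds $-\tfrac{1}{\lambda}\ln\epsilon$ in continuous time, which corresponds to $\tfrac{-n\ln\epsilon}{\lambda}$ slices; representing an integer in that range costs $\lceil\log(\tfrac{-n\ln\epsilon}{\lambda})\rceil$ qubits and, by Theorem \ref{thm:1}, the same number of controlled-rotation gates. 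The extra factor $n$ inside the logarithm relative to Theorem \ref{thm:1} is exactly the price of the finer time granularity. Because the holding times are i.i.d. and mutually independent, the $n$ single-piece loaders act on disjoint registers and compose in parallel, giving the stated $n\lceil\log(\tfrac{-n\ln\epsilon}{\lambda})\rceil$ qubits and $O(n\lceil\log(\tfrac{-n\ln\epsilon}{\lambda})\rceil)$ controlled-rotation gates with no increase in depth.

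Finally I would account for the remaining $\lceil\log n\rceil$ depth, which is where the only genuinely non-parallelisable work sits. To deliver the state in the increment/cumulative-time form one must convert the per-piece holding times into the cumulative times $T_j=\sum_{j'\le j}\tau_{j'}$; this prefix sum is realised by a balanced binary tree of adders acting on the time registers, and such a prefix-sum network has depth $\lceil\log n\rceil$, dominating the constant depth of the loaders and yielding the claimed circuit depth. These adders are reversible arithmetic, not rotations, so they do not enter the controlled-rotation budget already counted. The main obstacle is thus not the rotations themselves but the bookkeeping around them: one must check that the per-piece tail $\epsilon$ of Theorem \ref{thm:1} aggregates (by a union bound over the $n$ pieces) to a total error of order $n\epsilon$ --- equivalently, if a clean $\epsilon$ is wanted one replaces $\epsilon$ by $\epsilon/n$, which only perturbs the additive constant inside the logarithm of the qubit count --- and one must verify that the adder tree can be laid out with depth exactly $\lceil\log n\rceil$ while keeping its gate count subdominant to the loader cost.
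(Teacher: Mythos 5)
Your core construction --- constant increments so that $X_j=j$ is deterministic, plus $n$ parallel constant-depth exponential loaders for the i.i.d.\ holding times via Theorem \ref{thm:1} --- is exactly the paper's route, and your qubit and controlled-rotation totals land on the right numbers. (The paper obtains the $n$ inside the logarithm by adding the $\lceil\log n\rceil$ data-register qubits to the $\lceil\log(-\tfrac{1}{\lambda}\ln\epsilon)\rceil$ time-register qubits per piece, using $\log n+\log\tfrac{-\ln\epsilon}{\lambda}=\log\tfrac{-n\ln\epsilon}{\lambda}$, rather than by your rescaling of the truncation into slice units; the two bookkeepings coincide numerically, though yours leaves the data register's qubits uncounted.)

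The genuine gap is in the depth accounting. You source the $\lceil\log n\rceil$ from a balanced prefix-sum tree of adders computing the cumulative times $T_j=\sum_{j'\le j}\tau_{j'}$. First, that step is not required: the corollary targets the holding-time representation of Eq.~(\ref{eq:con1}), whose time register stores the $\tau_j$ themselves, and the paper's proof explicitly \emph{omits} all adders (``the adder subcircuit introduced in Theorem \ref{thm:2} can be omitted''). Second, even if you did want the $T_j$, a tree of reversible adders does not have depth $\lceil\log n\rceil$: it has $\lceil\log n\rceil$ \emph{levels}, each of which is an adder of depth $\Theta\bigl(\log(\tfrac{-n\ln\epsilon}{\lambda})\bigr)$ or worse, and it injects $\Theta(n\log(\cdot))$ Toffoli/CNOT gates, so your circuit would overshoot the claimed depth and would no longer be captured by the controlled-rotation budget alone. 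In the paper the $\lceil\log n\rceil$ depth comes instead from writing the deterministic index values $X_j=j$ into the $\lceil\log n\rceil$-qubit data registers, while all time registers are loaded simultaneously in depth $1$, giving depth $\max\{\lceil\log n\rceil,1\}=\lceil\log n\rceil$. Your union-bound remark about aggregating the per-piece truncation error is a fair caveat (one the paper silently skips), but as written it is raised rather than resolved.
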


\noindent Another more flexible example is the \textit{compound Poisson process} as a generalization, i.e., the jumps' random arrival time follows a Poisson process and the size of the jumps is also random, following an underlying distribution $\mathcal{G}$. 
It should be noticed that each jump's sampling space size $S$ of \textit{compound Poisson process} is usually fixed in practical, apparently different from the general L\'{e}vy process discussed above. 
We then get the following result with some modification on the circuit(as illustrated in Figure  \ref{CTSP} \textbf{(b)}):
\begin{coro}\label{thm:4}
\textbf{(Compound Poisson Process' preparation)} Suppose that $\{X(t):0<t<T\}$ is a compound Poisson process with $n=\frac{T}{\Delta t}$ being the length of sample path, $S$ denoting the sampling space size of each point, and $\epsilon$ being the truncation of the exponential distribution $\mathcal{F}(\lambda)$, then the compound Poisson process can be prepared on $n\lceil\log{(\frac{-nS\ln{\epsilon}}{\lambda})}\rceil$ qubits. The circuit depth is $S+n\lceil\log{(nS)}\rceil$, and the gate complexity is $O(n\lceil\log{(-\frac{nS\ln{\epsilon}}{\lambda})}\rceil)$.
(see proof in Appendix \ref{pf:3})
\end{coro}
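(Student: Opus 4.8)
The plan is to treat the compound Poisson process as the memory-less ($\mathcal{ML}(X)=0$) specialization of the general L\'evy construction of Theorem \ref{thm:2}, and to read off the three complexities by replacing its two generic sub-blocks with the sharper exponential-holding-time block of Theorem \ref{thm:1} and a fixed-size jump block. Because the process is memory-less, the pairs $(Y_j,\tau_j)$ are I.I.D., so the preparation factorizes: first I would argue that each of the $n$ pieces is prepared by one and the same pair of unitaries $(J,V)$ acting on disjoint registers, hence applied in parallel. This reduces the whole problem to (i) sizing and preparing a single jump register, (ii) sizing and preparing a single holding-time register, and (iii) stitching the $n$ independent pieces into one of the two representations by a chain of adders.

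For the holding-time block I would invoke Theorem \ref{thm:1} verbatim: since $\{X(t)\}$ is memory-less each $\tau_j$ is exponential with rate $\lambda$, so a single piece costs constant depth $1$ and a register whose width is governed by the truncation $\mathbb{P}[\tau_j>t_{\max}]=e^{-\lambda t_{\max}}<\epsilon$. The $n$-factor inside the logarithm then comes from sizing the register to hold the cumulative time $T_j=\sum_{j'\le j}\tau_{j'}$ rather than a single increment, giving width $\lceil\log(-\tfrac{n\ln\epsilon}{\lambda})\rceil$; I would carry a union bound over the $n$ pieces to keep the total truncation error below $\epsilon$. For the jump block, the size distribution $\mathcal{G}$ has a fixed support of $S$ states, so I would load it with a general amplitude-preparation subroutine on $\lceil\log S\rceil$ qubits at depth $O(S)$. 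Adding the two register widths per piece and multiplying by $n$ yields the qubit count $n\lceil\log(-\tfrac{nS\ln\epsilon}{\lambda})\rceil$, since $\log(-\tfrac{nS\ln\epsilon}{\lambda})=\log S+\log(-\tfrac{n\ln\epsilon}{\lambda})$.

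The depth and gate counts then follow from assembling the representation. Unlike the pure Poisson case (Corollary \ref{thm:3}), where the value $X_j=j$ is deterministic, here the random jumps must be both prepared and accumulated, and this is the source of the extra $S$ and $n\log(nS)$ terms. All $n$ jump blocks run in parallel, contributing the additive depth $S$, while all $n$ holding-time blocks run in parallel at depth $1$. The two encodings of Definition \ref{def:2} and Definition \ref{def:3} are obtained, as in FIG. \ref{CTSP} \textbf{(b)}, by a sequence of $n$ adders that accumulate either the $\tau_j$ into $T_j$ or the $Y_j$ into $X_j$; realizing each adder on $O(\log(nS))$-bit registers at depth $O(\log(nS))$ and chaining them sequentially gives the remaining depth $n\lceil\log(nS)\rceil$, hence total depth $S+n\lceil\log(nS)\rceil$. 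The gate complexity is dominated by the $n$ holding-time blocks (Theorem \ref{thm:1} gives one gate per qubit) together with the $n$ adders, i.e. $O(n\lceil\log(-\tfrac{nS\ln\epsilon}{\lambda})\rceil)$.

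The main obstacle, and the step I would spend the most care on, is the bookkeeping that ties the register widths to the stated logarithms: one must show that the cumulative registers never overflow, so that the truncation bound $\epsilon$ is preserved after $n$ additions, while simultaneously keeping each register within $O(\log(nS))$ and $O(\log(-\tfrac{n\ln\epsilon}{\lambda}))$ bits, since it is precisely this sizing that produces the claimed exponential reduction in $\tau_{avg}$. A secondary subtlety is the accounting of the jump-loading gates: the parallel $\mathcal{G}$-blocks cost $O(nS)$ gates in the worst case, so to land on the stated $O(n\lceil\log(-\tfrac{nS\ln\epsilon}{\lambda})\rceil)$ I would either assume $\mathcal{G}$ is efficiently loadable (poly-logarithmic in $S$) or make explicit that the $O(S)$-depth loading is charged only to the depth while the jump register keeps fixed size, so that its contribution is subsumed; I would state this assumption openly rather than leave it implicit.
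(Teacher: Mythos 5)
Your proposal matches the paper's proof in Appendix \ref{pf:3} essentially step for step: parallel Theorem~\ref{thm:1} exponential holding-time blocks of depth $1$, parallel $O(S)$-depth loading of the I.I.D.\ jump distribution on $\lceil\log S\rceil$ qubits, and a sequential chain of $n-1$ adders on $O(\log(nS))$-bit registers, which together supply the depth $S+n\lceil\log(nS)\rceil$ and the per-piece register widths that combine into $n\lceil\log(-\tfrac{nS\ln\epsilon}{\lambda})\rceil$. Your closing caveat about the $O(nS)$ gates hidden in the parallel jump-loading blocks is well taken --- the paper's own gate count silently omits that term as well, tallying only the adders and rotation gates --- so stating the efficient-loadability assumption explicitly, as you propose, only tightens the published argument.
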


According to the L\'{e}vy-It\^{o} decomposition, a general L\'{e}vy process can be decomposed into three components: 
a Brownian motion with drift $\sigma B_t + at$, 
a compound Poisson process $Y_t$, 
and a compensated generalized Poisson process $Z_t$ as follows:
\begin{equation}
X_t = \sigma B_t + at + Y_t + Z_t,
\end{equation}
where $\sigma$ is the volatility of the Brownian motion, $a$ is the drift rate, $Y_t$ is the compound Poisson process of jumps larger than 1 in absolute value, and $Z_t$ is a pure jump process with countably (infinite) many small jump discontinuities. 
As $Z_t$ contains infinitely many jumps as a zero-mean martingale, it is hard and unnecessary for preparation. 
One can prepare a general L\'{e}vy process without compensated parts as follows: 
Firstly, the sequence of exponential jumps $J_j$ and holding times $\tau_j$ are prepared via $\mathcal{J}$ and $\mathcal{V}$ operators, respectively. 
Secondly, the Brownian motions $B_j$ are derived by Grover's state preparation method through $\mathcal{B}$ operators controlled by holding time $\tau_j$. 
Thirdly, an adder operator is introduced for the sum of pure jump and Brownian parts. 
It should be mentioned that the drift term $at$ has been omitted since any function on this linear term can be easily translated into a function and hence well-evaluated by Theorem \ref{thm:8} and Theorem \ref{thm:9}. The circuit is shown in Figure \ref{levy_picture} \textbf{(c)}.

\subsection{Continuous Markov process} 
The second and more complicated family of CTSPs is the \textit{continuous Markov process}. 
In brief, a stochastic process is called a \textit{continuous Markov process} if it satisfies the Markov condition: 
For all $t\ge0$, $s\ge0$, $\mathbb{P}[X(s+t)=j|X(s)=k \land X(u):0\le u <s]=\mathbb{P}[X(s+t)=j|X(s)=k]$, where $j,k \in\mathcal{S}$. 
\begin{figure}
\includegraphics[width=0.95\textwidth]{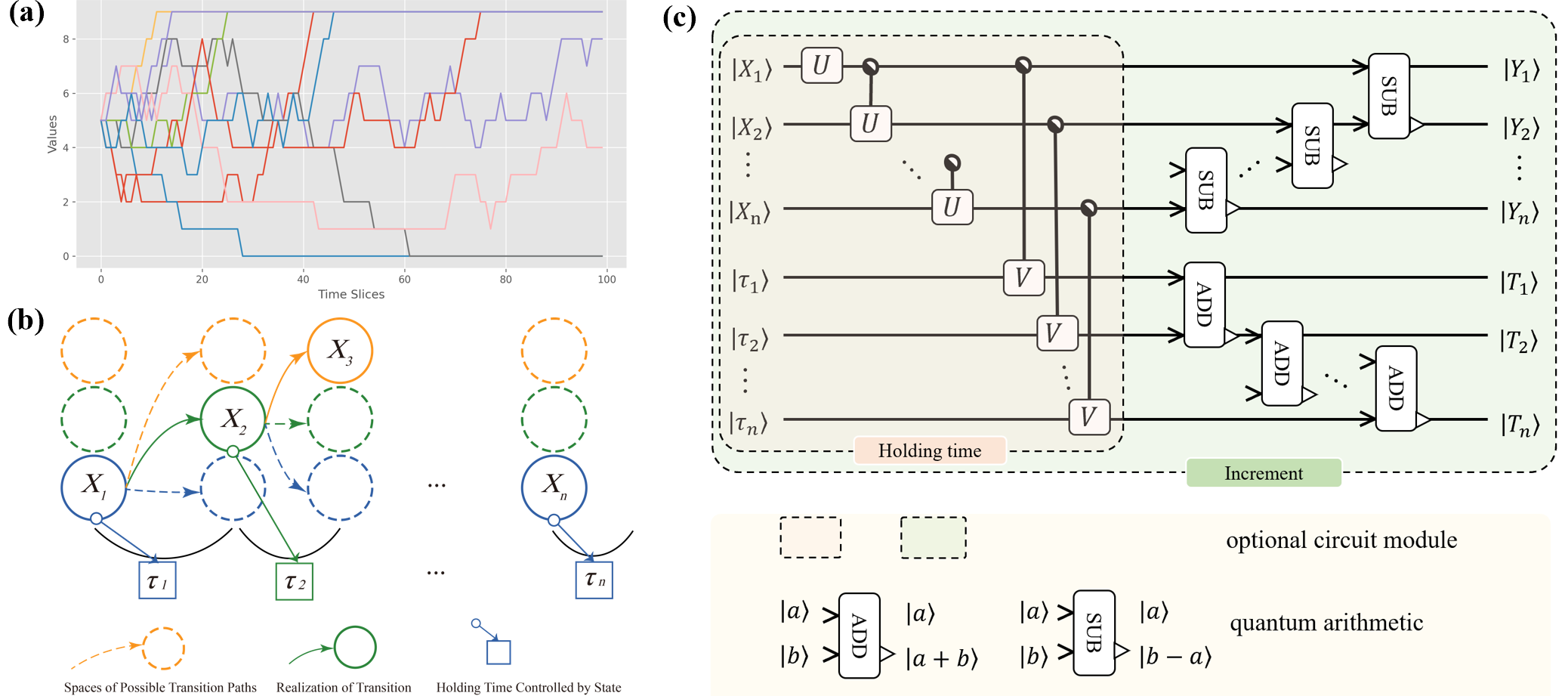}
\caption{\textbf{Prepare a Continuous Markov Process.}  In this figure, the preparation of a continuous Markov Process is given: 
\textbf{(a)} $10$ simulated \textit{continuous Markov process} paths of $100$ steps: The  transition rates from $\ket{n}$ to $\ket{n-1}$ and $\ket{n+1}$ are both $0.05$,  and the state $\ket{0}$ and $\ket{10}$ are absorbing states. Since the possibility of staying in the current state is $p=0.9$, the continuous Markov process tends to be in-variant for a long holding time, and thus it can be compressed a lot via our encoding method.
\textbf{(b)} The \textit{continuous Markov process} is embedded via both holding time representation and increment representation. Due to $\mathcal{ML}(X)=1$, each piece's state space $X_j$ is controlled by the previous one $X_{j-1}$ while the holding time $\tau_j$ is determined by the current state $X_j$ through an exponential distribution.
\textbf{(c)} The preparation procedure of a \textit{continuous Markov process} is shown. The embedded discrete Markov process $X_j$ is prepared by a sequence of $U$ and controlled-$U$ operators, while the holding time $\tau_j$ is then determined by $X_j$ and prepared by controlled-$V$ operators. The increment representation is then prepared by an alternative sequence of adder and subtractor subcircuits in the box.}
\label{markov_picture}
\end{figure}
Intuitively speaking, given the present state $X(s)$, the future $\{X(s+t):t\ge0\}$ is independent of the past history $\{X(u):0\le u < s\}$, and the memory length of a \textit{continuous Markov process} is $\mathcal{ML}(X_t)=1$ (shown in Figure \ref{markov_picture} \textbf{(a)} and \textbf{(b)}). 
A \textit{continuous Markov process} can be efficiently prepared mainly for two reasons: the holding time $\tau_j$ only depends on the current state $X_j$, while the state sequence $\{X_j\}$ is independent from $\{X_{j'}:1\le j'< j-1\}$. 
As a result, the \textit{continuous Markov process} can be embedded into an independent discrete Markov chain depicting the transition probability, together with a sequence of controlled exponential distributions carrying the holding time information (as illustrated in Figure  \ref{markov_picture} \textbf{(c)}).
More precisely, we have the following result:
\begin{thm}\label{thm:6}
\textbf{(Continuous Markov Process' Preparation)}Supposing a stochastic process $\{X(t):t\ge0\}$ with discrete state space $\mathcal{S}\subseteq\mathbb{Z}$ to be a continuous-time Markov chain. 
Then it can be prepared on $n\lceil\log{(-\frac{S}{\lambda_{min}}\ln{\epsilon})}\rceil$ qubits via a quantum circuit consisting of $O(nS\lceil S+\log{(-\frac{\ln{\epsilon}}{\lambda_{min}})}\rceil)$ gates and $nS^2$ circuit depth, where $S=\abs{\mathcal{S}}$, $\lambda_{min}$ is the minimum of $\lambda_j$, and $\epsilon$ is the truncation error bound for time.
(see proof in Appendix \ref{pf:5})
\end{thm}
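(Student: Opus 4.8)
The plan is to exploit the Markov property $\mathcal{ML}(X)=1$ to factorize the joint amplitude of Eq.~(\ref{eq:con2}) into a product of single-step conditionals, so that the fully entangled preparation circuit of FIG.~\ref{CTSP}~\textbf{(a)} collapses into the layered structure of FIG.~\ref{markov_picture}~\textbf{(c)}. First I would recall the standard decomposition of a continuous-time Markov chain into its embedded jump chain and its state-dependent holding times: given $X(s)=k$, the process remains in $k$ for an exponentially distributed time with rate $\lambda_k$ and then jumps to state $j$ with transition probability $p_{kj}$. Because the Markov condition forces $\mathbb{P}[X_j,\tau_j\mid X_{j'},\tau_{j'}:j'<j]=\mathbb{P}[X_j\mid X_{j-1}]\,\mathbb{P}[\tau_j\mid X_j]$, the amplitude $p(\bm{k},\bm{t})$ splits as $p(X_1)\prod_{j=2}^n p(X_j\mid X_{j-1})\prod_{j=1}^n p(\tau_j\mid X_j)$, reducing the problem to two tasks: loading a discrete Markov chain on the data register and loading a state-controlled exponential holding time on the time register.

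For the state register I would prepare the embedded chain sequentially. The initial distribution $\mathbb{P}[X_1]$ is loaded by one operator $U_1$ over the $\lceil\log S\rceil$-qubit register $X_1$; the $j$-th step is then realized by $S$ mutually exclusive controlled operators $U_k$ (one per value $k$ of $X_{j-1}$), each loading the conditional row $\mathbb{P}[X_j\mid X_{j-1}=k]$ onto the fresh register $X_j$ via Grover's state-preparation method for an $S$-point distribution, which costs $O(S)$ gates and $O(S)$ depth. Since the $X_j$ must be produced in order and the $S$ controlled blocks share the target $X_j$, this contributes $O(nS^2)$ gates and $O(nS^2)$ depth. For the holding times I would invoke Theorem~\ref{thm:1}: conditioned on $X_j=k$, $\tau_j$ is exponential with rate $\lambda_k$ and is prepared by one controlled-$V_k$ operator using $\lceil\log(-\frac{1}{\lambda_k}\ln\epsilon)\rceil$ qubits at constant depth $1$. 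Applying the $S$ controlled-$V_k$ per step (bounding every rate by $\lambda_{min}$) costs $O(S\log(-\frac{\ln\epsilon}{\lambda_{min}}))$ gates per step, $O(nS\log(-\frac{\ln\epsilon}{\lambda_{min}}))$ in total, and because each $\tau_j$ depends only on its own $X_j$ these blocks parallelize across $j$ and add only a subdominant $O(nS)$ to the depth.

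Collecting the resource counts then yields the claim. Each of the $n$ pieces carries a data register of $\lceil\log S\rceil$ qubits and a time register of $\lceil\log(-\frac{1}{\lambda_{min}}\ln\epsilon)\rceil$ qubits, so the total is $n\lceil\log(-\frac{S}{\lambda_{min}}\ln\epsilon)\rceil$ qubits; the gate count combines the $O(nS^2)$ from the chain and the $O(nS\log(-\frac{\ln\epsilon}{\lambda_{min}}))$ from the holding times into $O(nS\lceil S+\log(-\frac{\ln\epsilon}{\lambda_{min}})\rceil)$; and the depth is dominated by the sequential chain preparation, $nS^2$.

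The part I expect to be most delicate is the correctness and cost of the controlled state-preparation step: I must verify that the superposition of $S$ controlled-$U_k$ operators, acting on a register already in superposition over the previous state, loads exactly the conditional distribution row by row without cross terms, which is where the mutual exclusivity of the controls and the freshness of each $X_j$ register are essential, and that the $O(S)$-gate encoding of a generic $S$-point distribution is justified. A secondary subtlety is the depth bookkeeping, namely arguing that the state-controlled holding-time blocks genuinely parallelize across $j$ (so they do not inflate the $nS^2$ depth) while the jump-chain layers cannot, owing to the $X_{j-1}\!\to\!X_j$ dependency.
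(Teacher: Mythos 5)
Your proposal is correct and follows essentially the same route as the paper's proof in Appendix~\ref{pf:5}: decompose the process into an embedded discrete Markov chain (prepared sequentially with Grover-style loading of the initial state and $S^2$ multi-controlled rotations per transition step) plus state-controlled exponential holding times via Theorem~\ref{thm:1}, prepared in parallel across $j$, yielding the same qubit, gate, and depth counts. The extra care you flag about cross terms in the controlled state-preparation and about which layers parallelize is a welcome tightening of bookkeeping the paper simply asserts.
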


\subsection{Cox process}
\begin{figure}[ht!]
\includegraphics[width=0.95\textwidth]{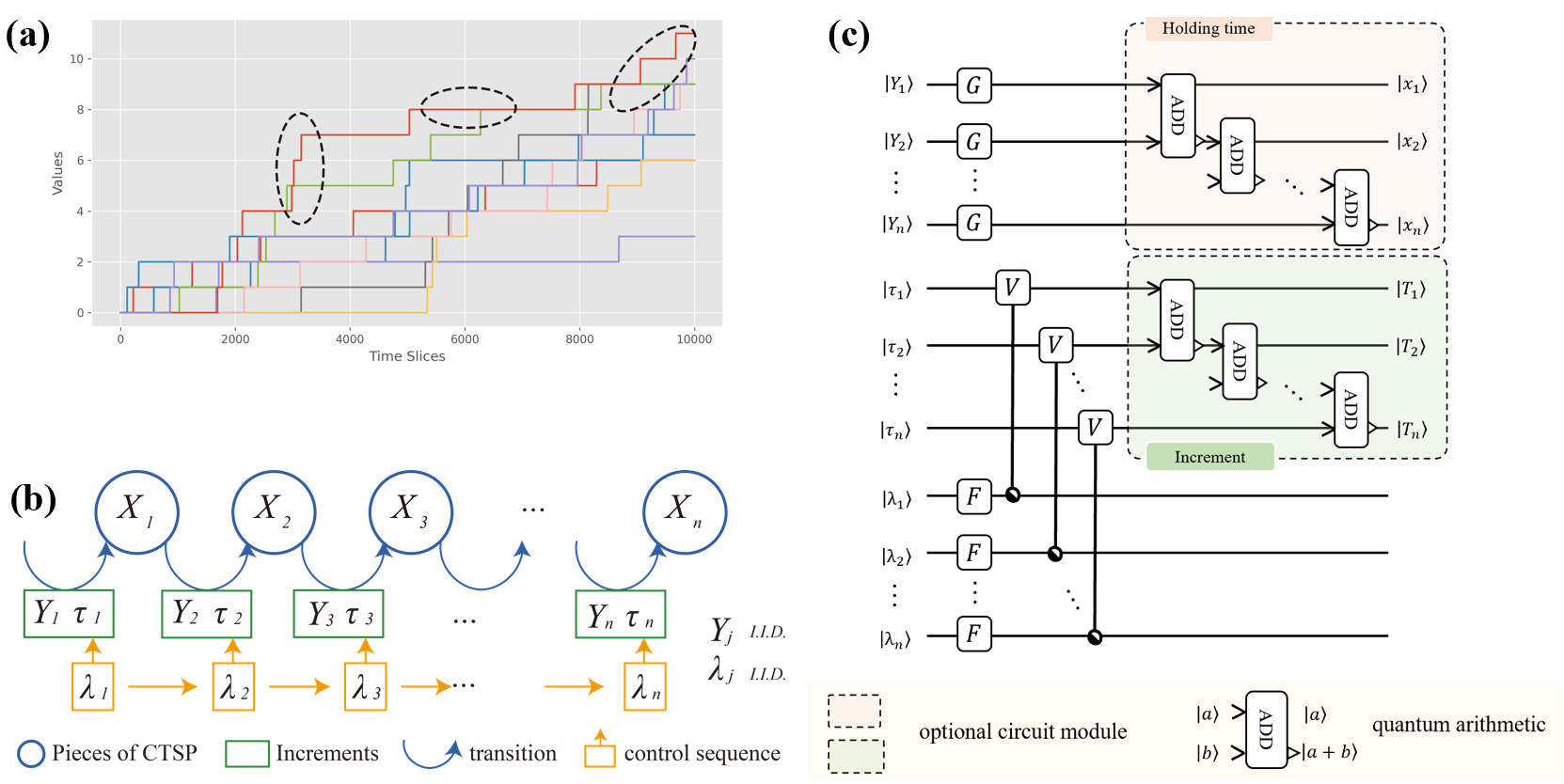}
\caption{\textbf{Prepare a Cox Process.}  In this figure, the preparation of a continuous Markov Process is given: 
\textbf{(a)} $10$ simulated Cox process paths of length $10$ and $10000$ steps: The varying intensity follows an exponential distribution with $\lambda=0.1$, and the size of the discontinuous jump is assumed to be a constant $1$. As circled in the picture, the intensity changes with time and the jump rate varies from high to low randomly.
\textbf{(b)} A Cox Process is again embedded via holding time representation and increment representation encoding methods. The process is more complicated as the holding time $Y_j$ of interest is controlled by another stochastic variable $\lambda_j$ and the increment $Y_j$ and the control variable $\lambda_j$ are independent identical distribution variables, respectively.
\textbf{(c)} The preparation of a Cox process is shown. Firstly, the stochastic control sequence $\lambda_j$ is prepared by $F$ operators, and then the holding time sequence $\tau_j$ is prepared by parallel controlled-$V$ operators. The increments $Y_j$ are I.I.D. and can be prepared through $G$ operators. Alternative subcircuits of holding time and increment representations follow as shown in the two boxes.}
\label{cox_picture}
\end{figure}
As further evidence illustrating the flexibility and generalizability of our encoding method, the Cox process, a useful framework for modeling prices of financial instruments in financial mathematics \cite{lando1998cox}, can be efficiently prepared. 
One of the most significant distinctions between the Cox process and those processes discussed above is that both the states and the evolutionary law vary randomly over time (as illustrated in Figure  \ref{cox_picture} \textbf{(a)} and \textbf{(b)}). 
As a direct consequence, the preparation procedure and the corresponding circuit should be dynamic and flexible to characterize this doubly stochastic property (shown in Figure  \ref{cox_picture} \textbf{(c)}). 
Precisely speaking, a Cox process is a generalization of the Poisson process where the intensity $\lambda(t)$ itself is also a stochastic variable of some distribution $\mathcal{F}$. 
The preparation procedure can be divided into two steps: 
First, the stochastic control sequence of $\lambda_j$ is prepared by some $F$ operator for the underlying distribution $\mathcal{W}$. Then the holding time sequence $\tau_j$ is prepared by parallel controlled-$V$ operators for the underlying exponential distribution. 
The operators of different exponential distributions can be implemented by a sequence of controlled rotation gates of different rotation angles. 
The increments $Y_j$ are assumed to be I.I.D. and can be prepared through some $G$ operators (common choice of distributions can be found in the appendix, see Figure  \ref{exponential_distribution_circuit} and Figure  \ref{eldc} for reference). 
In summary, we have the following result:
\begin{thm}\label{thm:7}
\textbf{(Cox Process' Preparation)}Suppose that $\{X(t):t\ge0\}$ is a Cox process with a varying intensity variable $\lambda(t)$, and $\lambda(t)$ follows an underlying statistical distribution $\mathcal{F}$ that can be prepared on  $q_\mathcal{F}$ qubits with circuit depth $d_\mathcal{F}$. 
The increments $Y_j$ follows an I.I.D. that can be prepared on $q_\mathcal{G}$ qubits with circuit depth $d_\mathcal{G}$. 
Also it is supposed that the minimum value of $\lambda(t)$ is $\lambda_{min}$ and the error bound is $\epsilon$, then it can be efficiently prepared on $O(n(q_\mathcal{F}+q_\mathcal{G}-\frac{\ln{\epsilon}}{\abs{\lambda_{min}}}))$ qubits with circuit depth $O(\max{\{q_\mathcal{G}, q_\mathcal{F}+2^{d_\mathcal{F}}\}} + nq_\mathcal{G})$ (holding time representation ) or $O(\max{\{q_\mathcal{G}, q_\mathcal{F}+2^{d_\mathcal{F}}\}} - \frac{n\ln{\epsilon}}{\lambda_{min}})$ (increment representation ).
(see proof in Appendix \ref{pf:6})
\end{thm}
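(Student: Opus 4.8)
The plan is to exploit the doubly stochastic structure of the Cox process to decouple its preparation into $n$ nearly independent blocks, one per piece, and to build each block by composing the three subcircuits $F$, $V$, and $G$ described above the statement. A Cox process is a Poisson process whose intensity $\lambda(t)$ is itself random with law $\mathcal{F}$, and the increments $Y_j$ are I.I.D.\ with law $\mathcal{G}$; consequently, once the intensity register is fixed, the joint amplitude of Eq.~(\ref{eq:con2}) factorizes across pieces. I would therefore reduce the theorem to preparing, per piece, the triple $(\lambda_j,\tau_j,Y_j)$ in which $\tau_j$ is conditionally exponential given $\lambda_j$ and $Y_j$ is drawn from $\mathcal{G}$ independently.

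Concretely, the per-block construction proceeds in three ordered stages. First, apply $F$ in parallel on $n$ disjoint registers of $q_\mathcal{F}$ qubits to synthesize the intensities $\{\lambda_j\}$. Second, prepare the holding times $\{\tau_j\}$ by controlled-$V$ operators acting on the time registers, conditioned on the matching intensity register; correctness here is exactly Theorem~\ref{thm:1} applied rate-by-rate, since for each classical value of $\lambda_j$ the operator $V$ produces the truncated exponential law $\mathbb{P}[\tau_j\mid\lambda_j]$. Third, apply $G$ in parallel on $n$ registers of $q_\mathcal{G}$ qubits to produce the increments $\{Y_j\}$. The qubit count then follows by addition: each piece carries $q_\mathcal{F}$ qubits for $\lambda_j$, $q_\mathcal{G}$ qubits for $Y_j$, and a time register wide enough to store times truncated at error $\epsilon$, whose worst case is set by the smallest rate $\abs{\lambda}_{min}$ and hence costs $O(-\ln\epsilon/\abs{\lambda}_{min})$; summing over the $n$ pieces gives the claimed $O(n(q_\mathcal{F}+q_\mathcal{G}-\ln\epsilon/\abs{\lambda}_{min}))$.

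For the depth, the three preparation layers act on disjoint registers, so the cost of running $G$ (depth $q_\mathcal{G}$) alongside the chain ``prepare $\lambda_j$, then drive the controlled exponential'' (depth $q_\mathcal{F}+2^{d_\mathcal{F}}$) is their maximum $\max\{q_\mathcal{G},\, q_\mathcal{F}+2^{d_\mathcal{F}}\}$. The two representations then diverge only in the final cumulative-sum step. In the holding-time representation the stored state values $X_j=\sum_{j'\le j}Y_{j'}$ must still be accumulated from the increments by a chain of $n$ adders of width $q_\mathcal{G}$, contributing the $nq_\mathcal{G}$ term, while the holding times are kept verbatim. In the increment representation the roles swap: the increments are stored directly, but the cumulative times $T_j=\sum_{j'\le j}\tau_{j'}$ must be formed by a chain of $n$ adders on time registers of width $\sim-\ln\epsilon/\lambda_{min}$, contributing the $-n\ln\epsilon/\lambda_{min}$ term. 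This accounts for both depth bounds.

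The main obstacle is the controlled-$V$ stage, where the control register $\lambda_j$ is genuinely in superposition rather than classical. Preparing $\tau_j$ then amounts to realizing $\sum_\lambda \alpha_\lambda\ket{\lambda}\otimes V_\lambda\ket{0}$, an exponential state coherently entangled with every branch of $\mathcal{F}$. Because each distinct intensity demands a different rotation-angle schedule in the Theorem~\ref{thm:1} construction, one must in effect apply one controlled-rotation pass per reachable value of $\lambda$, and it is this enumeration over the branches of $\mathcal{F}$ that yields the $2^{d_\mathcal{F}}$ factor in the depth. The careful part of the argument is to check that these passes interleave across the $n$ parallel blocks without register collisions, and that the truncation error, now aggregated over all intensity branches and all $n$ pieces, remains bounded by $\epsilon$; the remaining qubit and gate tallies are then routine additive bookkeeping over $F$, $V$, and $G$.
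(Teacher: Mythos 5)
Your construction matches the paper's own proof essentially step for step: parallel $F$ operators for the intensities, controlled-$V$ operators enumerating the digitally encoded values of $\lambda_j$ to produce the conditionally exponential holding times via Theorem~\ref{thm:1}, parallel $G$ operators for the increments, and a final chain of $n$ adders whose width ($q_\mathcal{G}$ versus $-\ln\epsilon/\lambda_{min}$) distinguishes the two representations. If anything your bookkeeping is slightly more consistent with the theorem statement than the paper's appendix, which writes the enumeration cost as $2^{q_\mathcal{F}}$ rather than the $2^{d_\mathcal{F}}$ appearing in the theorem; your identification of the exponential factor with the number of reachable intensity values is the same idea.
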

 
 As the details of the QCTSP preparation procedure have been given above, the results are summarized in table \ref{tab:tab2}. From a high-level view, the computation and storage resource of the preparation procedure is totally determined by the time length $T$ and the memory length $\mathcal{ML}(X)$ of the underlying QCTSP $X(t)$. By our compressed encoding method and the corresponding preparation procedure, the number of copies of the state space is reduced from $O(T)=O(n\tau_{avg})$ of the uniform sampling method to $O(n\ln{\tau_{avg}})$ of the QCTSP method, leading an exponential reduction of qubit number on the parameter $\tau_{avg}$. By the encoding method and the observation on the holding time of the memory-less QCTSP, the circuit depths can be optimized from $O(T\ln{nS})=O(n\tau_{avg}\ln{(nS\tau_{avg})})$ of the uniform sampling method to $O(n\ln{(nS)})$ of the QCTSP method, also making an exponential reduction of qubit number on the parameter $\tau_{avg}$.

\section{INFORMATION EXTRACTION}\label{sec:extract}
Besides the state preparation problem, another challenge of QCTSP is to decode the compressed process and rebuild the  information  of interest. 
To extract desired information from QCTSP, one needs to evaluate its continuous counterpart, i.e., the integral of random variable $X$ on time $t$, instead of the discrete summation $\mathbb{E}[f(S_n)]=\sum_{\mathbf{j}\in K^n}f(\text{sum}\{\mathbf{x(j)}\})\mathbb{P}[\mathbf{X}=\mathbf{x(j)}]$ as studied in \cite{blank2021quantum}. 

\subparagraph{Weightless Integral.} 
\begin{figure}[hp]
\includegraphics[width=0.95\textwidth]{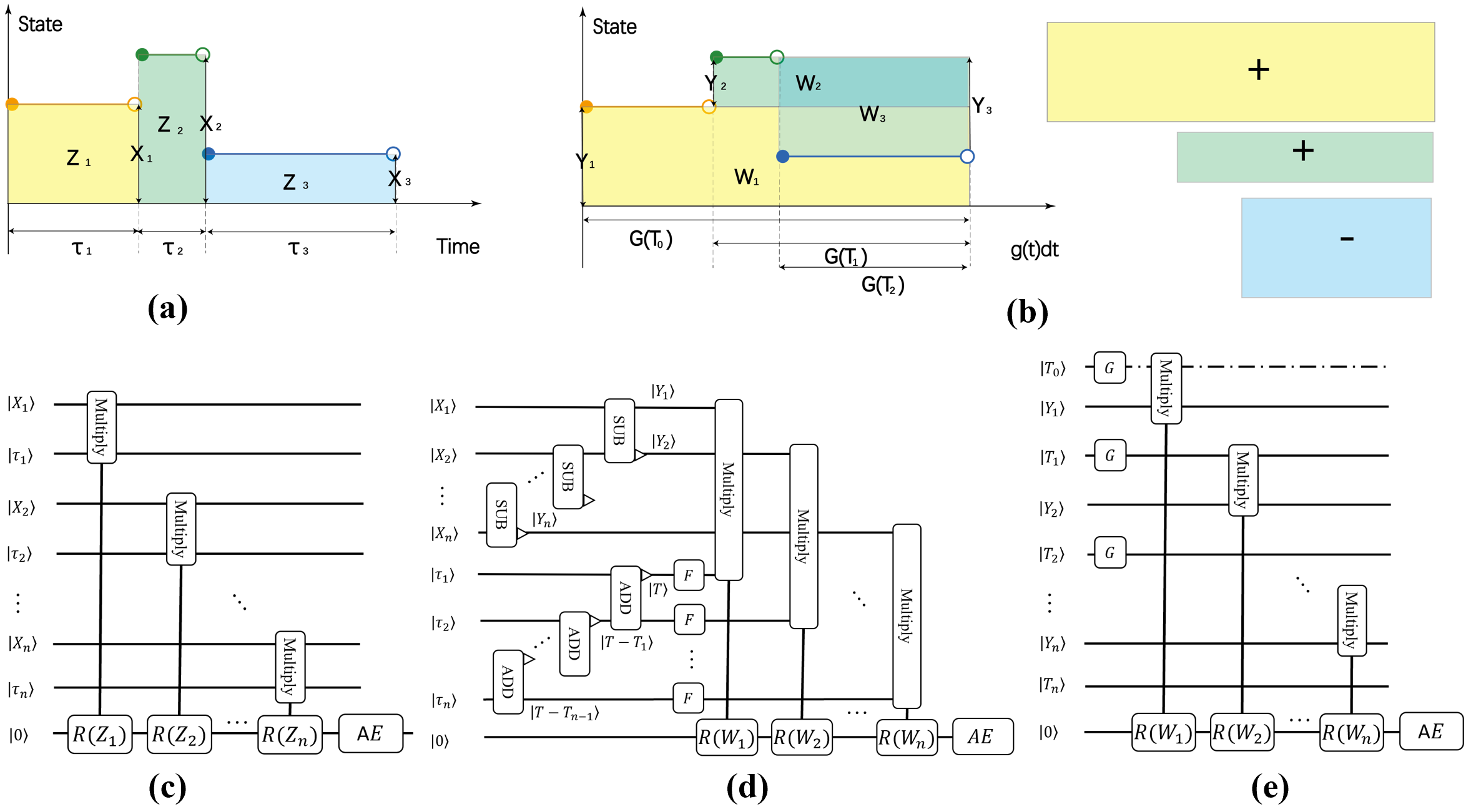}
\caption{\textbf{Information Extraction for Quantum Continuous Time Stochastic Process.} \textbf{(a)} The integral $I(X,T)=\int_{t=0}^TX(t)\,dt$ is represented as a summation of the area of rectangles derived from vertical division $I=\sum Z_j$. 
\textbf{(b)} By a coordinate transformation $t\rightarrow g(t)$, the integral  $J(X,T)=\int_{t=0}^Tg(t)X(t)\,dt$  can be represented as a summation of the directed area of rectangles derived from horizontal division $J=\sum W_j$: the yellow and green boxes both correspond to positive increments $Y_{1,2}$ and area $W_{1,2}$, while the blue box corresponds to a negative increment $Y_3$ and $W_3$.
\textbf{(c)} The circuit for evaluating $I(t)$ is shown. 
Supposing that $X_j$ and $\tau_j$ have been prepared as the input state, $n$ quantum multipliers are introduced to evaluate $Z_j=X_j\tau_j$: Rotation operators controlled by $Z_j$ are implemented on the target qubit, followed by a standard amplitude estimation subcircuit in the red box.
\textbf{(d)} The circuit for evaluating $J(t)$ of holding time representation is shown. Supposing that $X_j$ and $\tau_j$ have been prepared as the input state, then a sequence of subtractor operators is introduced to derive the increments $Y_j$, and another sequence of adder operators is introduced to evaluate the cumulative time from ending time $T-T_{j-1} = \sum_{j'=j}^n \tau_j'$). 
Hence $\int^T_{T_{j-1}}g(t)\,dt=F(T-T_{j-1})$ can be evaluated through the integral operators $F$. 
Following that are the desired variables $W_j=Y_jF(T-T_{j-1})$ by quantum multiplier operators: Rotation operators controlled by $W_j$ are implemented on the target qubit, followed by a standard amplitude estimation subcircuit in the red box.
\textbf{(e)} The circuit for evaluating $J(t)$ of increment representation is shown. Supposing that $Y_j$ and $T_j$ have been prepared as the input state, one more qubit is introduced to denote the beginning time $T_0$, followed by unitary operators $G$ act on $T_j$ to derive $G(T_{j-1})$. 
Then a sequence of multiplier operators are implemented to compute the desired product $W_j=Y_j G(T_{j-1})$: 
Rotation operators controlled by $W_j$ are implemented on the target qubit, followed by a standard amplitude estimation subcircuit in the red box. The circuit's depth is  less than in subfigure \textbf{(d)} as a consequence of increment representation's advantage on integral calculation.}
\label{extraction}
\end{figure}
More specifically, the expectation value of an integrable function $f:\mathbb{R}\rightarrow\mathbb{R}$ of the random variable
\begin{equation}
I(X,T)=\int_{t=0}^TX(t)\,dt
\end{equation}
can be efficiently evaluated via a framework developed as follows: First of all, the integral can be divided into vertical boxes (as shown in Figure \ref{extraction} \textbf{(a)}) as a Riemann summation 
$$I(X,T)=\int_{t=0}^TX(t)\,dt=\sum_{j=0}^nX_j\tau_j.$$
Secondly, the area $Z_j$ of each box is translated into multiplication of the holding time representation encoding variables $X_j$ and $\tau_j$. Benefiting from the encoding method, this expression is quite simple (as illustrated in Figure \ref{extraction} \textbf{(c)}).
Thirdly, the expected value of the integrable function $f(\sum_{j=0}^nX_j\tau_j)$ can be evaluated through a Fourier approximation $f_{P,L}(x)=\sum_{l=-L}^L c_l e^{i\frac{2\pi l}{p}x}$ as a $P-periodic$ function of order $L$:
\begin{equation*}
\sum_{l=-L}^L c_l (\mathbb{E}[\cos{(\frac{2\pi l}{p}(\sum_{j=1}^nZ_j))}]+i\mathbb{E}[\sin{(\frac{2\pi l}{p}(\sum_{j=0}^nZ_j))}]).
\end{equation*}
Each term $\mathbb{E}[\cos{(\frac{2\pi l}{p}(\sum_{j=0}^nZ_j))}]$ and $\mathbb{E}[\sin{(\frac{2\pi l}{p}(\sum_{j=0}^nZ_j))}]$ can be evaluated via rotation gates on the target qubit controlled by $Z_j$ followed by a standard amplitude estimation algorithm. Leaving the detailed proof in the appendix, one has:
\begin{thm}\label{thm:8}
\textbf{(Evaluating I(t))}Suppose that a QCTSP $\{X(t):t\ge0\}$ is given via holding time representation with $n$ steps, $l_x, l_\tau$ qubits for each pair $X_j, \tau_j$, and $\mathcal{P}$ circuit depth. Also, one has that $f_{P,L}(x)$ is the $P-periodic$ $L-order$ Fourier approximation of the desired integrable function of the random variable $f:\mathbb{R}\rightarrow\mathbb{R}$. 
Then given the error $\epsilon$, the expectation value of $f(I(X,T))=f(\int_{t=0}^TX(t)\,dt)$ can be efficiently evaluated within $O(\mathcal{P}+nl_xl_\tau)$ circuit depth and $O(\frac{LP}{\epsilon}(\mathcal{P}+nl_xl_\tau))$ time complexity.
(see proof in Appendix \ref{pf:7})
\end{thm}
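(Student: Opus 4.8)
The plan is to build a single oracle circuit $\mathcal{A}$ that prepares the QCTSP, imprints the frequency-$\omega_l$ phase of the pathwise area onto an ancilla, and exposes $\mathbb{E}[\cos\omega_l I]$ and $\mathbb{E}[\sin\omega_l I]$ as measurement probabilities; amplitude estimation then reads these off, and the $2L+1$ Fourier modes are reassembled into $\mathbb{E}[f_{P,L}(I)]$. First I would use the piecewise-constant structure of $X(t)$ on $\bar{\Omega}_n$ to make the Riemann identity $I(X,T)=\int_0^T X(t)\,dt=\sum_{j=1}^n X_j\tau_j$ exact rather than approximate, so that on the prepared state of Eq.(\ref{eq:con1}) every computational-basis path $\ket{\bm{k,x(k),t}}$ carries the definite value $I=\sum_j x_{j,k_j}t_j$. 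The task then reduces, for a fixed frequency $\omega_l=\tfrac{2\pi l}{P}$, to rotating a target ancilla by the angle $\omega_l I=\omega_l\sum_j X_j\tau_j$.

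Rather than writing $I$ into a register through an explicit multiplier and adder chain, I would realize the $Z_j$-controlled rotation of the text directly as a product of doubly-controlled rotations: for each piece $j$ and each bit pair $(a,b)$ of the registers holding $X_j$ and $\tau_j$, a rotation of angle $\omega_l 2^{a+b}$ doubly controlled on those two bits contributes $\omega_l 2^{a+b}X_j^{(a)}\tau_j^{(b)}$, so that summing over bits and pieces builds exactly $\omega_l\sum_j X_j\tau_j=\omega_l I$. These act on a common ancilla and hence compose sequentially: $l_x l_\tau$ gates per piece and $n l_x l_\tau$ in total, which together with the preparation depth $\mathcal{P}$ accounts for the claimed per-oracle depth $O(\mathcal{P}+n l_x l_\tau)$. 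Applying $R_y(2\omega_l I)$ to the ancilla prepared in $\ket{0}$ then leaves $\sum_{\text{path}}p(\text{path})\ket{\text{path}}(\cos(\omega_l I)\ket{0}+\sin(\omega_l I)\ket{1})$, and a Hadamard-test arrangement converts $\mathbb{E}[\cos\omega_l I]$ and $\mathbb{E}[\sin\omega_l I]$ into the success probabilities of designated measurement outcomes.

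A standard amplitude-estimation subroutine returns each such expectation to additive error $\delta$ using $O(1/\delta)$ coherent repetitions of the depth-$O(\mathcal{P}+n l_x l_\tau)$ oracle, which is where the quadratic speed-up over classical Monte Carlo enters. Assembling $f_{P,L}(I)=\sum_{l=-L}^L c_l e^{i\omega_l I}$ from the $2L+1$ estimated modes yields $\mathbb{E}[f_{P,L}(I(X,T))]$, and the per-oracle depth is unchanged across modes.

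Finally I would propagate the errors. Writing $\widehat{\mathbb{E}}[e^{i\omega_l I}]$ for the amplitude-estimation output, linearity gives $|\mathbb{E}[f_{P,L}(I)]-\sum_l c_l\widehat{\mathbb{E}}[e^{i\omega_l I}]|\le\sum_l|c_l|\,\delta$, so choosing the uniform per-mode precision $\delta=\epsilon/\sum_l|c_l|$ makes the assembled estimate $\epsilon$-accurate. Invoking the bound $\sum_l|c_l|=O(P)$ for the $P$-periodic order-$L$ approximant, each mode then costs $O(1/\delta)=O(P/\epsilon)$ calls to the oracle, and summing over the $2L+1$ modes yields the total $O(LP/\epsilon)$ queries and hence runtime $O(\tfrac{LP}{\epsilon}(\mathcal{P}+n l_x l_\tau))$. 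The hard part will be exactly this accounting: justifying the $\sum_l|c_l|=O(P)$ bound for the admissible class of integrands, and guaranteeing that interpreting every rotation angle $\omega_l I$ modulo $2\pi$ does not let the $P$-periodic aliasing corrupt paths whose area $I$ exceeds the period, which forces $P$ to dominate the support of $I$ and closes the loop between the Fourier-truncation error and the query cost.
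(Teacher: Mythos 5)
Your proposal follows essentially the same route as the paper's proof in Appendix \ref{pf:7}: the exact Riemann identity $I=\sum_j X_j\tau_j$, accumulation of the phase $\omega_l\sum_j X_j\tau_j$ on an analog ancilla via bit-pairwise doubly-controlled rotations (which is precisely the paper's Lemma \ref{thm:10} ``quantum multiplier''), and amplitude estimation of the $2L+1$ Fourier modes at cost $O(1/\delta)$ each. Your final error-propagation step, fixing $\delta=\epsilon/\sum_l|c_l|$ and invoking $\sum_l|c_l|=O(P)$, is sound and in fact makes explicit where the factor of $P$ in the paper's stated complexity $O(\frac{LP}{\epsilon}(\mathcal{P}+nl_xl_\tau))$ comes from, a point the paper leaves implicit by deferring to the standard treatment in its reference.
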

\subparagraph{Weighted Integral.} Moreover, this framework applies to the more complicated and generalized situations where the time structure is considered and thus has no discrete correspondence. By introducing a time-dependent function $g(t)$, we can evaluate the expectation value of an integrable function $f:\mathbb{R}\rightarrow\mathbb{R}$ of the random variable
\begin{equation}
J(X,T)=\int_{t=0}^Tg(t)X(t)\,dt.
\end{equation}
Since the time structure $g(t)$ is considered, the integral is divided into horizontal directed boxes(as shown in Figure \ref{extraction} \textbf{(b)}):
$$J(X,T)=\sum_{j'=1}^nY_{j'}(\sum_{j=j'}^nG_j)=\sum_{j=1}^nW_j.$$
Due to a trade-off on circuit depth and qubits number(see Figure \ref{extraction} \textbf{(d)} and Figure \ref{extraction} \textbf{(e)} for reference), this summation can be evaluated via either holding time representation or increment representation. A similar Fourier approximation and amplitude estimation are employed to compute
$$\sum_{l=-L}^L c_l (\mathbb{E}[\cos{(\frac{2\pi l}{p}(\sum_{j=1}^nW_j))}]+i\mathbb{E}[\sin{(\frac{2\pi l}{p}(\sum_{j=0}^nW_j))}]),$$
and hence the following result is given:
\begin{thm}\label{thm:9}
\textbf{(Evaluating J(t))} 

\noindent\textbf{i)} Suppose a QCTSP $\{X(t):t\ge0\}$ of holding time representation given with $n$ steps, $l_x, l_\tau$  qubits for each pair  $X_j, \tau_j$, and $\mathcal{P}$ circuit depth. $g(t)$ is a function whose integral can be efficiently prepared with $\mathcal{F}$ circuit depth. Also, one has that $f_{P,L}(x)$ is the $P-periodic$ $L-order$ Fourier approximation of the desired integrable function of the random variable $f:\mathbb{R}\rightarrow\mathbb{R}$. Then given the error $\epsilon$, the expectation value of $f(J(X,T))=f(\int_{t=0}^Tg(t)X(t)\,dt)$ can be efficiently evaluated within $\mathcal{P}+\mathcal{F}+nl_xl_\tau+n\max{(l_x,l_\tau)}$ circuit depth and $O(\frac{LP}{\epsilon}(\mathcal{P}+\mathcal{F}+nl_xl_\tau+n\max{(l_x,l_\tau)}))$ time complexity.

\noindent\textbf{ii)} Suppose a QCTSP $\{X(t):t\ge0\}$ of increment representation given with $n$ steps, $l_Y, l_T$  qubits for each pair  $Y_j, T_j$, and $\mathcal{P}$ circuit depth. $g(t)$ is a function whose integral can be efficiently prepared with $\mathcal{G}$ circuit depth. Also, one has that $f_{P,L}(x)$ is the $P-periodic$ $L-order$ Fourier approximation of the desired integrable function of the random variable $f:\mathbb{R}\rightarrow\mathbb{R}$. Then given the error $\epsilon$, the expectation value of $f(J(X,T))=f(\int_{t=0}^Tg(t)X(t)\,dt)$ can be efficiently evaluated within $O(\mathcal{P}+\mathcal{G}+nl_Yl_T)$ circuit depth and $O(\frac{LP}{\epsilon}(\mathcal{P}+\mathcal{G}+nl_Yl_T))$ time complexity. 
(see proof in Appendix \ref{pf:8})
\end{thm}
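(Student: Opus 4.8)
The plan is to reduce both parts of the theorem to the Fourier-plus-amplitude-estimation machinery already established for Theorem \ref{thm:8}, the only genuinely new ingredient being the weighting by $g(t)$ and the consequent trade-off between the two representations. First I would establish the directed-box identity that converts the weighted integral into a register summation. Since $X(t)=X_j$ on $[T_{j-1},T_j)$, setting $G_j=\int_{T_{j-1}}^{T_j}g(t)\,dt$ gives $J(X,T)=\sum_{j=1}^n X_jG_j$; substituting $X_j=\sum_{j'=1}^{j}Y_{j'}$ and swapping the order of summation yields $J(X,T)=\sum_{j'=1}^n Y_{j'}\sum_{j=j'}^n G_j=\sum_{j'=1}^n W_{j'}$, which is precisely the horizontal directed-box grouping drawn in FIG. \ref{extraction} (b). This identity is representation-agnostic; the two parts differ only in how the $G_j$ and $W_{j'}$ are assembled in-circuit.

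Next I would build the arithmetic circuit that loads $\sum_{j'}W_{j'}$ into a work register, and this is where the two representations diverge. In the increment representation (part ii) the cumulative times $T_j$ are stored directly, so one applies the prepared integral of $g$ (depth $\mathcal G$), forms the suffix sums $\sum_{j\ge j'}G_j$, and multiplies by $Y_{j'}$; each product of an $l_Y$-bit by an $l_T$-bit number costs $O(l_Yl_T)$, and the $n$ of them account for the $nl_Yl_T$ term, with the state-preparation depth $\mathcal P$ in front. In the holding-time representation (part i) the $T_j$ are not available, so one must first synthesize $T_j=\sum_{j''\le j}\tau_{j''}$ with a layer of adders and accumulate the running weighted sum; this extra accumulation is exactly the source of the additional $n\max(l_x,l_\tau)$ summand, after which the weighting (depth $\mathcal F$) and the $n$ products $X_jG_j$ (contributing $nl_xl_\tau$) proceed as before. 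Making the per-piece operations parallel across the $n$ pieces, rather than strictly serial, is what yields these depths instead of a naive extra factor of $n$.

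With $\sum_{j'}W_{j'}$ in hand I would expand $f$ in its $P$-periodic order-$L$ Fourier series $f_{P,L}(x)=\sum_{l=-L}^L c_l e^{i2\pi lx/P}$, so that $\mathbb E[f_{P,L}(\sum_j W_j)]=\sum_{l=-L}^L c_l\big(\mathbb E[\cos(\tfrac{2\pi l}{P}\sum_j W_j)]+i\,\mathbb E[\sin(\tfrac{2\pi l}{P}\sum_j W_j)]\big)$. Exactly as in Theorem \ref{thm:8}, each cosine and sine expectation is encoded as the angle of a single-qubit rotation controlled by the $W_{j'}$ registers and then read out by amplitude estimation, which replaces the classical $1/\epsilon^2$ sampling cost by $O(1/\epsilon)$ queries. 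Summing the $O(L)$ harmonics and tracking the $P$-dependence of the Fourier normalization gives the stated $O(\tfrac{LP}{\epsilon}(\cdots))$ time complexity, where $(\cdots)$ is the per-query circuit depth computed above. The total error splits into the Fourier truncation error, governed by $L$ and $P$ through the smoothness of $f$, and the amplitude-estimation error $\epsilon$, which are balanced exactly as in Theorem \ref{thm:8}.

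The main obstacle I anticipate is the depth bookkeeping of the second step: justifying rigorously why the holding-time representation must pay the extra $n\max(l_x,l_\tau)$ to rebuild and accumulate the cumulative times while the increment representation does not, and conversely why the weighting circuit enters at depth $\mathcal F$ versus $\mathcal G$ under the two encodings. Everything else is a direct lift of the weightless argument, but pinning down which adder and multiplier layers genuinely parallelize across pieces — and hence do not contribute a spurious factor of $n$ to the depth — is the delicate part that actually distinguishes the two bounds.
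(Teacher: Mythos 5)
Your proposal follows essentially the same route as the paper's proof: the identity $J(X,T)=\sum_j X_jG_j=\sum_{j'}Y_{j'}\sum_{j\ge j'}G_j=\sum_{j'}W_{j'}$ obtained by swapping the order of summation, the per-representation circuit accounting (with the extra $n\max(l_x,l_\tau)$ in the holding-time case coming from rebuilding the increments and cumulative times, and the suffix sums absorbed into a single integral operator $F(T-T_{j'-1})$ or $G(T_{j'-1})$ applied to the time register), and the controlled-rotation quantum multiplier followed by the Fourier-series-plus-amplitude-estimation readout inherited from Theorem \ref{thm:8}. The approach and the resulting depth and time bounds match the paper's argument.
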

\noindent Therefore our algorithms would enable the quantum-enhanced Monte Carlo method to apply to path-dependent and continuous stochastic processes, including the time-weighted expectation $\mathbb{E}[f(\frac{1}{T}\int_{t=0}^TtX(t)\,dt)]$ and the exponential decay time-weighted expectation $\mathbb{E}[f(\int_{t=0}^Te^{-\alpha t}X(t)\,dt)]$ that are usually used in mathematical finance and quantitative trading.

\subparagraph{\noindent History-Sensitive Information.} Besides the global information defined on the whole path studied above, there is another category of history-sensitive problems that plays an essential role in statistics and finance, including the first-hitting time of Brownian motion, surviving time of ruin theory, and survival analysis, to name but a few. The most apparent difference of the first-hitting problem is that the path-dependent information needs to be extracted when knowing the whole history path. Thus it can not be derived directly by a quantum walk (as illustrated in Figure \ref{first_hitting}). Formally, a first-hitting problem can be regarded as evaluating the following
\begin{equation}\label{eq:21}
K(X, T, B)=\inf{\{t:X(t)>B,T>t>0\}}.
\end{equation}
The basic idea to evaluate Eq.~\eqref{eq:21} is to consider $n$ flag qubits storing the comparison result of each piece in the sampling path. As a discontinuous jump representing extreme events or market hits always happens at the end of a piece, this method shall work much more precisely than the uniform sampling method. Formally speaking, supposing a given bound $B$, to derive the first hitting time, a bound information register and a flag register is introduced with $O(\log{B})$ and $n$ qubits, respectively. For the $j^{th}$ piece of QCTSP, the remained value is derived through a controlled subtraction whose control qubit is $F_{j-1}$ and carry-out qubit is $F_j$.
And a CNOT gate is implemented to flip the $F_j$ qubit controlled by the $F_{j-1}$ qubit. Two classes are distinguished for each piece: Before the first hit, the remaining bound is $\sum_{j'=1}^{j-1}Y_{j'} \le B$, and the flag register is $\ket*{\underbrace{11...1}_{j-1}0...0}_{flag}$. This first class can be divided into two cases: If the hit does not happen at the $j^{th}$ piece, the state of the carry-out qubit $F_j$  after the subtraction remains $\ket{0}$, and then is flipped by the CNOT gate. Hence one has: 
\begin{figure}[]
\centering
\includegraphics[width=0.8\textwidth]{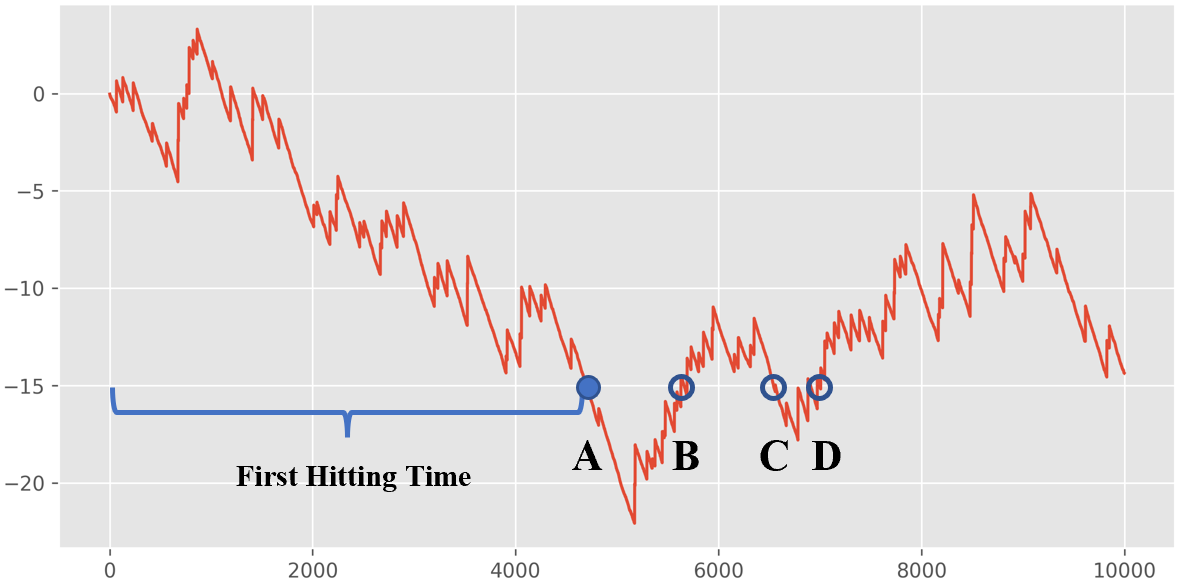}
\caption{As shown in this figure, the point \textbf{A} represents the first hitting event with the desired first hitting time, while the points \textbf{B}, \textbf{C}, \textbf{D} should not be captured. This information is determined by the knowledge of the whole past path, hence being \textit{history-sensitive}, and can not be extracted by  a quantum walk.}
\label{first_hitting}
\end{figure}
\noindent\textbf{case 1} ($\sum_{j'=1}^{j-1}Y_{j'} \le B$ and $\sum_{j'=1}^{j}Y_{j'} < B$):
\begin{equation*}
\begin{split}
&\ket*{B-\sum\nolimits_{j'=1}^{j-1}Y_{j'}}_{bound}\ket*{\underbrace{11...1}_{j-1}0...0}_{flag}\\
\xrightarrow[\text{subtractor}]{\text{controlled }}&\ket*{B-\sum\nolimits_{j'=1}^{j}Y_{j'}}_{bound}\ket*{\underbrace{11...1}_{j-1}0...0}_{flag}\\
\xrightarrow{CNOT}&\ket*{B-\sum\nolimits_{j'=1}^{j}Y_{j'}}_{bound}\ket*{\underbrace{11...1}_{j}0...0}_{flag}.\\
\end{split}
\end{equation*}
If the hit happens at the $j^{th}$ piece, the state of the carry-out qubit $F_j$  after the subtraction turns to be $\ket{1}$, and then is flipped by the CNOT gate. Hence one has: 

\noindent\textbf{case 2} ($\sum_{j'=1}^{j-1}Y_{j'} \le B$ and $\sum_{j'=1}^{j}Y_{j'} > B$):
\begin{equation*}
\begin{split}
&\ket*{B-\sum\nolimits_{j'=1}^{j-1}Y_{j'}}_{bound}\ket*{\underbrace{11...1}_{j-1}0...0}_{flag}\\
\xrightarrow[\text{subtractor}]{\text{controlled }}&\ket*{B-\sum\nolimits_{j'=1}^{j}Y_{j'}}_{bound}\ket*{\underbrace{11...1}_{j}0...0}_{flag}\\
\xrightarrow{CNOT}&\ket*{B-\sum\nolimits_{j'=1}^{j}Y_{j'}}_{bound}\ket*{\underbrace{11...1}_{j-1}0...0}_{flag}\\
\end{split}
\end{equation*}
For the second class where the current piece is after the first hit $\sum_{j'=1}^{j''}Y_{j'} > B$, the flag register is $\ket*{\underbrace{11...1}_{j''}0...0}_{flag}$, and the subtraction and CNOT gates will not be implemented, and thus the result is:

\noindent\textbf{case 3} ($\sum_{j'=1}^{j''}Y_{j'} > B$ for some $j''< j$):
\begin{equation*}
\begin{split}
&\ket*{B-\sum\nolimits_{j'=1}^{j''}Y_{j'}}_{bound}\ket*{\underbrace{11...1}_{j''-1}0...0}_{flag}\\
\xrightarrow[\text{subtractor}]{\text{controlled }}&\ket*{B-\sum\nolimits_{j'=1}^{j''}Y_{j'}}_{bound}\ket*{\underbrace{11...1}_{j''-1}0...0}_{flag}\\
\xrightarrow{CNOT}&\ket*{B-\sum\nolimits_{j'=1}^{j''}Y_{j'}}_{bound}\ket*{\underbrace{11...1}_{j''-1}0...0}_{flag}\\
\end{split}
\end{equation*}
Leaving the theoretical analysis as future work, examples of option pricing and ruin theory will be given in the next section as proof of principle.

\section{APPLICATION}\label{sec:4}

In this section, two applications of QCTSP in different financial scenes are given to illustrate the wide range of applicability of QCTSP. 
In subsection \ref{sec:merton}, the first application is focused on the option pricing problem wherein the underlying stock price is no longer continuous Brownian motion and the original \textit{Black-Scholes} formula fails. By simulation of QCTSP, option pricing is solved in a different way from previous works and is consistent with the more practical \textit{Merton Jump Diffusion} formula. 
In subsection \ref{sec:ruin}, the problem of computing the ruin probability under the \textit{Collective Risk Model} is studied, introducing quantum computation into insurance mathematics.

\subsection{Option pricing in Merton Jump Diffusion Model}\label{sec:merton}

Since first being introduced into the field of financial engineering, CTSP has been proven to be a powerful tool for financial derivatives pricing \cite{merton1998applications,kwok2008mathematical}. Among those financial derivatives, the European call option is one basic instrument that gives someone the right to buy an underlying stock $S_T$ at a given strike price $K$ and maturity time $T$. The famous \textit{Black-Scholes} model is proposed to evaluate a European option \cite{black2019pricing}, and the simulation for \textit{Black-Scholes} type option pricing has been implemented on quantum computers \cite{rebentrost2018quantum, stamatopoulos2020option, martin2019towards, blank2021quantum}. However, the assumption of a constant variance log-normal distribution in the original article \cite{black2019pricing} turns out to be less realistic as the empirical studies of discontinuous returns of stock are ignored. Hence the \textit{Merton Jump Diffusion Model} was proposed to incorporate more realistic assumptions from Merton's work \cite{merton1976option}.

In the \textit{Merton Jump Diffusion Model}, the stock price $S_T$ is assumed to satisfy the following stochastic differential equation:
\begin{equation} \label{eq:22}
\begin{split}
\ln{S_T}=&\ln{S_0}+\int_{t=0}^T(r-\frac{\sigma^2}{2}-\lambda(m+\frac{v^2}{2}))\,dt\\
&+\int_{t=0}^T\sigma\,{dW(t)}\\
&+\sum_{j=1}^{Poi(T)}(J_j-1),
\end{split}
\end{equation}
where $S_0$ is the current stock price, $T$ is the maturity time in years, $r$ is the annual risk-less interest rate, $\sigma$ is the annual volatility, $\,{dW(t)}$ is the Weiner process, $Poi(T)$ is the \textit{Poisson point process}, and $J_j$ is the $j^{th}$ discontinuous jump follows a log-normally distribution. The last term on the right hand side represents the discontinuous price movements caused by such as acquisitions, mergers, corporate scandals, and fat-finger errors. Merton derives a closed form solution $MJD(S,K,T,r,\sigma,m,v,\lambda)$ to Eq.~\eqref{eq:22} as an infinite summation of \textit{Black-Scholes} formula $BS(S,K,T,r_j,\sigma_j)$ conditional on the number of the jumps and the underlying distributions of each jump:
\begin{equation}\label{eq:merton2}
MJD(S,K,T,r,\sigma,m,v,\lambda)=\sum_{j=0}^\infty BS(S,K,T,r_j,\sigma_j),
\end{equation}
where $m$ is the mean of the jump size, $v$ is the volatility of the jump size, $\lambda$ is the intensity of the underlying \textit{Poisson point process}, and  $r_j=r-\lambda(m-1)+\frac{j\ln{m}}{T}$ and $\sigma_j=\sqrt{\sigma^2+j\frac{v^2}{T}}$ are the corresponding interest rate and volatility, respectively. 
Despite being in line with empirical studies of market returns, the formula Eq.~\eqref{eq:merton2} takes the form of an infinite summation, and thus an alternative numerical method of Monte Carlo can be employed to compute the desired option price. Instead of a Brownian motion in the  \textit{Black-Scholes} model, one has to simulate, in the  \textit{Merton Jump Diffusion Model}, the more complicated case of a L\'{e}vy process where our method can be applied to make a quadratic speed-up against the classical Monte Carlo simulation. More precisely, for a European type call option given the underlying stock price $S_T$, its price  can be defined as:
\begin{equation}
\begin{split}
    price=&\max\{S_0e^{D+B_T+\sum_{j=1}^{Poi(T)}J_j}-K,0\}\\
    =&S_0e^{\max\{D+B_T+\sum_{j=1}^{Poi(T)}J_j, \ln{\frac{K}{S_0}}\}} - K,
\end{split}
\end{equation}
where $D=(r-\frac{\sigma^2}{2}-\lambda(m+\frac{v^2}{2}))T$ is the adjusted drift term on the purpose of risk neutral preferences, $B_T$ represents the Brownian motion term and follows a normal distribution $N(0, \sigma\sqrt{T})$, and $J_j$ represents the $j^{th}$ discontinuous jump and follows a normal distribution $N(m,v)$.

\begin{figure}[h!]
\centering
\includegraphics[width=0.8\textwidth]{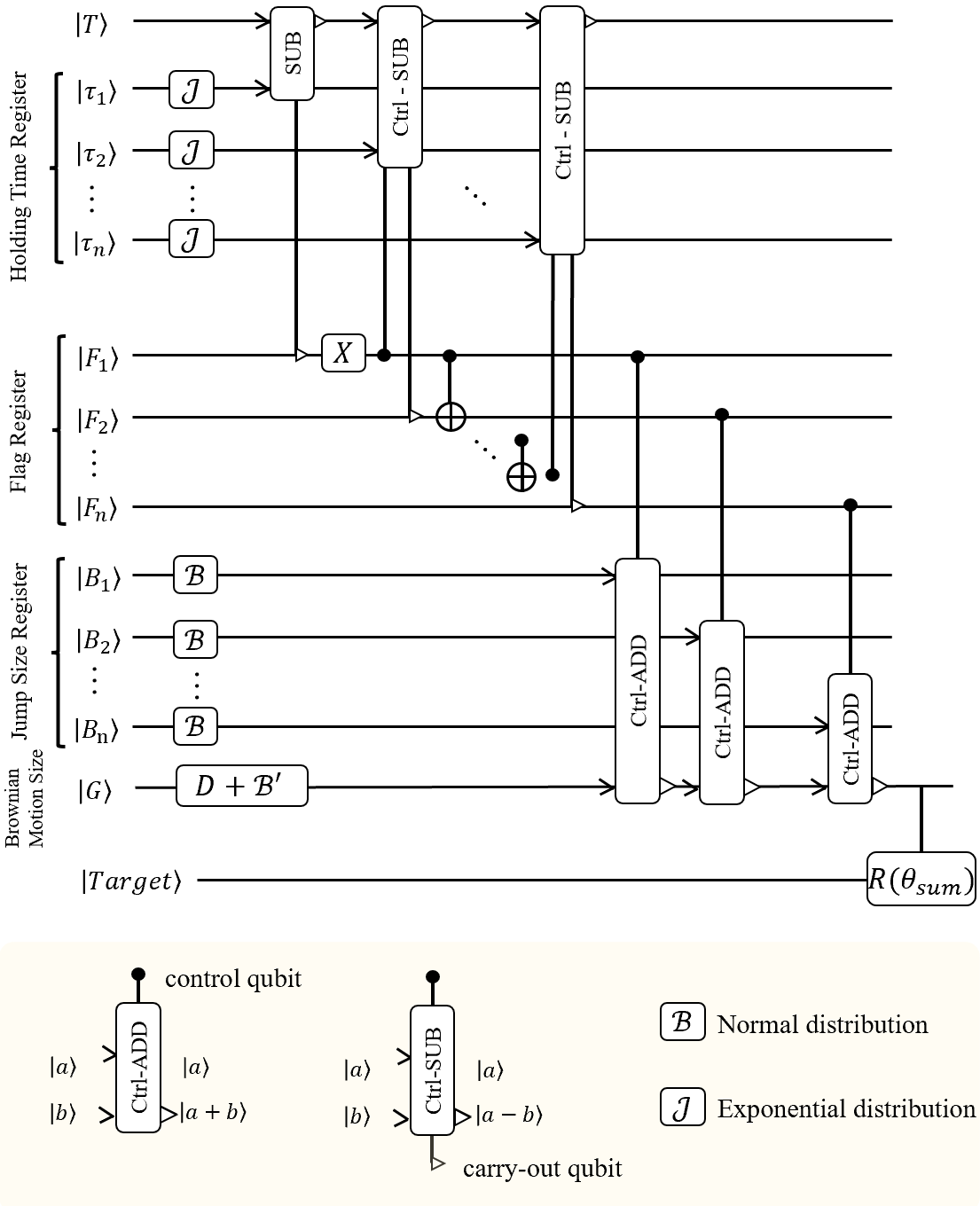}
\caption{\textbf{QCTSP Circuit of the Simulation of Merton Jump Diffusion Model.} In this figure, the circuit of the simulation of the European call option price in \textit{Merton Jump Diffusion Model} is presented. 
The holding time $\tau_j$ is prepared via parallel subcircuits $\mathcal{J}=Exp(\lambda T)$, and the jump size $B_j$ is prepared via  subcircuits $\mathcal{B}=Norm(m,v)$. 
The drift term, together with the  geometric Brownian motion, is prepared on $\ket{G}$ via the  subcircuit $\mathcal{D+B'}=Norm((r-\frac{\sigma^2}{2}-\lambda(m+\frac{v^2}{2}))T,\sigma\sqrt{T})$.  
Subtractor and controlled-subtractors, together with X gate and CNOT gates, are implemented to justify the number of jumps with output on the flag register $\ket{F_j}$. Adders controlled by the flag registers are employed to derive the final value of the stock price, followed by a rotation gate on the target qubit controlled by the signed summation on register $\ket{\theta_{sum}}$.}
\label{MertonJumpCircuit}
\end{figure}

 This L\'{e}vy process can be efficiently prepared as mentioned in subsection \ref{sec:levy}, and then the valuation procedure is implemented via the method given in section \ref{sec:extract}.  More specifically, as illustrated in Figure \ref{MertonJumpCircuit}, the quantum circuit consists of six quantum registers: the maturity time register $\ket{T}$, the holding time register $\ket{\tau_j}$, the flag register $\ket{F_j}$, the jump size register $\ket{B_j}$, the Brownian motion size register $\ket{G}$, and the target qubit $\ket{Target}$. The holding time $\tau_j$ can be prepared via parallel exponential distribution subcircuits $\mathcal{J}=Exp(\lambda T)$, and the jump size $B_j$ can be prepared via parallel normal distribution subcircuits $\mathcal{B}=Norm(m,v)$. The drift term together with the  geometric Brownian motion is prepared on $\ket{G}$ via a normal distribution subcircuit $\mathcal{D+B'}=Norm((r-\frac{\sigma^2}{2}-\lambda(m+\frac{v^2}{2}))T,\sigma\sqrt{T})$.  Subtractor and controlled-subtractors, together with X gate and CNOT gates, are implemented to justify the number of jumps that will be considered, and the result is output on the flag register $\ket{F_j}$. Adders controlled by the corresponding flag registers are employed to derive the final value of the stock price. A rotation gate is then implemented on the target qubit, controlled by the signed summation on register $\ket{\theta_{sum}}$. Then the  final value of the option price can be evaluated via measurement or an amplitude estimation. 
\begin{figure}[h]
\centering
\includegraphics[width=0.85\textwidth]{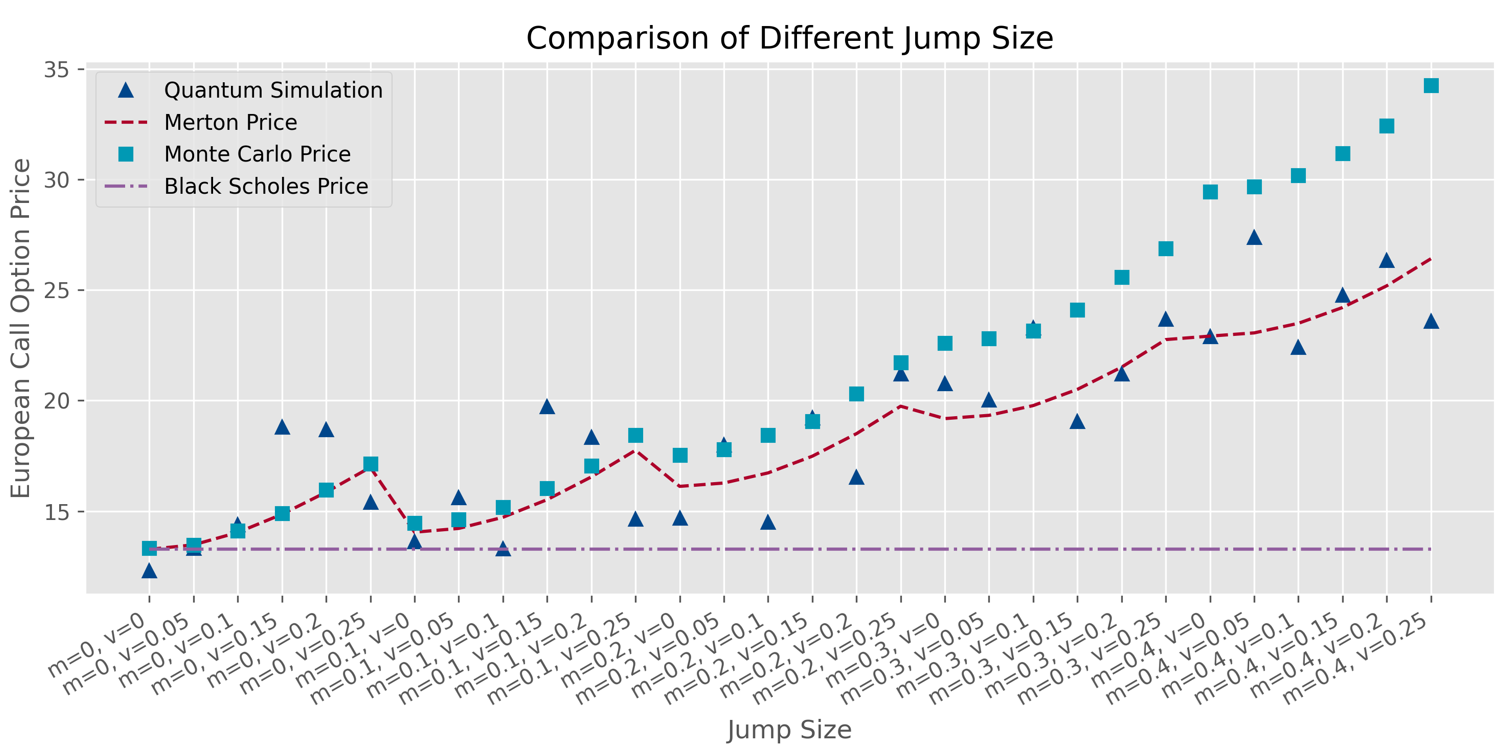}
\caption{\textbf{Comparison of QCTSP Simulation of European Call Option Price in Merton Jump Diffusion Model with Different Jump Size.} The simulation of European call option price in \textit{Merton Jump Diffusion Model} is presented in this figure. The parameters for the underlying stock price are set as $S_0=100$, $K=100$, $r=0.1$, $\sigma=0.02$, $T=1$, and  $\Delta t=\frac{1}{30}$, and the range of $m$ and $v$ are $0, 0.1, 0.2, 0.3, 0.4$ and $0, 0.05, 0.1, 0.15, 0.2, 0.25$, respectively.  As shown in the figure, the simulated prices are consistent with the Merton formula Eq.~\eqref{eq:merton2}, and the option price deviates from the \textit{Black Scholes} formula as the jump size tends to be large.}
\label{MertonCompJumpSize}
\end{figure}
\begin{figure}[hp!]
\centering
\includegraphics[width=0.68\textwidth]{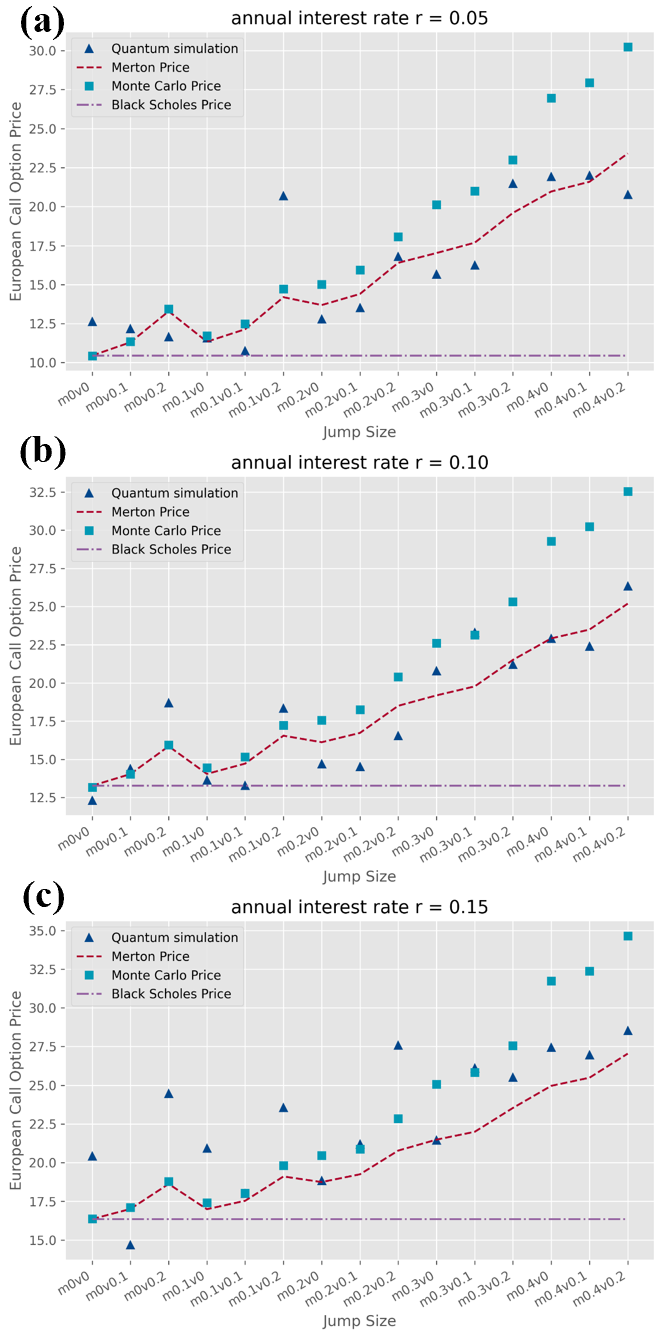}
\caption{\textbf{Comparison of QCTSP Simulation of European Call Option Price in Merton Jump Diffusion Model under Different Annual Interest.} The simulation of European call option price in \textit{Merton Jump Diffusion Model} is presented in this figure. The underlying stock prices of annual interest rate $r=0.05, 0.1, 0.15$ are given with $S_0=100$, $K=100$, $\sigma=0.02$, $T=1$, and  $\Delta t=\frac{1}{30}$, and the range of $m$ and $v$ are $0, 0.1, 0.2, 0.3, 0.4$ and $0, 0.1, 0.2$, respectively.  As shown in the figure, the simulation results are consistent with the Merton formula Eq.~\eqref{eq:merton2} while the interest rate, the average jump size, and the volatility of jump size vary, showing robustness with wide ranges of parameters.}
\label{MertonCompInterestRate}
\end{figure}
 The corresponding quantum simulation result can be found in Figure \ref{MertonCompJumpSize} and Figure \ref{MertonCompInterestRate}. In Figure \ref{MertonCompJumpSize}, the underlying stock price is given with $S_0=100$, $K=100$, $r=0.1$, $\sigma=0.02$, $T=1$, and  $\Delta t=\frac{1}{30}$, and the range of $m$ and $v$ are $0, 0.1, 0.2, 0.3, 0.4$ and $0, 0.05, 0.15, 0.2, 0.25$, respectively. Due to the restriction on the qubit number, the simulation is divided into two steps: $100$ groups of random rotation angles corresponding to the Brownian motion size. The jump sizes are generated by a classical random generator and then implemented on the target qubit. $1024$ shots of QCTSP simulated paths are repeated for each group. As shown in Figure \ref{MertonCompJumpSize}, the larger the jump size (characterized by $m$ and $v$) is, the farther the \textit{Merton Jump Diffusion} value is away from the original \textit{Black-Scholes} price as a consequence of the discontinuous jumps. Moreover, the QCTSP simulation result is consistent with the $40$ terms truncated \textit{Merton Jump Diffusion} formula (Eq.~\eqref{eq:merton2}) as well as the classical Monte Carlo simulation, characterizing the property of L\'{e}vy discontinuous jump path well. The robustness of the QCTSP method is illustrated in Figure \ref{MertonCompInterestRate}, where different annual interest rates $r=0.05, 0.1, 0.15$ are given in the three subfigures with varying jump sizes. The quantum simulation results are consistent with the Merton Formula for a wide range of different parameters.
 
\subsection{Ruin probability in Collective Risk Model}\label{sec:ruin}

Since being a fundamental means to model the stochastic world, there is no surprise that the QCTSP framework developed in this work has great potential power applied to various fields, especially ruin theory. The ruin theory plays a central role in insurance mathematics, high-frequency trading's market micro-structure theory, and option pricing \cite{gerber1998time, garman1976market, madhavan2000market, gerber1999ruin}. In ruin theory, the risk of an insurance company is assumed to be caused by random claims arriving at time $T_j=\sum_{j'=1}^j \tau_{j'}$, wherein the $j^{th}$ claim size $\xi_j$ and inter-claim time $\tau_j$ are both assumed to follow some independent and identically distributions. Hence the aggregate asset of the insurance company is a continuous time stochastic process
\begin{equation}
X(t)=u+ct-\sum_{j=1}^{Poi(t)} \xi_j,
\end{equation}
where the initial surplus is $X(0)=u$, and the premiums are received at a constant rate $c$. In the \textit{Collective Risk Model}, also known as the Cram\'{e}r Lundberg model, the claim number process $Poi(t)$ is assumed to be a Poisson process with intensity $\lambda$, and the underlying distribution of $\tau_j$ is an exponential distribution. In the Sparre Andersen model, $Poi(t)$ can be extended to a renewal process with the arbitrary underlying distribution. Despite the detailed differences between them, both of the two stochastic processes $X(t)$ in these models are L\'{e}vy processes with drift term $ct$,  and hence can be prepared easily via our method. 
\begin{equation}
Y(t)=u+ct-X(t)=\sum_{j=1}^{Poi(t)} \xi_j
\end{equation}
To go one step further, one has a great variety of ruin-related quantities fall into the category of the expected discounted penalty function, also known as the time value of ruin. And those can be easily derived through a quantum-enhanced Monte Carlo modified by us. More precisely, following the notation of Gerber and Shiu \cite{gerber1998time}, the time value of ruin is defined as 
\begin{equation} \label{eq:20}
\phi{(u)}=\mathbb{E}^{X_t}[e^{-\delta\tau} w(X(\tau-), X(\tau))\mathbb{I}_{\tau<\infty}|X(0)=u],
\end{equation}
where $\tau{(u)}=\inf{\{t:U(t)<0|U(0)=u\}}$ denotes the time of ruin, and $X(\tau-)=u+c\tau-\sum_{j=1}^{N(\tau)-1} X_j$ and $X(\tau)=u+c\tau-\sum_{j=1}^{N(\tau)} X_j$ are the surplus prior to ruin and the deficit at ruin, respectively. The expectation is taken over the probability distribution of the ruin samples, taking an interest discounting factor $e^{-\delta\tau}$ into consideration. The ultimate ruin probability $\psi(u)=\mathbb{P}[X(\tau)<0|X(0)=u,\tau<\infty]$ is exactly a special case of Eq.~\eqref{eq:20} given $\delta=0$ and $w(x,y)=1$. Since ruin always happens after a claim, 
\begin{align*}
\psi(u)=&\mathbb{P}[X(\tau)<0|X(0)=u,\tau<\infty]\\
=&\mathbb{E}^{X_t}[\mathbb{I}_{\tau<\infty}|X(0)=u]\\
=&\mathbb{E}^{X_t}[\mathbb{I}_{u+\sum_{j=1}^n (\tau_j-\xi_j)<0}|n<\infty]\\
=&\mathbb{E}^{Y_t}[\mathbb{I}_{Y(t)<u+ct}],
\end{align*}
where the indicator function $\mathbb{I}_{Y(t)<u+ct}$ can be easily derived through a sequence of controlled adder and modified subtractor on the prepared states. The detailed construction of circuits and corresponding simulation result are given in Figure \ref{circuit_ruin_probability} and Figure \ref{sim_ruin_probability}.
\begin{figure}[h!]
\centering
\includegraphics[width=0.65\textwidth]{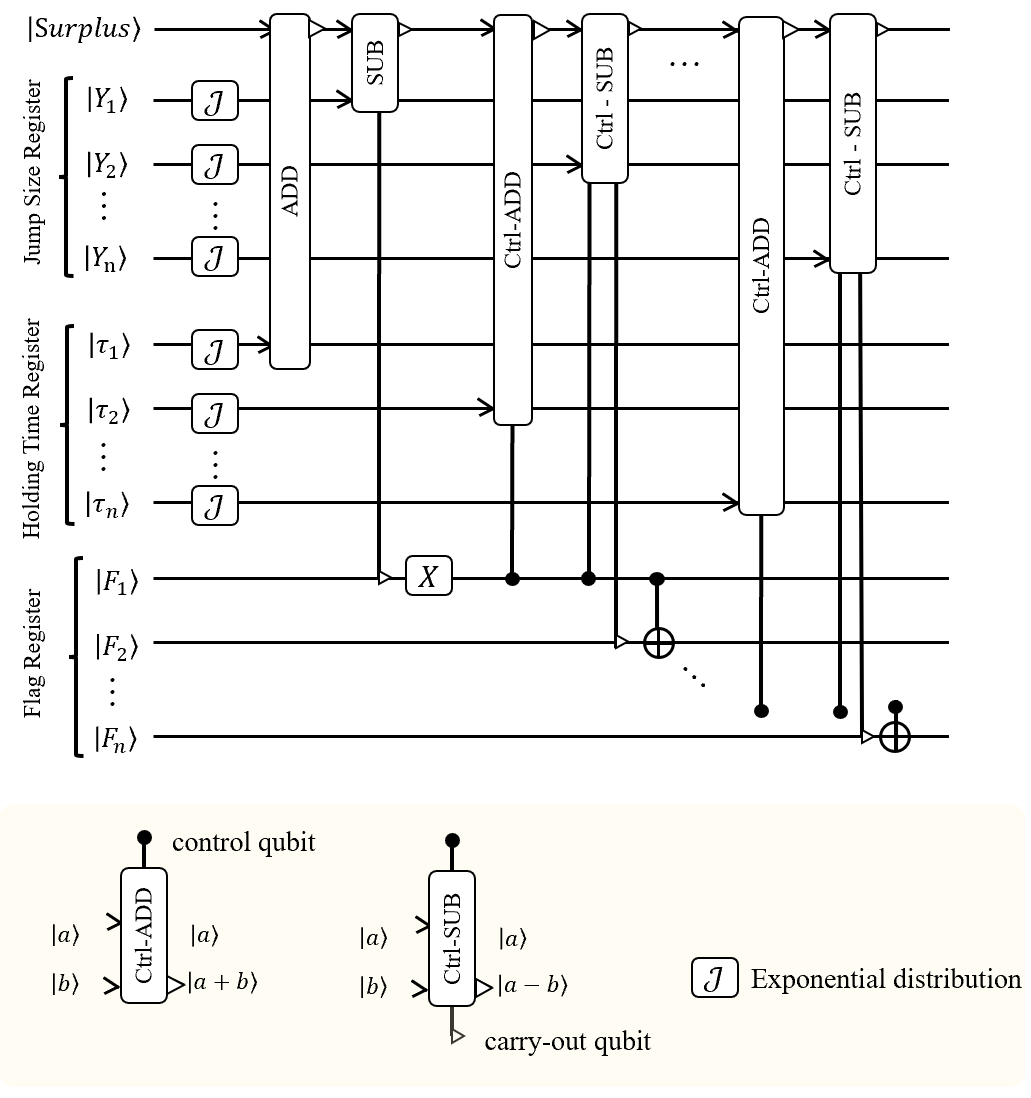}
\caption{\textbf{Quantum Circuit of Computing Ruin Probability.} The ultimate probability of the \textit{Collective Risk Model} is simulated via this quantum circuit. The receiving rate is assumed to be $1$ so that the premium can be summed directly, removing the qubits requirement and the circuit depth from function $cT$. For each piece of QCTSP, an adder is followed by a controlled-subtracor whose carry-out qubit is on the flag qubit together with a controlled flip, as discussed in subsection \ref{sec:extract}.}
\label{circuit_ruin_probability}
\end{figure}

\begin{figure}[h!]
\centering
\includegraphics[width=0.85\textwidth]{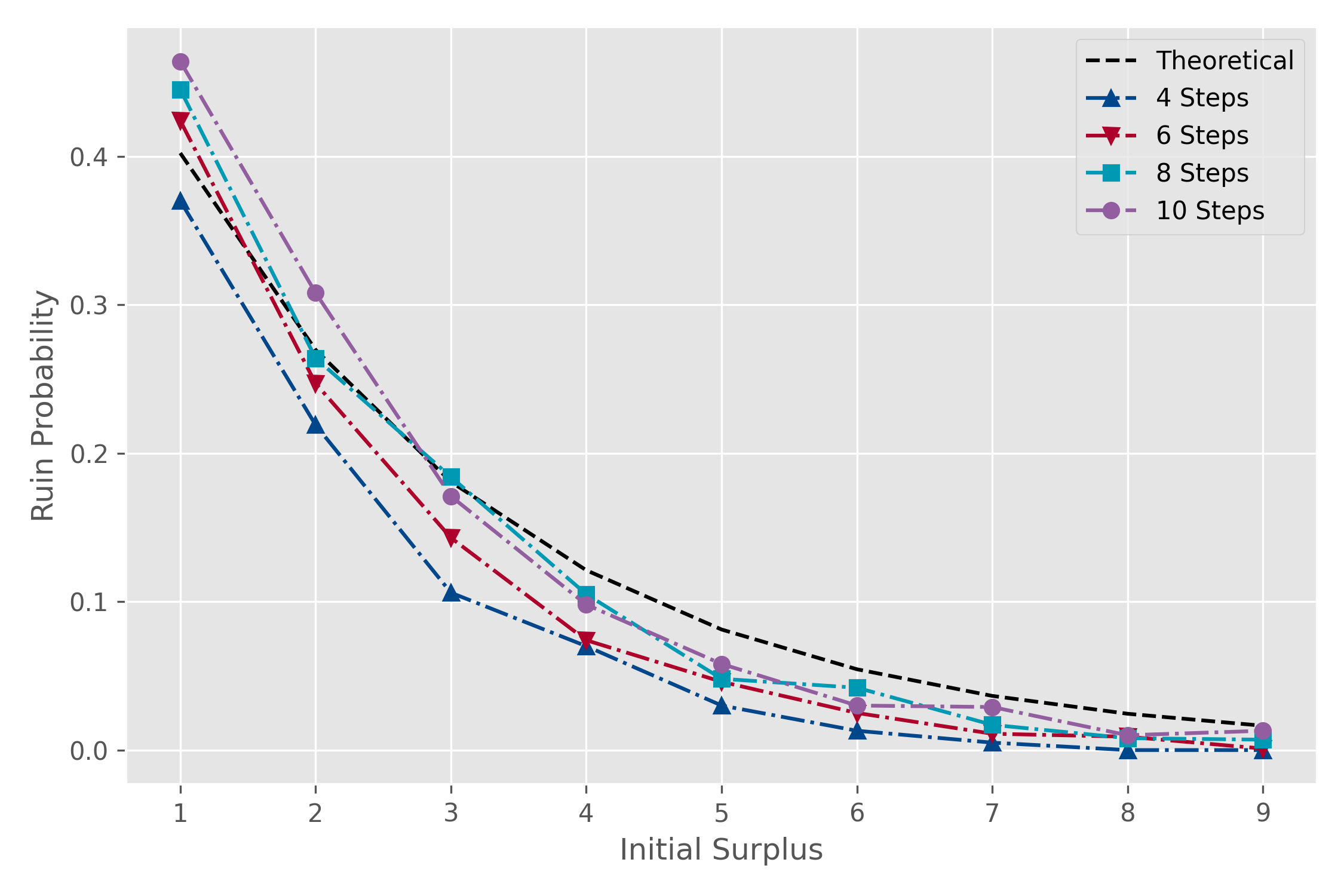}
\caption{\textbf{Simulation of Ultimate Ruin Probability.} 
The ultimate probability of the \textit{Collective Risk Model} is simulated for varying initial surplus with sample paths of different lengths. 
The intensities of inter-claim time and claim size are $\lambda_t=0.6$, and $\lambda_\xi=1$, respectively. 
The sample precision is $\epsilon=0.001$, and the premiums' receiving rate is $c=1$. 
The shot number is set to be 1000 for each simulation point. 
As depicted in the figure, the simulated ruin probabilities for varying surpluses align well with the theoretical benchmark. 
Furthermore, we notice that as the length of the sample paths increases, the deviation from the theoretical curve diminishes, indicating improved accuracy in our estimations.
It should be noticed that the ultimate ruin probability represents the probability of ruin after an infinitely long time, and it can be approximated through simulation for large values of time $T$. 
The advantage of the QCTSP method lies in its ability to simulate over much longer time intervals, allowing for a remarkably close approximation of this ultimate ruin probability.}
\label{sim_ruin_probability}\end{figure}

\section{DISCUSSION}\label{sec:5}

In this study, we have focused on the efficient state preparation and quantum-enhanced analysis of continuous time stochastic processes (QCTSP). We have successfully established a robust procedure for QCTSP preparation, allowing the state to be sensitized to discontinuous jumps, which are crucial in modeling extreme market emergencies. Notably, we have achieved significant reductions in both qubit number and circuit depth, particularly concerning the key parameter of holding time.

Furthermore, we are excited to present our novel Monte Carlo simulation method tailored for QCTSP. This groundbreaking approach has led to a remarkable quadratic speed-up compared to the classical counterpart, the continuous time stochastic process (CTSP), making QCTSP a formidable tool in studying the continuous time stochastic world.

A significant contribution of our work lies in the development of techniques for extracting \textit{weighted integral} and \textit{history-sensitive information}. These methods are of paramount importance in various fields, such as quantitative trading, time series analysis, and actuarial mathematics. By eliminating the need for additional $n$ copies of random paths and relaxing the implied restriction of the independent and identically distributed (I.I.D.) condition, our techniques open new avenues for accurate analysis and computation.

In the context of practical applications, we have successfully demonstrated two scenarios: European option pricing under the \textit{Merton Jump Diffusion Model} and the evaluation of ruin probability in the \textit{Collective Risk Model}. These applications serve as compelling evidence of QCTSP's potential and versatility in addressing real-world challenges.

There are four main reasons why QCTSP stands as a suitable candidate for practical applications in the noisy intermediate-scale quantum era:

Firstly, QCTSP does not impose input restrictions and can be implemented without assuming the need for a quantum oracle or quantum Random Access Memory (qRAM). This property enhances its versatility and applicability in various quantum algorithms, including quantum machine learning, thereby breaking through the input bottleneck often encountered in quantum computations.

Secondly, QCTSP allows for parallel preparation, thanks to a communicative equivalence class decomposition of the sample space. This decomposition results in a reduction in the number of required qubits, contributing to more efficient and resource-effective implementations.

Thirdly, our QCTSP preparation method imposes less requirements on the topological structure of the quantum processor. In many cases, the connectivity requirement, which concerns memory length, can be set at a low level. Consequently, most qubits only need to be entangled with their immediate neighbors, easing the burden of implementation and enhancing scalability.

Lastly, considering the fundamental and critical role that Continuous Time Stochastic Process (CTSP) plays in stochastic analysis, QCTSP holds immense potential for theoretical generalizability and applied flexibility. Its relevance in a wide range of stochastic processes further solidifies its status as a promising candidate for practical applications.

QCTSP exhibits remarkable advantages in terms of input flexibility, parallel preparation, compatibility with quantum algorithms, and potential for theoretical and applied advancements. These features position QCTSP as a promising tool for practical utilization in the emerging era of noisy intermediate-scale quantum computing.

{While our work showcases the promising potential of QCTSP, it is essential to acknowledge certain limitations that warrant further investigation and exploration. We outline some of these limitations below:}

{Firstly, while QCTSP is believed to be hardware-efficient due to its low requirements on the topological structure, a more detailed and quantitative analysis is lacking in this article. We acknowledge this as an avenue for future research and plan to address it through experimental demonstrations.}

{Secondly, we have not explored the preparation of many other CTSPs with financial relevance, such as Hawkes processes, which may exhibit rich structures and evolutionary laws that could potentially lead to further quantum speed-up. These CTSPs warrant further investigation to understand their quantum computational implications.}

{Thirdly, while we have demonstrated the applications of QCTSP in European option pricing and ruin probability evaluation, many other relevant applications in fields such as high-frequency trading, actuarial science, and option pricing have not been explored in this work. We recognize the significance of studying these facets and plan to investigate them in future research while maintaining the thematic focus of this current work.}

Furthermore, future research should focus on developing more techniques for the efficient extraction of path-dependent information. This aspect is crucial for enhancing the versatility and practicality of QCTSP in various financial analyses.

{Additionally, there are intriguing topics related to the intrinsic relation of continuous quantum walk and finance, as discussed in \cite{choustova2009quantum, shikano2013discrete, chang2023preparing}. While discrete and continuous quantum walk methods have shown promise as powerful tools for financial data state preparation and stochastic simulation in finance, we have not delved into this area in the current work. One of the challenges lies in finding suitable ways to store the history information of continuous quantum walks. We leave the exploration of this relationship as a promising avenue for future research.}

In conclusion, while our work has made significant strides in the efficient state preparation and quantum-enhanced analysis of continuous time stochastic processes, there are various areas that warrant further investigation and study. We are committed to exploring these limitations and expanding the scope of our research in future endeavors.

\section*{ACKNOWLEDGEMENT}

This work was supported by the National Natural Science Foundation of China (Grants No. 12034018), and Innovation Program for Quantum Science and Technology No. 2021ZD0302300.

\bibliographystyle{quantum}
\bibliography{EXBIB.bib}

\onecolumn\newpage
\appendix

\section{Basic Circuits for Arithmetic and Distribution Preparation}\label{sec:6}
Some basic arithmetic circuits and statistics distribution preparation circuits are given in this appendix. 

\subsection{Modified Quantum Subtractor}\label{app:arithmetic}
Although adder and subtractor circuits have been studied in many works(see \cite{cuccaro2004new} for reference), some modification is needed in our work as we wish to map $\ket{a}\ket{b}$ to $\ket{a}\ket{a-b}$, i.e., store the result in the second register while leave the first register unchanged. The basic idea is as follows: since $(a'+b)'=a-b$, X gates are introduced to calculate the complement of a as $\ket{a'}\ket{b}$. Then a ripple-carry addition circuit is implemented to output the summation result on the first register $\ket{a'+b}\ket{b}$. Finally, another sequence of X gates are put on the first register $\ket{a-b}{b}$.

\subsection{Quantum Multiplier for Amplitude Estimation and Quantum-Enhanced Monte Carlo}
In this subsection, a quantum multiplier for quantum estimation, and hence quantum Monte Carlo is given as follows:
\begin{lem}\label{thm:10}
Suppose that $\ket{a}$ and $\ket{b}$ are $l_a$-bit and $l_b$-bit strings, respectively, and $\ket{e^{iM\theta_0}}$ is an analog-encoded qubit, then the multiplication of $\ket{a}$ and $\ket{b}$ can be added to $\ket{e^{iM\theta_0}}$  as $\ket{a}\ket{b}\ket{e^{iM\theta_0}}\rightarrow\ket{a}\ket{b}\ket{e^{i(M+ab)\theta_0}}$, within $l_al_b$ controlled-rotation gates.
\end{lem}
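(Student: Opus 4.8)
The plan is to reduce the multiplication to a sum over bit-products via the binary expansions of $a$ and $b$, and then realize each term of that sum as a single controlled rotation acting on the analog-encoded target. First I would write $a=\sum_{i=0}^{l_a-1}a_i 2^i$ and $b=\sum_{j=0}^{l_b-1}b_j 2^j$ with $a_i,b_j\in\{0,1\}$ the bit values held in the two computational-basis registers, so that
\[
ab=\sum_{i=0}^{l_a-1}\sum_{j=0}^{l_b-1}a_i b_j\, 2^{i+j}.
\]
Multiplying by $\theta_0$ exhibits the desired phase increment $ab\,\theta_0$ as an exact sum of the $l_a l_b$ angles $a_i b_j\,2^{i+j}\theta_0$, which is the decomposition the construction will implement.

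Next I would implement, for each pair $(i,j)$, a rotation by the fixed angle $2^{i+j}\theta_0$ on the target register, controlled simultaneously on the $i$-th qubit of the $a$-register and the $j$-th qubit of the $b$-register; such a gate contributes the angle $2^{i+j}\theta_0$ exactly when $a_i=b_j=1$ and is the identity otherwise. Since these are all rotations about the single axis fixed by the encoding of $\ket{e^{iM\theta_0}}$, they commute and compose additively, so applying all $l_a l_b$ of them sends the accumulated angle from $M\theta_0$ to $M\theta_0+\sum_{i,j}a_i b_j\,2^{i+j}\theta_0=(M+ab)\theta_0$, i.e. $\ket{e^{iM\theta_0}}\mapsto\ket{e^{i(M+ab)\theta_0}}$. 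Because every control acts on a basis label and the input registers are never targeted, the map is well-defined and linear on superpositions of $\ket{a}\ket{b}$ and leaves both input registers unchanged.

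The gate count is then immediate: one controlled rotation per pair $(i,j)$ gives exactly $l_a l_b$ gates. The one point I would take care to pin down is the meaning of the \emph{analog-encoded qubit}: I would fix the convention that $\ket{e^{iM\theta_0}}$ denotes a single qubit obtained by rotating a fixed reference state through angle $M\theta_0$ about the encoding axis (either the phase encoding $\tfrac{1}{\sqrt2}(\ket{0}+e^{iM\theta_0}\ket{1})$ or the $Y$-rotation encoding $\cos(M\theta_0)\ket{0}+\sin(M\theta_0)\ket{1}$), so that composition of rotations literally adds the angles. The only mild subtlety, and the place I would be most careful, is that each term is genuinely a \emph{doubly}-controlled rotation rather than a singly-controlled one; I would either count these as the elementary controlled-rotation gates of the construction, or remark that each decomposes into a constant number of standard controlled rotations and CNOTs, which does not change the $O(l_a l_b)$ scaling claimed in the statement.
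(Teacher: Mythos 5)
Your proposal is correct and matches the paper's own proof essentially verbatim: both expand $ab=\sum_{i,j}a_ib_j2^{i+j}$ over the bit products and realize each term as a rotation on the target controlled by the $i$-th qubit of $\ket{a}$ and the $j$-th qubit of $\ket{b}$, for $l_al_b$ gates in total. Your added remark that these are really doubly-controlled rotations is a fair observation the paper leaves implicit, but it does not change the argument or the count.
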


\begin{proof}
The proof is directly: Given two bit strings $\ket{a} = \ket{x_{l_a}...x_2x_1}$ and $\ket{b} = \ket{y_{l_b}...y_2y_1}$, their product can be computed as
\begin{align*}
ab=&(\sum_{i=1}^{l_a}x_i2^{i-1})(\sum_{j=1}^{l_b}y_j2^{j-1})\\
=&\sum_{i=1}^{l_a}\sum_{j=1}^{l_b}x_iy_j2^{i+j-2}.
\end{align*}
Hence the term $x_iy_j2^{i+j-2}$ can be implemented by a rotation gate on the target qubit $\ket{e^{iM\theta_0}}$ controlled by the $i^{th}$ and $j^{th}$ qubits of register $\ket{a}$ and $\ket{b}$, where the rotation angle $\theta_{i,j}=2^{i+j-2}\theta_0$ is determined by the index $i,j$ only. The final state is:
\begin{align*}
\ket{e^{iM\theta_0}}\rightarrow&\ket{e^{i(M\theta_0+\sum_{i=1}^{l_a}\sum_{j=1}^{l_b}x_iy_j2^{i+j-2}\theta_0)}}\\
=&\ket{e^{i(M+ab)\theta_0}}.
\end{align*}
If $\ket{a}$ and $\ket{b}$ are signed integers, the rotation gates of two different directions should be controlled by the two sign qubits as well. The total number of controlled rotation gates is $l_al_b$ as claimed.
\end{proof}

\begin{figure}[hp]
\includegraphics[width=0.85\textwidth]{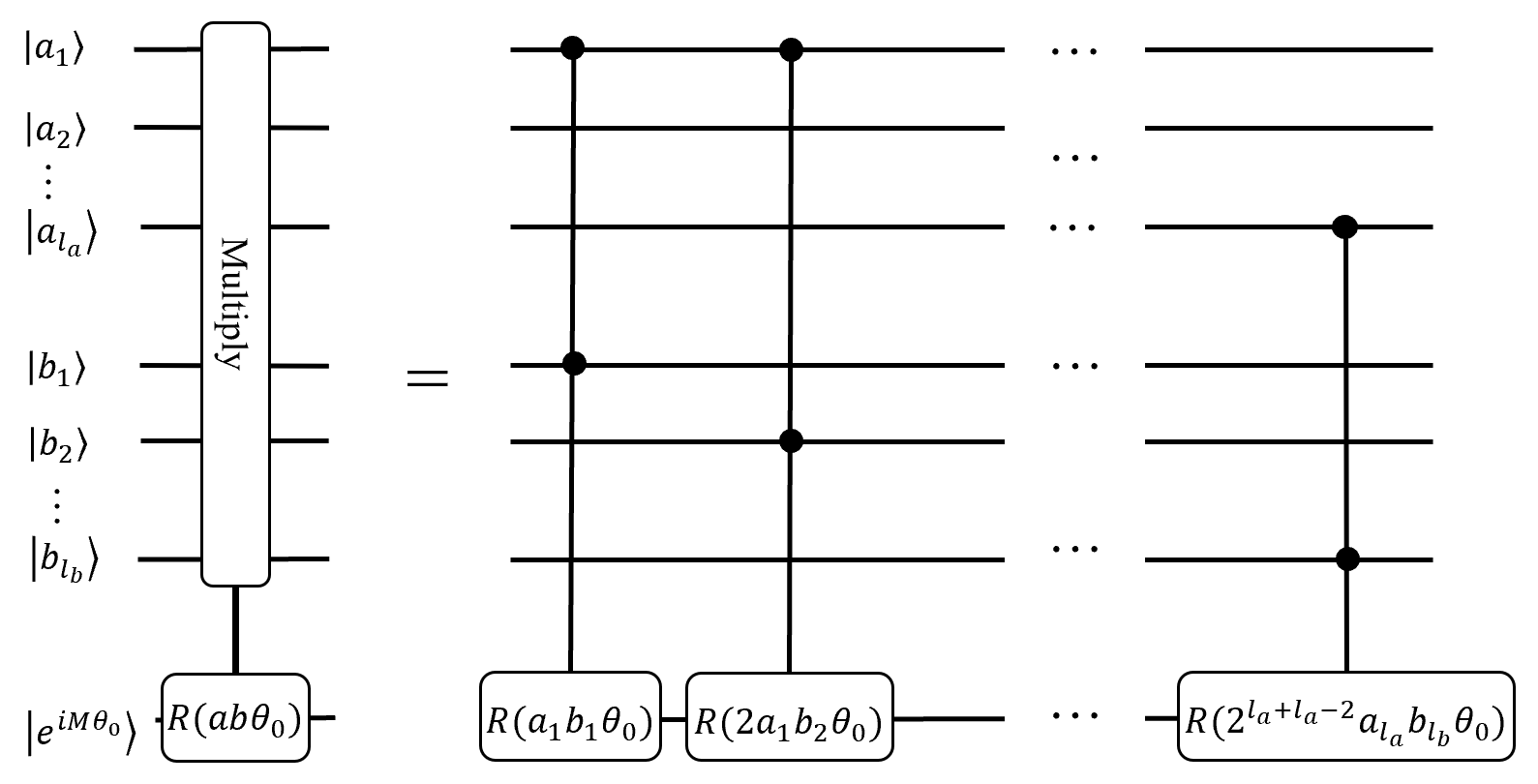}
\caption{\textbf{Modified Subtractor Circuit.} In this figure, a subtractor circuit of two 3-qubits integers that maps $\ket{a}\ket{b}$ to $\ket{a-b}\ket{b}$ is given. The basic idea is from \cite{cuccaro2004new} with some modification so that the result can be stored in the first register leaving the second register unchanged. }
\label{multiplier_circuit}\end{figure}

\subsection{State preparation for Statistical Distributions}
The exponential distribution is implemented by parallel rotation gates as illustrated in Figure  \ref{exponential_distribution_circuit}. Detailed computation of rotation angles is given in Appendix \ref{pf:1}.
The Erlang distribution is a summation of several exponential distributions and hence can be prepared via copies of exponential distribution preparation subcircuits together with a sequence of adder operators as shown in Figure  \ref{eldc}.
 Corresponding simulation results are given in Figure  \ref{exponential_distribution_simulation} and \ref{elds}, respectively.
 
\begin{figure}
\subfigure[Exponential Distribution Circuit]{
\includegraphics[width=0.44\textwidth]{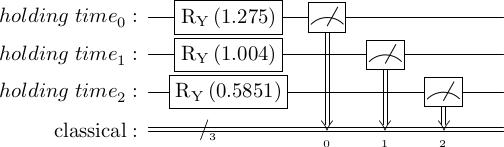}
\label{exponential_distribution_circuit}}
\subfigure[Exponential Distribution Approximation]{
\includegraphics[width=0.44\textwidth]{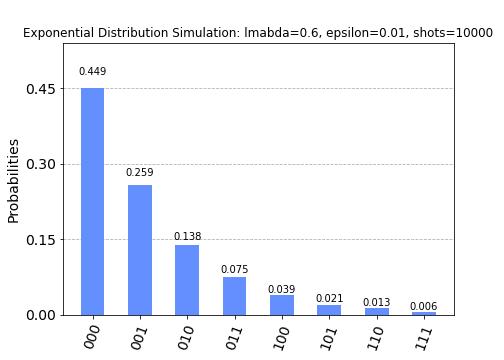}
\label{exponential_distribution_simulation}}
\subfigure[Erlang Distribution Circuit]{
\includegraphics[width=0.44\textwidth]{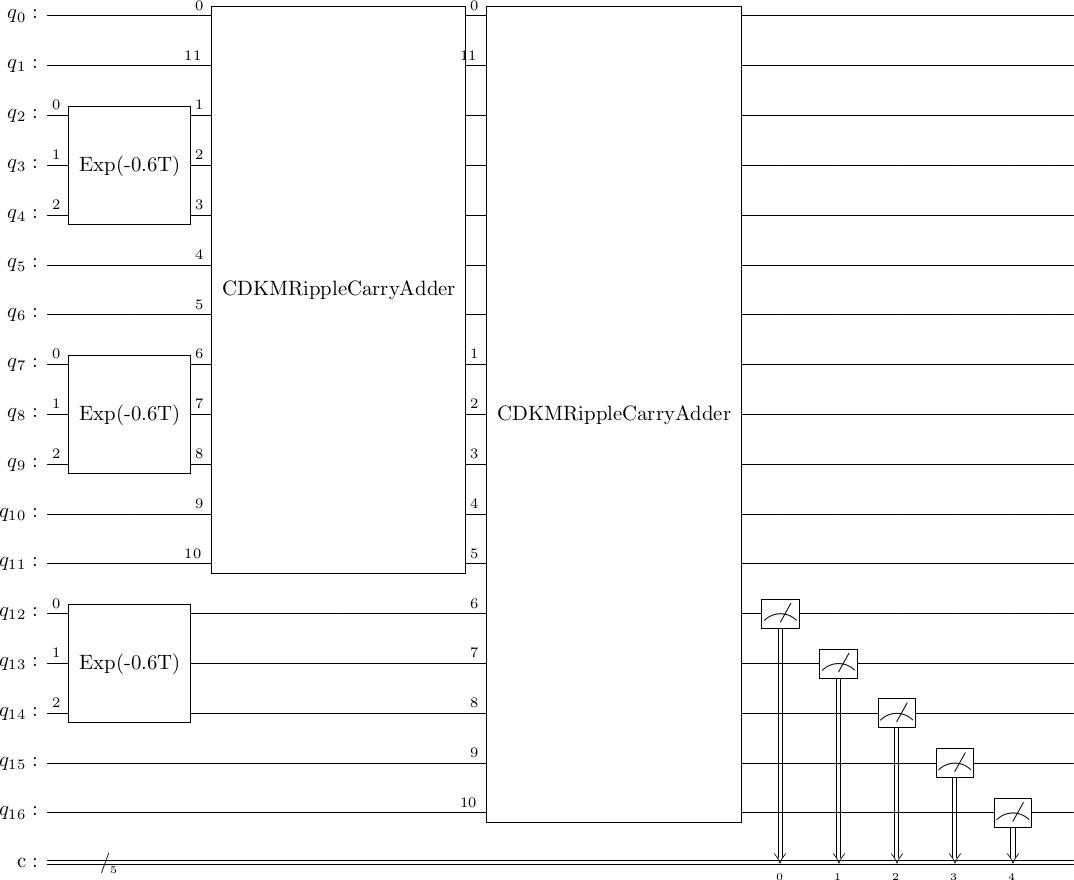}
\label{eldc}}
\subfigure[Erlang Distribution Approximation]{
\includegraphics[width=0.44\textwidth]{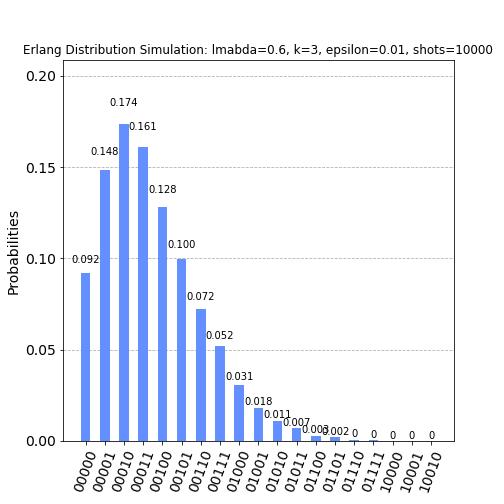}
\label{elds}}
\caption{Preparation of Exponential and Erlang Distribution}
\end{figure}

\section{Proofs for QCTSP State Preparation Problem}\label{sec:7}

\subsection{Proof of Theorem \ref{thm:1}\label{pf:1}}
\begin{proof}
By the definition of memory-less process, one has
$$\mathbb{P}[\tau_l>s]=\mathbb{P}[\tau_l>t+s|\tau_l>t]=\frac{\mathbb{P}[\tau_l>t+s]}{\mathbb{P}[\tau_l>t]},$$
and hence the c.d.f. $f(t)=\mathbb{P}[\tau_l\le t]$ satisfies:
$$1-f(s)=\frac{1-f(s+t)}{1-f(t)}.$$
Differentiating with respect to $t$ and setting $t=0$ leads to
\begin{align*}
-f'(s)=\frac{-f'(s+t)}{1-f(t)},\\
-f'(0)=\frac{-f'(t)}{1-f(t)}.
\end{align*}
Noticing that $-f'(0)$ is a constant, the integrals of the two sides turn to be
\begin{align*}
-f'(0)t=\ln{(1-f(t))}+C\\
f(t)=1-e^{-f'(0)t},
\end{align*}
where the constant $C$ is supposed to be exactly $1$ to satisfy $f(\infty)=1$. Thus the holding time $\tau_l$ follows an exponential distribution $\mathbb{P}[\tau_l\ge t]=e^{-\lambda_l t}$ with $\lambda_l=f'(0)$ determined by $X_l$. As $\epsilon$ is given, without loss of generality, $T$ is supposed to be $2^m$ with $m=\lceil\log{(-\frac{1}{\lambda_l}\ln{\epsilon})}\rceil$ so that 
\begin{equation}
T=-\frac{1}{\lambda_l}\ln{\tilde{\epsilon}}\ge-\frac{1}{\lambda_l}\ln{\epsilon},
\end{equation}
where
\begin{equation}\label{eq:12}
\tilde{\epsilon} = e^{-\lambda_lT} < \epsilon
\end{equation}
is a more strictly error bound which is easier to compute.
On the truncated support set $[0, T)$, simple computation shows that
\begin{equation}\label{eq:13}
\bar{\mathbb{P}}[t\le\tau<t+1] = \frac{\mathbb{P}[t\le\tau<t+1]}{\mathbb{P}[0\le\tau<T]} = \frac{e^{-\lambda_l t} - e^{-\lambda_l (t+1)}}{1-e^{-\lambda_l T}}.
\end{equation}
Assuming that the preparation circuit consists of $m=\log{T}$ qubits initialized to be $\ket{0}$, the $j^{th}$ qubit is then aligned with the rotation gate $R_Y(\theta_j)$ with $\theta_j = \arctan{\tilde{\epsilon}^{1/2^{j+1}}}:1\le j \le m$. A direct computation shows that
\begin{align}
\notag&\bigotimes_{j=1}^m R_Y(\theta_j)\ket{0} \\
\notag=&\bigotimes_{j=1}^m (\frac{1}{\sqrt{1+\tilde{\epsilon}^{1/2^j}}}\ket{0} + \frac{\tilde{\epsilon}^{1/2^{j+1}}}{\sqrt{1+\tilde{\epsilon}^{1/2^j}}}\ket{1})\\
\notag=&\bigotimes_{j=1}^m(\sum_{k_j=0}^1 \frac{\tilde{\epsilon}^{k_j/2^{j+1}}}{\sqrt{1+\tilde{\epsilon}^{1/2^j}}}\ket{k_j})\\
\notag=&\sum_{k_1,k_2,...,k_m=0}^1 \frac{\tilde{\epsilon}^{\sum_{j=1}^m k_j/2^{j+1}}}{\prod_{j=1}^m\sqrt{1+\tilde{\epsilon}^{1/2^j}}}\ket{k_1k_2...k_m}\\
\notag=&\sum_{k_1,k_2,...,k_m=0}^1 \sqrt{\frac{1-\tilde{\epsilon}^{1/2^m}}{1-\tilde{\epsilon}}}\tilde{\epsilon}^{\sum_{j=1}^m \frac{k_j}{2^{j+1}}}\ket{k_1k_2...k_m}\\
=\label{eq:14}&\sum_{k_1,k_2,...,k_m=0}^1 \sqrt{\frac{\tilde{\epsilon}^{\sum_{j=1}^m \frac{k_j}{2^j}}-\tilde{\epsilon}^{\frac{1}{2^m}+\sum_{j=1}^m \frac{k_j}{2^j}}}{1-\tilde{\epsilon}}}\ket{k_1k_2...k_m}\\
\label{eq:15}=&\sum_{t=0}^{T-1} \sqrt{\frac{\tilde{\epsilon}^{\frac{t}{T}}-\tilde{\epsilon}^{\frac{t+1}{T}}}{1-\tilde{\epsilon}}}\ket{t}\\
=\label{eq:16}&\sum_{t=0}^{T-1} \sqrt{\frac{e^{-\lambda_l t} - e^{-\lambda_l (t+1)}}{1-e^{-\lambda_l T}}}\ket{t}\\
\label{eq:17}=&\sum_{t=0}^{T-1} \sqrt{\bar{\mathbb{P}}[t\le\tau<t+1]}\ket{t}
\end{align}
Here the binary number and the corresponding summation $(\overline{k_1k_2...k_m})_2 = \sum_{j=1}^m k_j2^{m-j}$ in Eq.~\eqref{eq:14} are both substituted by the decimal number $t$ varying from $0$ to $T=2^m$ in Eq.~\eqref{eq:15}. Then Eq.~\eqref{eq:16} and Eq.~\eqref{eq:17} are derived by Eq.~\eqref{eq:12} and Eq.~\eqref{eq:13}, respectively. Hence, with $m=\lceil\log{(-\frac{1}{\lambda_l}\ln{\epsilon})}\rceil$ qubits and the same number of rotation gates, the desired truncated state is successfully prepared((as illustrated in Figure  \ref{exponential_distribution_circuit} and Figure  \ref{exponential_distribution_simulation})). 
\end{proof}

\subsection{Proof of Theorem \ref{thm:2}\label{pf:2}}
\begin{proof}
Given a time interval $T$, it can be uniformly divided into $n$ pieces $[T_{j-1},T_j)$ with $T_j = \frac{j}{n}T(0\le j \le n)$ and $\tau_j=\frac{T}{n}:1\le j\le n$, and the corresponding discrete random variables are denoted by $X_j=X(T_j)$. Then the increments $Y_1 = X_1$ and $Y_j = X_j - X_{j-1}:2\le j\le n$ are independent and identically distributed random variables. The underlying distribution $\mathcal{F}$ can be prepared by Grover's method $\ket{0}_s \rightarrow \ket{\mathcal{F}}_j=\sum^{S}_{k=1}p_j(k)\ket{k}$, where $s=\log{S}$ is the number of qubits of the approximation $\tilde{\mathcal{F}}_j$ and $p_j(k)$ is the probability amplitude that the random variable $X_j$ lies in the $k^{th}$ interval of $\tilde{\mathcal{F}}$ with $\sum^{S}_{k=1}p^2_j(k)=1$ (see \cite{grover2002creating} for reference). The circuit depth of the preparation of $\tilde{\mathcal{F}}$ is at most $s$ utilizing $S$ control-rotation gates. Then the increments sequence $Y_j:1\le j\le n$ can be derived by repeating this procedure on $n$ parallel subcircuits: $\ket{Y_1}\ket{Y_2}...\ket{Y_n}=\bigotimes^n_{j=1}\ket{\mathcal{F}}_j$. The desired path variables $\{X_j = \sum_{j'=1}^i Y_{j'}: 1\le j\le n\}$ are derived by a sequence of recursive add operators on these $n$ registers as $\ket{Y_1}\ket{Y_2}...\ket{Y_n} \rightarrow \ket{Y_1}\ket{Y_1 \oplus Y_2}\ket{Y_3}...\ket{Y_n} \rightarrow ... \rightarrow \ket{Y_1}\ket{Y_1\oplus Y_2}...\ket{Y_1\oplus Y_2\oplus...\oplus Y_n} = \ket{X_1,X_2,...,X_n}$ (See Figure  \ref{levy_picture} \textbf{(c)}).  Each add operator takes $2\log{(nS)-1}$ Toffoli gates and $5\log{(nS)}-3$ C-NOT gates. A simple computation shows that the circuit depth is $S+(n-1)(2\log{(nS)+4})=O(S+n\log{(nS)})$ and the gate complexity is at most $nS+(n-1)(5\log{nS}-3+2\log{nS}-1) = O(n(S+\log{(nS)}))$ gates.
\end{proof}

\subsection{Proof of Corollary \ref{thm:3} and Corollary \ref{thm:4}\label{pf:3}}
\begin{proof}
By definition of Poisson Point Process, the distinct increments $Y_j:1\le j \le n$ are constant $1$, and the $X_j$ varies exactly from $1$ to $n$ so that the adder subcircuit introduced in Theorem \ref{thm:2} can be omitted. As a consequent, each $X_j$ can be prepared parallel on at most $\log{n}$ qubits via $\log{n}$ rotation gates. And meanwhile the holding times $\tau_j:1\le j \le n$ are assumed to follow an exponential distribution $\mathcal{F}(\lambda)$, and can thus be prepared parallel, by Theorem\ref{thm:1}, on $\frac{-\ln{\epsilon}}{\lambda}$ qubits via $\frac{-\ln{\epsilon}}{\lambda}$ rotation gates with circuit depth $1$. In summary, the qubit number, circuit depth and gate complexity are respectively $n\log{n} + n\frac{-\ln{\epsilon}}{\lambda} = n\lceil\log{(\frac{-n\ln{\epsilon}}{\lambda})}\rceil$, $\max{\{\lceil\log{n}\rceil, 1\}}= \lceil\log{n}\rceil$, and $n\log{\frac{-\ln{\epsilon}}{\lambda}} + \sum_{j=1}^n\log{j}=O(n\lceil\log{(\frac{-n\ln{\epsilon}}{\lambda})}\rceil)$.
As for the case of Compound Poisson Process, the only difference is the I.I.D. state space of jumps. Another $n\lceil\log{S}\rceil$ qubits are introduced to record these jumps, and hence the qubits number turn to be $n\lceil\log{(\frac{-nS\ln{\epsilon}}{\lambda})}\rceil$. Instead of copies of Hadamard gates in \textit{Poisson point process}, a sequence of $n-1$ adder operators are needed, asking for $(n-1)(2\lceil\log{(nS)}\rceil-1)$ Toffoli gates and $(n-1)(5\lceil\log{(nS)}\rceil-3)$ C-NOT gates. Thus the total gate complexity is $O(n\lceil\log{(-\frac{nS\ln{\epsilon}}{\lambda})}\rceil)$, and the circuit depth is $S+n\lceil\log{(nS)}\rceil$.
\end{proof}

\subsection{Proof of Theorem \ref{thm:6}\label{pf:5}}
\begin{proof}
Assuming that the $n$ pieces of the \textit{continuous Markov process} are $(X_j, \tau_j)$, and the $j^{th}$ transition time are $T_j = \sum_{j'=1}^j\tau_{j'}$, a direct computation shows that
\begin{equation}
\begin{split}
&\mathbb{P}[X_{j+1}=k_{j+1}|\land_{j'=1}^j X_{j'}=k_{j'}]\\
=&\mathbb{P}[X(T_j)=k_{j+1}|\land_{j'=1}^j X(t)=k_{j'}:T_{j'-1}\le t < T_{j'}]\\
=&\mathbb{P}[X(T_j)=k_{j+1}|X(t)=k_{j}:T_{j-1}\le t < T_{j}]\\
=&\mathbb{P}[X_{j+1}=k_{j+1}|X_j=k_j].
\end{split}
\end{equation}
This reveals that $X_{j+1}$ is totally determined by the previous state $X_{j}$ with the transition probabilities $P_{kl}=P(X_{j+1}=k|X_j=l)$, and thus the embedded sequence $\{X_j:1\le j\le n\}$ is a discrete Markov chain as claimed above. By the Markov condition again, the holding time $\tau_j$ satisfies
\begin{equation}\label{eq:18}
\mathbb{P}[\tau_j> t+s|\tau_j> t]=\mathbb{P}[\tau_j>s].
\end{equation}
Noticing that Eq.~\eqref{eq:18} is exactly the memory-less condition in Theorem\ref{thm:1}, and therefore $\tau_j$ follows an exponential distribution with the $\lambda_j$ determined by the current state $X_j$. As a consequence of the embedded discrete Markov chain and the exponential distribution holding time, a quantum \textit{continuous Markov process} state can be prepared as follows: $n\lceil\log{S}\rceil$ qubits are introduced for the storage of the discrete Markov chain, and another $n\lceil\log{(-\frac{1}{\lambda_{min}}\ln{\epsilon})}\rceil$ qubits are introduced for the holding time $\tau_j$. The initial state $\ket{X_1}=\sum_{k=1}^Sp(k)\ket{k}$ is derived by $S$ control-rotation gates via Grover's state preparation method. Following that, $n-1$ successive transition matrix operators, each of which consists of $S^2$ multi-controlled rotation gates, are applied to generate $\ket{X_j}:2\le j\le n$. The $n$ holding times can be prepared parallel, and each $\tau_j$ needs at most $s$ different exponential distribution with $S\lceil\log{(-\frac{1}{\lambda_{min}}\ln{\epsilon})}\rceil$ gates and $S$ circuit depth. In summary, the total number of gates is $S+(n-1)*S^2+n*S\lceil\log{(-\frac{1}{\lambda_{min}}\ln{\epsilon})}\rceil=O(nS\lceil S+\log{(-\frac{\ln{\epsilon}}{\lambda_{min}})}\rceil)$, and the circuit depth is $S+(n-1)*S^2+n*S=O(nS^2)$.
\end{proof}

\subsection{Proof of Theorem \ref{thm:7}\label{pf:6}}
\begin{proof}
As shown in Figure  \ref{cox_picture}, the preparation can be divided into two steps: Firstly, the stochastic process of intensity $\lambda(t)$ is prepared on parallel $n*q_\mathcal{F}$ qubits, and the circuit depth is $O(d_\mathcal{F})$ as given. Secondly, the holding time $\tau_j$ is prepared by a sequence of controlled-$V$ operators. Given a piece of Cox process and the corresponding fixed intensity $\lambda$, the holding time follows an exponential distribution as described in Theorem \ref{thm:1} so that it can be prepared within 1 circuit depth by rotation angles $\theta_j = \tilde{\epsilon}^{1/2^{j+1}}: 1\le j \le m$, where $\tilde{\epsilon}=e^{-\lambda T}$. As for varying digital-encoded $\lambda=\ket{l_1,l_2,...,l_{q_{\mathcal{F}}}}$, it can be divided into $2_{q_\mathcal{F}}$ sequences of controlled-rotation gates with $2_{q_\mathcal{F}}$ circuit depth. Hence the total circuit depth for preparing $\tau_j$ is $d_\mathcal{F}+2_{q_\mathcal{F}}$. On the other hand, the preparation of increments $Y_j$ can be implemented parallel with $d_\mathcal{G}$ circuit depth. Thus the circuit depth of the preparation is $\max{\{d_\mathcal{F}+2_{q_\mathcal{F}, d_\mathcal{G}\}}}$. To derive $X_j$ or $T_j$, $n$ adders are needed, with $nd_\mathcal{G}$ or $-\frac{n\ln{\epsilon}}{\lambda_{min}}$, respectively.
\end{proof}

\section{Proofs for Information Extraction Problem}\label{sec:8}
\subsection{Proof of Theorem \ref{thm:8}\label{pf:7}}
\begin{proof}
A direct computation shows that:
\begin{equation}
I(X,T)=\int_{t=0}^TX(t)\,dt=\sum_{j=0}^nX_j\tau_j.\label{eq:1}
\end{equation}
The expected value of Eq.~\eqref{eq:1} is evaluated as:
\begin{align}
\mathbb{E}[f(I(X,T))]=&\sum_{X(t)\in \bar{\Omega}_n}f(\int_{t=0}^TX(t)\,dt)\mathbb{P}[X(t)]\notag\\
=&\sum_{X(t)\in \bar{\Omega}_n}f(\sum_{j=0}^nX_j\tau_j)\mathbb{P}[X(t)]\notag\\
=&\sum_{X(t)\in \bar{\Omega}_n}f(\sum_{j=0}^nZ_j)\mathbb{P}[X(t)],\label{eq:3}
\end{align}
where $Z_j=X_j\tau_j$ can be derived via a quantum multiplication circuit introduced in Lemma \ref{thm:10} with complexity $O(l_xl_\tau)$. The total circuit depth is $O(\mathcal{P}+nl_xl_\tau)$ (as shown in Figure  \ref{extraction} \textbf{(c)}).
Noticing that the variables of $f$ in Eq.~\eqref{eq:3} can be evaluate through a standard procedure (one can see \cite{montanaro2015quantum} for reference). To guarantee the theoretical closure, we shall sketch this procedure as follows. Suppose that the function $f$ can be truncated on an interval of length $P$, then it can be extended to a $P-periodic$ function $f_P$. The Fourier approximation of order $L$ for $f_P$ is denoted by $f_{P,L}(x)=\sum_{l=-L}^L c_l e^{i\frac{2\pi l}{p}x)}$, then Eq.~\eqref{eq:3} is evaluated as
\begin{align}
&\mathbb{E}[f_{P,L}(I(X,T))]=\mathbb{E}[f_{P,L}(\sum_{j=0}^nZ_j)]\notag\\
=&\sum_{l=-L}^L c_l \mathbb{E}[e^{i\frac{2\pi l}{p}(\sum_{j=1}^nZ_j))}]\notag\\
=&\sum_{l=-L}^L c_l (\mathbb{E}[\cos{(\frac{2\pi l}{p}(\sum_{j=1}^nZ_j))}]+i\mathbb{E}[\sin{(\frac{2\pi l}{p}(\sum_{j=1}^nZ_j))}]),\label{eq:5}
\end{align}
Each term $\mathbb{E}[\cos{(\frac{2\pi l}{p}(\sum_{j=0}^nZ_j))}]$ and $\mathbb{E}[\sin{(\frac{2\pi l}{p}(\sum_{j=1}^nZ_j))}]$ in Eq.~\eqref{eq:5} can be evaluated via the standard amplitude estimation algorithm. Given the error bound $\epsilon$, the circuit should be repeated $O(1/\epsilon)$ times for each term. The total time complexity is hence $O(\frac{LP}{\epsilon}(\mathcal{P}+nl_xl_\tau))$
\end{proof}

\subsection{Proof of Theorem \ref{thm:9}\label{pf:8}}
\begin{proof}
A direct computation shows that:
\begin{align*}
J(X,T)=&\int_{t=0}^Tg(t)X(t)\,dt)\notag\\
=&\sum_{j=1}^n X_jG_j\notag\\
=&\sum_{j=1}^n(\sum_{j'=1}^j Y_{j'})G_j\notag\\
\end{align*}
where $G_j=\int_{T_{j-1}}^{T}g(t)\,dt$ is a piece-wise integral on the interval $[T_{j-1}, T]$ with $T_j = \sum_{j'=1}^j\tau_{j'}$ and $T_0=0$, and $Y_{j'}$ is the $j'^{th}$ increment. By exchanging the order of summation, one has that:
\begin{align}
J(X,T)=&\sum_{j'=1}^nY_{j'}(\sum_{j=j'}^nG_j)\notag\\
\label{eq:8}
=&\sum_{j'=1}^nY_{j'}F(T-T_{j'-1})\\
\label{eq:2}
=&\sum_{j'=1}^nY_{j'}G(T_{j'-1}),
\end{align}
where $F(T')=\int_{T-T'}^T g(t)\,dt$ in Eq.~\eqref{eq:8} satisfies:
\begin{equation*}
F(T-T_{j'-1})=\int_{T_{j'-1}}^T g(t)\,dt=\sum_{j=j'}^nG_j,\notag\\
\end{equation*}
and $G(T')=\int_{T'}^T g(t)\,dt$ in Eq.~\eqref{eq:2} satisfies:
\begin{equation*}
G(T_{j'-1})=\int_{T_{j'-1}}^T g(t)\,dt=\sum_{j=j'}^nG_j.\\
\end{equation*}
Thus the expected value of Eq.~\eqref{eq:8} and Eq.~\eqref{eq:2} can be evaluated as
\begin{align}
\mathbb{E}[f(J(X,T)]=&\sum_{X(t)\in \bar{\Omega}_n}f(\sum_{j=1}^nY_jF(T-T_{j'-1}))\mathbb{P}[X(t)]\label{eq:9}\\
=&\sum_{X(t)\in \bar{\Omega}_n}f(\sum_{j=1}^nY_jG(T_{j'-1}))\mathbb{P}[X(t)]\label{eq:10}\\
=&\sum_{X(t)\in \bar{\Omega}_n}f(\sum_{j=1}^nW_j)\mathbb{P}[X(t)].\label{eq:4}
\end{align}
The state $F(T-T_{j'-1})$ in Eq.~\eqref{eq:9} is a function of the variable $T-T_{j'-1}$, and hence can be computed via a operator $F$ on the $T-T_{j'-1}$ qubit as illustrated in Figure  \ref{extraction} \textbf{(d)}.  And the directed area $W_j=Y_jF(T-T_{j'-1})$ can be derived through a quantum multiplier introduced in Lemma \ref{thm:10} with complexity $O(l_xl_\tau)$. In the same way the state $G(T_{j'-1})$ in Eq.~\eqref{eq:10} is a function of the variable $T_{j'-1}$, and hence can be computed via a operator $G$ on the $T_{j'-1}$ qubit. And the directed area $W_j=Y_jG(T_{j'-1})$ can be derived with complexity $O(l_Yl_T)$ (see Figure  \ref{extraction} \textbf{(e)} for reference). The total circuit depth is $O(\mathcal{P}+\mathcal{F}+nl_xl_\tau+n\max{(l_x,l_\tau)})$ or $O(\mathcal{P}+\mathcal{G}+nl_Yl_T)$.
Noticing that the expression in Eq.~\eqref{eq:4} once again take the form of a summation, and thus can be solved through a standard procedure of amplitude estimation as mentioned above: The Fourier approximation of order $L$ for $f_P$ is denoted by $f_{P,L}(x)=\sum_{l=-L}^L c_l e^{i\frac{2\pi l}{p}x)}$, then Eq.~\eqref{eq:4} is evaluated as
\begin{align}
&\mathbb{E}[f_{P,L}(J(X,T))]=\mathbb{E}[f_{P,L}(\sum_{j=1}^nW_j)]\notag\\
=&\sum_{l=-L}^L c_l \mathbb{E}[e^{i\frac{2\pi l}{p}(\sum_{j=1}^nW_j))}]\notag\\
=&\sum_{l=-L}^L c_l (\mathbb{E}[\cos{(\frac{2\pi l}{p}(\sum_{j=1}^nW_j))}]+i\mathbb{E}[\sin{(\frac{2\pi l}{p}(\sum_{j=1}^nW_j))}]).\label{eq:6}
\end{align} 
Each term $\mathbb{E}[\cos{(\frac{2\pi l}{p}(\sum_{j=1}^nW_j))}]$ and $\mathbb{E}[\sin{(\frac{2\pi l}{p}(\sum_{j=1}^nW_j))}]$ in Eq.~\eqref{eq:6} can be evaluated via rotation operators  on the target qubit controlled by $W_j$ together with a standard amplitude estimation subcircuit.(as shown in the box in Figure  \ref{extraction} \textbf{(d)} and Figure  \ref{extraction} \textbf{(e)}. Given the error bound $\epsilon$, the circuit should be repeated $O(1/\epsilon)$ times for each term. The total time complexity is hence $O(\frac{LP}{\epsilon}(\mathcal{P}+\mathcal{F}+nl_xl_\tau+n\max{(l_x,l_\tau)}))$ or $O(\frac{LP}{\epsilon}(\mathcal{P}+\mathcal{G}+nl_Yl_T))$ for holding time and increment representation, respectively.
\end{proof}

\end{document}